\DeclareMathAlphabet{\mathcal}{OMS}{cmsy}{m}{n}
\tikzset{%
    symbol/.style={%
        draw=none,
        every to/.append style={%
            edge node={node [sloped, allow upside down, auto=false]{$#1$}}}
    }
}
\newlength\bshft
\def\fakebold#1{\ThisStyle{\ooalign{$\SavedStyle#1$\cr%
  \kern-\bshft$\SavedStyle#1$\cr%
  \kern\bshft$\SavedStyle#1$}}}
\def\mybolddd#1{\fakebold{\fakebold{\fakebold{#1}}}}
\def\myboldd#1{\fakebold{\fakebold{#1}}}
\newtheorem{theorem}{Theorem}[section]
\newtheorem{lemma}[theorem]{Lemma}
\newtheorem{proposition}[theorem]{Proposition}
\newtheorem{corollary}[theorem]{Corollary}
\newtheorem{conjecture}[theorem]{Conjecture}
\newtheorem{definition}[theorem]{Definition}
\theoremstyle{remark}
\newtheorem{example}[theorem]{Example}
\newcommand{\A}{\mathcal A}
\newcommand{\D}{\mathcal D }
\renewcommand{\H}{\mathcal H}
\renewcommand{\P}{\mathcal P}
\newcommand{\Q}{\mathcal Q}
\newcommand{\U}{\mathcal U}
\newcommand{\W}{\mathcal W}
\newcommand{\X}{\mathcal X}
\newcommand{\Y}{\mathcal Y}
\newcommand{\Z}{\mathcal Z}
\newcommand{\CC}{\mathbb C}
\newcommand{\kK}{\mathfrak K}
\newcommand{\op}{\mathrm{op}}
\newcommand{\ran}{\mathop{\mathrm{ran}}}
\newcommand{\down}{\downarrow}
\newcommand{\qo}{{\mybolddd{\preccurlyeq}}}
\newcommand{\qop}{{\mybolddd{\succcurlyeq}}}
\newcommand{\qsubseteq}{{\mybolddd{\subseteq}}}
\newcommand{\qsupseteq}{{\mybolddd{\supseteq}}}
\newcommand{\qun}{{\mybolddd{\bigcup}}}
\newcommand{\qni}{{\mybolddd{\ni}}}
\newcommand{\qdown}{{\myboldd{{\down}\{\cdot\}}}}
\newcommand{\qnotni}{{\mybolddd{\not\ni}}}
\newcommand{\qsup}{{\mybolddd{\bigvee}}}
\newcommand{\atomof}{\text{\raisebox{1.0pt}{\,$\propto$\,}}}
\newcommand{\At}{\mathrm{At}}
\newcommand{\Eval}{\mathrm{Eval}}
\newcommand{\qPow}{\mathrm{qPow}}
\newcommand{\Pow}{\mathrm{Pow}}
\newcommand{\qDwn}{\mathrm{qDwn}}
\newcommand{\qUp}{\mathrm{qUp}}
\newcommand{\Dwn}{\mathrm{Dwn}}
\newcommand{\Set}{\mathbf{Set}}
\newcommand{\POS}{\mathbf{Pos}}
\newcommand{\CPO}{\mathbf{CPO}}
\newcommand{\qRel}{\mathbf{qRel}}
\newcommand{\qSet}{\mathbf{qSet}}
\newcommand{\qPOS}{\mathbf{qPos}}
\newcommand{\qPos}{\mathbf{qPos}}
\newcommand{\qSup}{\mathbf{qSup}}
\newcommand{\qCPO}{\mathbf{qCPO}}
\newcommand{\Sup}{\mathbf{Sup}}
\newcommand{\Rel}{\mathbf{Rel}}
\newcommand{\RelPos}{\mathbf{RelPos}}
\newcommand{\qRelPos}{\mathbf{qRelPos}}
\DeclareMathAlphabet{\mathpzc}{OT1}{pzc}{m}{it}
\title{Quantum Suplattices}
\author{Gejza Jen\v{c}a
\institute{Slovak University of Technology,\\
Bratislava, Slovak Republic}
\email{gejza.jenca@stuba.sk}
\and
Bert Lindenhovius 
\institute{Slovak Academy of Sciences,\\
Bratislava, Slovak Republic}
\email{lindenhovius@mat.savba.sk}
}
\begin{document}

\maketitle

\begin{abstract}
Building on the theory of quantum posets, we introduce a non-commutative version of suplattices, i.e., complete lattices whose morphisms are supremum-preserving maps, which form a step towards a new notion of quantum topological spaces. We show that the theory of these `quantum suplattices' resembles the classical theory: the opposite quantum poset of a quantum suplattice is again a quantum suplattice, and quantum suplattices arise as algebras of a non-commutative version of the monad of downward-closed subsets of a poset. The existence of this monad is proved by introducing a non-commutative generalization of monotone relations between quantum posets, which form a compact closed category. Moreover, we introduce a non-commutative generalization of Galois connections and we prove that an upper Galois adjoint of a monotone map between quantum suplattices exists if and only if the map is a morphism of quantum suplattices. Finally, we prove a quantum version of the Knaster-Tarski fixpoint theorem: the quantum set of fixpoints of a monotone endomap on a quantum suplattice form a quantum suplattice. 
\end{abstract}

\section{Introduction}\label{sec:intro}

A poset is called a \emph{complete lattice} if it has all suprema, or equivalently, if it has all infima. However, a function between complete lattices that preserves all suprema does not necessarily preserve all infima. Hence, one can define several categories of complete lattices with different classes of morphisms. For instance, the class consisting of maps that preserve both all suprema and all infima, or the class of maps that only preserve all suprema. If we choose this latter class of morphisms, we typically call the objects of the resulting category $\Sup$ \emph{suplattices} instead of complete lattices. 

In this contribution, we introduce a noncommutative version of suplattices, which we call \emph{quantum suplattices}. One of the reasons why we are interested in these objects is that they might lead to a notion of quantum topological spaces that allows for the quantization of topological spaces that are not locally compact Hausdorff, such as the Scott topology on a dcpo. Any topology of a topological space is in particular a complete lattice; the usual approach to quantum topological spaces are C*-algebras, which can be regarded as noncommutative locally compact Hausdorff spaces.

The approach we take is the program of \emph{discrete quantization} \cite{Kornell20}. Here, \emph{quantizing} some mathematical structure is understood as the operation of finding a noncommutative generalization or version of the structure. This can be done by internalizing the structure in a suitable category of operator algebras. For discrete quantization, this category is called $\qRel$, which is equivalent to the category of von Neumann algebras isomorphic to some (possibly infinite) $\ell^\infty$-sum of matrix algebras (such von Neumann algebras are also called \emph{hereditarily atomic}) equipped with Weaver's quantum relations \cite{Weaver10}. The category $\qRel$ shares many properties with $\Rel$. For instance, it is also dagger compact \cite{AbramskyCoecke08,coeckekissinger,heunenvicary} and enriched over $\Sup$. Since many mathematical structures can be described in terms of the dagger structure and the $\Sup$-enrichment of $\Rel$, this makes $\qRel$ a very suitable tool for quantization. Since hereditarily atomic von Neumann algebras have a discrete character, they can be regarded as noncommutative sets, which explains the name `discrete quantization'. 

Regarding discrete quantization, one could argue that it is a disadvantage that we do not work in full generality with all von Neumann algebras. However, we see this lack of generality as a feature, not as a bug: the category of all von Neumann algebras and quantum relations is not compact, whereas $\qRel$ is. This is of huge importance for the theory of quantum suplattices, which relies heavily on $\qRel$ being compact. 
By definition, any matrix algebra $\mathrm{M}_d(\mathbb C)$, which is often used to represent a qudit, is an example of a hereditarily atomic von Neumann algebra. Since any tensor product of two matrix algebras is a hereditarily atomic von Neumann algebra, systems of multiple qudits can be represented by hereditarily atomic von Neumann algebras. Therefore, hereditarily atomic von Neumann algebras are sufficient for most practical applications in quantum information theory and in quantum computing.  Recently, discrete quantization was applied successfully in the denotational semantics of quantum programming languages \cite{KLM20}.

\subsection{Related work}
Discrete quantization can be regarded as a special case of quantization via quantum relations, which we distilled by Weaver \cite{Weaver10} from his work with Kuperberg on quantum metric spaces \cite{kuperbergweaver:quantummetrics}. The category $\qRel$ whose objects are called \emph{quantum sets} was introduced by Kornell \cite{Kornell18}, who showed that this category is dagger compact. Moreover, in the same reference, he quantized  functions by internalizing them in $\qRel$, and showed showed that resulting category $\qSet$ of quantum sets and quantized functions is symmetric monoidal closed, complete, cocomplete, and dual to the category of hereditarily atomic von Neumann algebras and normal $*$-homomorphisms. Furthermore, Kornell showed in \cite{Kornell20} that $\qRel$ is equivalent to the category of hereditarily atomic von Neumann algebras and Weaver's quantum relations, and introduced a logic with equality for $\qRel$. 

Quantum posets were already defined by Weaver in \cite{Weaver10}. The properties of the category of quantum posets in the framework of discrete quantum mathematics were investigated in \cite{KLM22} by Kornell, Mislove and the second author. The same authors proceeded in \cite{KLM20} by quantizing cpos by means of discrete quantization. Traditionally, cpos are a class of posets that form the essential objects for the denotational semantics of programming languages. One would expect that the denotational semantics of quantum programming languages will require quantized cpos. The current state-of-the art quantum programming language is Proto-Quipper-M, which was introduced by Rios and Selinger \cite{pqm-small}, subsequently extended with recursive terms in \cite{LMZ18} and then with recursive types in \cite{LMZ20} by Mislove, Zamdzhiev and the second author. In \cite{KLM20}, the quantized cpos were successfully used to construct sound and adequate denotational models for these extensions.

Finally, we mention the quantum graphs \cite[Definition 2.6(d)]{Weaver10}\cite{DuanSeveriniWinter}, which recently  attracted some attention \cite{ATSERIASetal,Stahlke, Weaver3, MustoReutterVerdon, BrannanGanesanHarris, ChirvasituWasilewski}. Quantum graphs can be described in the framework of discrete quantum quantization in a similar way as quantum posets. Just like a poset is a special kind of graph, a quantum poset is a special kind of quantum graph, and many concepts and techniques from quantum graphs carry over to quantum posets.

\subsection{Overview of the paper}
We start by giving a recap of quantum sets and quantum posets. Then we introduce monotone relations between quantum sets. These are of importance, because the ordinary down-set monad on the category $\POS$ of ordinary posets can be obtained from an adjunction between $\POS$ and the category $\RelPos$ of  posets and monotone relations. 
We proceed by introducing the quantum down-set monad $\qDwn$ and explaining its connection with upper sets and the quantum power set. We then introduce a quantum generalization of Galois connections, which we use to define quantum suplattices. We show that $\qDwn(\X)$ is a quantum suplattice for any quantum poset $\X$. Together with the characterization of Galois connections between quantum posets (cf. Theorem \ref{thm:Galois}), these are the only proofs we include, just to give a flavor of how to work with quantum sets. Furthermore, we sketch why the opposite of a quantum suplattice is also a quantum suplattice. Finally, we discuss enrichment over $\Sup$ and fixpoints of monotone endomaps between quantum suplattices.

Let us stress that although the quantized theorems we included in the present paper are almost verbatim copies of their classical versions, the proof of a quantized theorem is usually much more complicated than its classical counterpart.

\section{Preliminaries}
\subsection{Quantum sets}
The basic reference for this section is \cite{Kornell18}. 
Here, the definition of $\qRel$ implicitly makes use of categorical constructions, which we choose to highlight. In order to do so, we first introduce the category $\mathbf{FdOS}$ whose objects are nonzero finite-dimensional Hilbert spaces. A morphism  $A:X\to Y$ in $\mathbf{FdOS}$ is a \emph{operator space}, that is a subspace of the vector space $L(X,Y)$ of linear operators $X\to Y$. We define composition of $A$ with a morphism $B:Y\to Z$ in $\mathbf{FdOS}$ as the operator space $B\cdot A:=\mathrm{span}\{ba:a\in A,b\in B\}$. The identity morphism on $X$ is the operator space $\CC 1_X:=\{\lambda 1_X:\lambda\in\CC\}$.  Since the space $L(X,Y)$ is actually a finite-dimensional Hilbert space itself via the inner product $(a,b)\mapsto\mathrm{Tr}(a^\dag b)$, where $a^\dag:Y\to X$ denotes the hermitian adjoint of $a\in L(X,Y)$, the homset $\mathbf{FdOS}(X,Y)$ becomes a complete modular ortholattice; the order on the homset is explicitly given by $A\leq B$ if and only if $A$ is a subspace of $B$. Since composition in $\mathbf{FdOS}(X,Y)$ preserves suprema, $\mathbf{FdOS}$ is enriched over the category $\mathbf{Sup}$ of complete lattices and suprema-preserving functions; any such category is also called a  \emph{quantaloid}.

Products and coproducts in quantaloids coincide and are also called \emph{sums}. Any quantaloid $\mathbf Q$ has a  free sum-completion of $\mathbf Q$, which can be described by the quantaloid $\mathrm{Matr}(\mathbf Q)$, whose objects are $\mathbf{Set}$-indexed families of objects $(X_i)_{i\in I}$ of $\mathbf Q$, and whose morphisms $R:(X_i)_{i\in I}\to(Y_j)_{j\in J}$ are `matrices' whose $(i,j)$-component $R(i,j)$ is a $\mathbf Q$-morphism $X_i\to Y_j$. Composition in $\mathrm{Mat}(\mathbf Q)$ is defined by matrix multiplication: we define $S\circ R$ for $S:(Y_j)_{j\in J}\to (Z_k)_{k\in K}$ by $(S\circ R)(i,k)=\bigvee_{j\in J}S(j,k)\cdot R(i,j)$ for each $i\in I$ and $k\in K$, where $\cdot$ denotes the composition of morphisms in $\mathbf Q$. The $(i,i')$-component of the identity morphism on an object $(X_i)_{i\in I}$ is the identity morphism of $X_i$ if $i=i'$, and $0$ otherwise. The order on the homsets of $\mathrm{Matr}(\mathbf Q)$ is defined componentwise. The matrix-construction as the free sum-completion of quantaloids was introduced in \cite{Heymans-Stubbe}, and is a special case of a matrix-construction for more general bicategories as described in \cite{BCSW}.

We now define $\mathbf{qRel}$ as the quantaloid $\mathrm{Matr}(\mathbf{FdOS})$. Any object $\X$ of $\qRel$ is called a \emph{quantum set}; whose \emph{atoms} are the Hilbert spaces in $\X$. A quantum set consisting of a single atom is called \emph{atomic}. 

For convenience, we denote the elements of the index set $\At(\X)$ of $\X$ by the atoms of $\X$ themselves, hence $\At(\X)$ can be interpret as the set of atoms of $\X$. Thus, in some sense $\X$ is indexed by itself, just like ordinary sets can be regarded as indexed families indexed by themselves via the identity function. 
Since a quantum set $\X$ is formally a indexed family, it does not have elements in the usual sense. We shall use the notation $X\atomof\X$ to express that $X$ is an atom of $\X$. Thus, we have  $X\atomof\X$ if and only if $X\in\At(\X)$. Conversely,  to any ordinary set $M$ consisting of finite-dimensional Hilbert spaces we can associate a unique quantum set $\Q M$ whose set of atoms $\At(\Q M)$ consists of all Hilbert spaces $H$ in $M$ such that $\dim(H)\neq 0$. 

Any morphism in $\qRel$ is called a \emph{binary relation}. We emphasize that binary relations between quantum sets are not binary relations in the usual sense, i.e., subsets of the product of domain and codomain of the relation. However, binary relations between quantum sets turn out to be generalizations of binary relations between ordinary sets. Given our convention that the indices in the index set $\At(\X)$ of $\X$ are chosen to be the atoms of $\X$ itself, any binary relation $R:\X\to\Y$ between quantum sets is an assignment that to each atom $X\atomof\X$ and each atom $Y$ of $\Y$ assigns a subset $R(X,Y)$ of $L(X,Y)$. Given another binary relation $S:\Y\to\Z$, the $(X,Z)$-component of the composition $S\circ R$ is given by $(S\circ R)(X,Z)=\bigvee_{Y\atomof\Y}S(Y,Z)\cdot R(X,Y)$. The identity morphism on a quantum set $\X$ is denoted by $I_\X$ and is given by $I_\X(X,X')=\CC 1_X$ if $X=X'$ and $I_\X(X,X')=0$ otherwise.
The quantaloid structure of $\qRel$ can be described explicitly as follows. For binary relation $R,S:\X\to\Y$ we have   $R\leq S$ if and only if $R(X,Y)\leq S(X,Y)$ for each $X\atomof\X$ and each $Y\atomof\Y$. Equipped with this order, any homset of $\qRel$ becomes a complete lattice. The supremum $\bigvee_{i\in I}R_i$ of a collection $(R_i\colon i\in I)$ of relations $\X\to\Y$ is given by $\left(\bigvee_{i\in I}R_i\right)(X,Y)=\bigvee_{i\in I}R_i(X,Y)$ for each $X\atomof\X$ and each  $Y\atomof\Y$, where the supremum in the right-hand side is taken in the complete lattice of subspaces of $L(X,Y)$.


We can generalize the following set-theoretic notions to the quantum setting:
\begin{itemize}
    \item[(1)] A quantum set $\X$ is \emph{empty} if $\At(\X)=\emptyset$;
    \item[(2)] A quantum set $\X$ is a \emph{subset} of a quantum set $\Y$ if $\At(\X)\subseteq\At(\Y)$, in which case we write $\X\subseteq\Y$. 
    \item[(3)] The \emph{cartesian product} $\X\times\Y$ of two quantum sets $\X$ and $\Y$ is defined by $\At(\X\times\Y)=\{X\otimes Y\colon X\atomof\X,Y\atomof\Y\}$, where $X\otimes Y$ denotes the usual tensor product of the Hilbert spaces $X$ and $Y$. 
\end{itemize}
We denote the cartesian product of quantum sets by $\times$, because it is the noncommutative generalization of the usual product. However, it is not  a categorical product in any of the categories that we will introduce below. It is not uncommon to use the notation $\times$ for a non-categorical product: for instance, it is also used to denote the monoidal product in the category $\Rel$ of sets and binary relations.

To each ordinary set $S$ we can assign a quantum set $`S$ whose atoms are one-dimensional Hilbert spaces that are in a one-to-one correspondence with elements of $S$. This correspondence can be made precise as follows. For each $s\in S$, we define $\CC_s:=\ell^2(\{s\})$ with the convention that $\CC_s\neq \CC_t$ if $s\neq t$. Then $\At(`S)=\{\CC_s\colon s\in S\}$. Note that $`(S\times T)$ is isomorphic to $ (`S)\times(`T)$ as a quantum set.


It is well known that the category $\mathbf{FdHilb}$ of finite-dimensional Hilbert spaces and linear maps is a dagger compact category, where the dagger of a linear map $a$ is given by taking its  Hermitian adjoint $a^\dagger$.

Also $\mathbf{qRel}$ is a dagger compact category: for a relation $R\colon\X\to\Y$, we define $R^\dag\colon\Y\to\X$ by $R^\dag(Y,X)=\{a^\dag\colon a\in R(X,Y)\}$ for each $X\atomof\X$ and each $Y\atomof\Y$. The cartesian product $\times$ of quantum sets extends to a monoidal product that is defined on morphisms $R\colon\X\to\Y$ and $S\colon\W\to\Z$ by $(R\times S)(X\otimes W,Y\otimes Z)=R(X,Y)\otimes S(W,Z)$ for each $X\otimes W\atomof\X\times\W$ and each $Y\otimes Z\atomof\Y\times\Z$. The monoidal unit $\mathbf 1$ is given by the quantum set consisting of a single one-dimensional atom, typically denoted by $\CC$.

Let $H$ and $K$ be Hilbert spaces. For each linear operator $v \in L(H, K)$, write $v^* \in L(K^*, H^*)$ for the Banach space adjoint of $v$, defined by $v^*(\varphi) = \varphi \circ v$. For each subspace $V \leq L(H, K)$, write $V^* = \{v^* \colon v \in V\} \leq L(K^*, H^*)$. The \emph{dual} of a quantum set $\X$ is the quantum set $\X^*$ determined by $\At(\X^*)=\{X^*\colon X \atomof \X\}$. The \emph{dual} of a binary relation $R$ from $\X$ to $\Y$ is the binary relation $R^*$ from $\Y^*$ to $\X^*$ defined by $R^*(Y^*,X^*) = R(X, Y)^*$.
In $\qRel$, the associator $A$, the unitors $L$ and $R$, the symmetry $S$, the unit $H$ and the counit $E$ of the compact structure can be expressed in terms of the associator $\alpha$, the unitors $\lambda$ and $\rho$, the symmetry $\sigma$, and the unit $\eta$ and counit $\epsilon$ of the compact structure of $\mathbf{FdHilb}$. For instance, the nonzero components of $S_{\X\Y}\colon\X\times\Y\to\Y\times\X$ is given by $S_{\X\Y}(X\otimes Y,Y\otimes X)=\CC\sigma_{XY}$, and the nonzero components of $E_\X\colon\X\times\X^*\to\mathbf 1$ are given by $E_\X(X\otimes X^*,\CC)=\CC\epsilon_X$. 


The assignment $S\mapsto `S$ extends to a fully faithful functor $`(-)\colon\Rel\to\qRel$, which is defined on ordinary binary relations $r\colon S\to T$ for each $s\in S$ and each $t\in T$ by $(`r)(\CC_s,\CC_t)=L(\CC_s,\CC_t)$ if $(s,t)\in r$, and $(`r)(\CC_s,\CC_t)=0$ otherwise. Since $L(\CC_s,\CC_t)$ is one dimensional, it only has two subspaces, whence $`(-)$ is indeed fully faithful. Moreover, it preserves the dagger structure, and the inclusion order on homsets of $\Rel$. 

It is easy to verify that a function $f\colon X\to Y$ between ordinary sets is a binary relation such that $f^\dag\circ f\geq 1_X$ and $f\circ f^\dag\leq 1_Y$, where $f^\dag$ is the opposite relation of $f$. Hence, a relation $F\colon\X\to\Y$ between quantum sets is called a \emph{function} if it satisfies $F^\dag\circ F\geq I_\X$ and $F\circ F^\dag\leq I_\Y$. Examples of functions are the associator, unitors, and symmetry of $\qRel$. Another example of a function is provided by subsets $\Y$ of quantum sets $\X$, for which there is an \emph{inclusion function} $J_\Y^\X$  defined for each $Y\atomof\Y$ and each $X\atomof\X$ by $J_\Y^\X(Y,X)=\CC 1_Y$ if $Y=X$ and $J_\Y^\X(Y,X)=0$ otherwise. If it is clear that $\X$ is the ambient quantum set, we often write $J_\Y$ instead of $J_\Y^\X$. 

Given a binary relation $R:\X\to\Y$ and subsets $\Z\subseteq\X$ and $\W\subseteq\Y$, we define the \emph{restriction} $R|_\Z$ of $R$ to $\Z$ as the relation $R\circ J_\Z^\X$. The \emph{corestriction} $R|^\W$ of $R$ to $\W$ is defined as the relation $(J_\W^\Y)^\dag\circ R$. We have $(R|_\Z)|^\W=(R|^\W)|_\Z$, which we denote as $R|_\Z^\W$.

The wide subcategory of $\qRel$ of functions is denoted by $\qSet$, which is complete, cocomplete and symmetric monoidal closed with respect to the monoidal product $\times$. The monoidal unit, associator, unitors and symmetry are the same as for $\qRel$. 

A function $F\colon\X\to\Y$ is called \emph{injective} if $F^\dag\circ F=I_\X$ and \emph{surjective} if $F\circ F^\dag=I_\Y$. Any inclusion function is an injective map. The injective and surjective functions are precisely the respective monomorphisms and epimorphisms of $\qSet$. Functions that are both injective and surjective are called \emph{bijective}, and are precisely the isomorphisms of $\qSet$. The \emph{range} of a function $F\colon\X\to\Y$ is the quantum set $\ran F$ specified by $\At(\ran F)=\{Y\atomof\Y\colon F(X,Y)\neq 0\text{ for some }X\atomof\X\}$. We have $F=J_{\ran F}\circ \bar F$ for some unique surjective function $\bar F\colon\X\to\ran F$, which is defined by $\bar F(X,Y)=F(X,Y)$ for each $X\atomof\X$ and each $Y\atomof\ran F$. It follows that $F$ is surjective if and only if $\ran F=\Y$. 

The functor $`(-)\colon\Rel\to\qRel$ restricts and corestricts to a fully faithful functor $`(-)\colon\Set\to\qSet$. Furthermore, if we denote the category of von Neumann algebras and normal $*$-homomorphisms by $\mathbf{WStar}$, then there is a fully faithful functor $\ell^\infty\colon\qSet\to\mathbf{WStar}^\op$ that on objects is defined by $\X\mapsto\bigoplus_{X\atomof\X}L(X)$. The essential image of this functor is the category of \emph{hereditarily atomic} von Neumann algebras, i.e., von Neumann algebras that are isomorphic to some (possibly infinite) $\ell^\infty$-sum of matrix algebras. Also $\qRel$ can be shown to equivalent to a category of operator algebras \cite{Kornell20}, namely the category of hereditarily atomic von Neumann algebras and Weaver's quantum relations \cite{Weaver10}.

\subsection{Quantum posets}
The basic reference for this section is \cite{KLM22}.
    Let $\X$ be a quantum set. Then we call a binary relation $R\colon\X\to\X$ \emph{reflexive} if $I_\X\leq R$,
\emph{transitive} if $R\circ R\leq R$, and \emph{antisymmetric} if $R\wedge R^\dag\leq I_\X$.
   A pair $(\X,\qo)$ consisting of a quantum set $\X$ and a reflexive, transitive and antisymmetric relation $\qo$ on $\X$ is called a \emph{quantum poset}. The relation $\qo$ is called an \emph{order}. In order to improve the readability of expressions and calculations, we sometimes write parentheses around $\qo$, so we write $(\qo)$. 

\begin{example}
Let $\X$ be a quantum set. Then $I_\X$ is a quantum order on $\X$, which we call the \emph{trivial} order.
\end{example}

\begin{example}\label{ex:non-classical quantum poset}
Let $\H$ be the quantum set consisting of a single two-dimensional atom $H$. A  `non-classical' order on $\H$ is given by the relation $\qo$ on $\H$ specified by $
\qo(H,H)=\CC\begin{pmatrix}
1 & 0 \\
0 & 1
\end{pmatrix} +\CC \begin{pmatrix}
0 & 1 \\
0 & 0
\end{pmatrix}.$
Since $\H$ has only one atom $H$, $\qo$ is determined by $\qo(H,H)$. Thus, $(\H,\qo)$ is a quantum poset. 
\end{example}

  Let $(\X,\qo)$ be a quantum poset.  The relation $\qo^\dag$ is also an order, and is called the \emph{opposite} order on $\X$. We write $(\X,\qo)^\op=(\X,\qop)$, which is called the \emph{opposite} quantum poset of $(\X,\qo)$. Often, we just write $\X$ instead of $(\X,\qo)$ and $\X^\op$ instead of $(\X,\qop)$.
\begin{example}
    Let $(\H,\qo)$ be the quantum poset of the previous example. Then $\qop$ is specified by 
     $\qop(H,H)=\CC\begin{pmatrix}
1 & 0 \\
0 & 1
\end{pmatrix} +\CC \begin{pmatrix}
0 & 0 \\
1 & 0
\end{pmatrix}.$\end{example}

Let $(\X,\qo_\X)$ and $(\Y,\qo_\Y)$ be quantum posets. Then we say that a function $F\colon\X\to\Y$ is \emph{monotone} if $F\circ(\qo_\X)\leq (\qo_\Y)\circ F$.
Under the composition of functions between quantum sets, quantum posets and monotone functions form a category, which we call $\qPOS$, which is complete, cocomplete and monoidal closed under the monoidal product that is defined by $(\X,\qo_\X)\times(\Y,\qo_\Y)=(\X\times\Y,\qo_\X\times\qo_\Y)$. 
The monoidal unit is given by $(\mathbf 1, I_\mathbf 1)$. The components of the associator, unitors and symmetry are the components of the respective associator, unitors and symmetry of the underlying quantum sets.
We denote the evaluation morphism of $\qPOS$ by $\Eval_\sqsubseteq$, and the internal hom by $[\cdot,\cdot]_\sqsubseteq$. We call the isomorphisms of $\qPOS$ \emph{order isomorphisms}.

Let $(\X,\qo_\X)$ and $(\Y,\qo_\Y)$ be quantum posets. Then a monotone map $F\colon\X\to\Y$ is called an \emph{order embedding} if $(\qo_\X)=F^\dag\circ (\qo_\Y)\circ F$. The surjective order embeddings are precisely the order isomorphisms.
A subposet of a quantum poset $(\Y,\qo)$ consists of a subset $\X$ of $\Y$ equipped with the order $\qo|_\X^\X:=J_\X^\dag\circ \qo_\Y\circ J_\X$, to which we refer as the \emph{induced} order on $\X$. It follows that $J_\X\colon(\X,\qo|_\X^\X)\to(\Y,\qo)$ is an order embedding. 

Given a monotone map $F\colon(\X,\qo_\X)\to(\Y,\qo_\Y)$ between quantum posets, if we equip $\ran F\subseteq\Y$ with the relative order, then the unique surjective function $\bar F\colon\X\to\ran F$ such that $F=J_{\ran F}\circ \bar F$ is monotone. Hence, every monotone map can be written as the composition of a monotone surjective map and an order embedding.

If $(S,\sqsubseteq)$ is an ordinary poset, then $(`S,`\!\!\sqsubseteq)$ is a quantum poset, and vice versa; for example, the trivial order $\sqsubseteq$ on $S$ corresponds to $`(\sqsubseteq)\,=\, I_{`S}$, the trivial order on $`S$. Moreover, a monotone map $f$ between ordinary posets gives rise to a monotone function  $`f$ between the associated quantum posets, and vice versa. It follows that $`(-)$ extends to a fully faithful functor $\POS\to\qPOS$ defined on objects by $`(S,\sqsubseteq)=( `S,`{\sqsubseteq})$. If $2$ denotes the two-point set $\{0,1\}$, we write $\mathbf 2=`2$. If we equip $2$ with the order $\sqsubseteq$ defined by $0\sqsubset 1$, then we write $\qo_\mathbf 2=`{\sqsubseteq}$. Hence, $(\mathbf 2,\qo_\mathbf 2)=`(2,{\sqsubseteq})$.

Given a quantum set $\X$ and a quantum poset $(\Y,\qo)$, and given two functions $F,G\colon\X\to\Y$, we define $F\sqsubseteq_\Y G$ if $G\leq (\qo)\circ F$. This defines an order on $\qSet(\X,\Y)$ which is the quantum equivalent of the pointwise order of functions. We sometimes write $F\sqsubseteq G$ instead of $F\sqsubseteq_\Y G$.

\section{Monotone relations}

Let $(X,\sqsubseteq)$ and $(Y,\sqsubseteq)$ be posets. A binary relation $v\colon X\to Y$ is called a \emph{monotone relation} \cite{monotonerels} if $(x',y')\in v$, $x'\sqsubseteq x$ and $y\sqsubseteq y'$ implies $(x,y)\in v$. Under the usual composition of binary relations, posets and monotone relations form a category $\mathbf{RelPos}$, where the identity monotone relation $1_{(X,\sqsubseteq)}$ on the poset $(X,\sqsubseteq)$ is the binary relation $\sqsupseteq$. One can show that there are bijections between monotone relations $X\to Y$, monotone functions $X\times Y^\op\to 2$, and monotone functions $X\to \Dwn(Y)$, where $2$ is the two-point poset ${0,1}$ ordered by $0\sqsubset 1$, and $\Dwn(Y)$ is the poset of down-sets of $Y$ ordered by inclusion. In fact, the assignment $Y\mapsto \Dwn(Y)$ can be made into an endofunctor $\Dwn$ on $\POS$ that is a monad, and whose Kleisli category is isomorphic to $\mathbf{RelPos}$, reflected in the previous remark that a monotone relation $X\to Y$ corresponds to a monotone map $X\to \Dwn(Y)$, i.e., a Kleisli map. The importance of this monad lies in the fact that its Eilenberg-Moore category is precisely the category $\mathbf{Sup}$ of suplattices. 

A way to see that $\Dwn$ is the underlying endofunctor of a monad is the following. Just like we can embed $\Set$ into $\mathbf{Rel}$, we have an embedding $(-)_\diamond\colon\POS\to\mathbf{RelPos}$ that is the identify on objects, and that sends any monotone function $f\colon(X,\sqsubseteq)\to(Y,\sqsubseteq)$ to the monotone relation $f_\diamond=\{ (x,y)\colon y\sqsubseteq f(x)\}$.

Moreover, just like the covariant power set functor extends to a functor $\mathbf{Rel}\to\Set$ that is the right adjoint of the embedding $\Set\to\mathbf{Rel}$, the assignment $X\mapsto \Dwn(X)$ extends to a functor $\mathbf{RelPos}\to\POS$ that is the right adjoint of $(-)_\diamond$. The monad induced by this adjunction is precisely the down-set monad on $\POS$. 
Hence, in order to define quantum suplattices, we will have to find the quantum generalization of the down-set monad, and in order to find this quantum down-set monad, we have to find a quantum generalization of monotone relations.

\begin{definition}
Let $(\X,\qo_\X)$ and $(\Y,\qo_\Y)$ be quantum posets. A binary relation $V\colon\X\to\Y$ is called a \emph{monotone relation} $(\X,\qo_\X)\to(\Y,\qo_\Y)$ if it satisfies one of the following two equivalent conditions (hence both):
\begin{itemize}
    \item[(1)] $(\qop_\Y)\circ V\leq V$ and $V\circ (\qop_\X)\leq V$.
    \item[(2)] $(\qop_\Y)\circ V=V=V\circ (\qop_\X)$. 
\end{itemize}
\end{definition}
The equivalence of both conditions follows from the reflexivity of orders.

\begin{lemma}
Let $V\colon(\X,\qo_\X)\to(\Y,\qo_\Y)$ and $W\colon(\Y,\qo_\Y)\to(\Z,\qo_\Z)$ be monotone relations between quantum posets. Then $W\circ V\colon(\X,\qo_\X)\to(\Z,\qo_\Z)$ is a monotone relation.
\end{lemma}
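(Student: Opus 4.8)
The plan is to verify that $W\circ V$ satisfies condition (1) of the definition of monotone relation, namely that $(\qop_\Z)\circ (W\circ V)\leq W\circ V$ and $(W\circ V)\circ (\qop_\X)\leq W\circ V$. Both inequalities are of the same shape, so I would treat them symmetrically.

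For the first inequality, I would use associativity of composition in the quantaloid $\qRel$ to write $(\qop_\Z)\circ (W\circ V)=\bigl((\qop_\Z)\circ W\bigr)\circ V$. Since $W\colon(\Y,\qo_\Y)\to(\Z,\qo_\Z)$ is a monotone relation, condition (1) applied to $W$ gives $(\qop_\Z)\circ W\leq W$. Now I would invoke monotonicity of composition in a quantaloid (composition preserves the order on homsets, being suprema-preserving and hence in particular monotone), which yields $\bigl((\qop_\Z)\circ W\bigr)\circ V\leq W\circ V$, establishing the first inequality. The second inequality is entirely analogous: $(W\circ V)\circ(\qop_\X)=W\circ\bigl(V\circ(\qop_\X)\bigr)$ by associativity, and $V\circ(\qop_\X)\leq V$ since $V$ is a monotone relation, so monotonicity of composition gives $W\circ\bigl(V\circ(\qop_\X)\bigr)\leq W\circ V$.

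Putting the two inequalities together shows $W\circ V$ satisfies condition (1), hence (by the equivalence noted after the definition, which follows from reflexivity of the orders) also condition (2), so $W\circ V$ is a monotone relation $(\X,\qo_\X)\to(\Z,\qo_\Z)$. There is no real obstacle here; the only thing to be careful about is that we are working in $\qRel$ rather than $\Rel$, so every step must be justified purely by the quantaloid axioms — associativity of $\circ$, the order-enrichment, and the $\Sup$-enrichment guaranteeing that $\circ$ is monotone in each argument — rather than by any element-chasing argument. In fact one could even give the proof in a single display using $(\qop_\Z)\circ W\circ V\leq W\circ V$ and $W\circ V\circ(\qop_\X)\leq W\circ V$ read off directly from the monotonicity hypotheses on $W$ and $V$ respectively.
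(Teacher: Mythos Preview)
Your proof is correct and is exactly the natural argument: use associativity of composition in $\qRel$ together with monotonicity of composition in each variable (which follows from the $\Sup$-enrichment) to transfer the inequalities $(\qop_\Z)\circ W\leq W$ and $V\circ(\qop_\X)\leq V$ to the composite. The paper states this lemma without proof, so there is nothing further to compare; your argument is precisely the intended routine verification.
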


\begin{lemma}
Let $(\X,\qo_\X)$ be a quantum poset. Then $\qop_\X\colon(\X,\qo_\X)\to(\X,\qo_\X)$ is a monotone relation such that for each quantum poset $(\Y,\qo_\Y)$ and all monotone relations $V\colon(\X,\qo_\X)\to(\Y,\qo_\Y)$ and $W\colon(\Y,\qo_\Y)\to(\X,\qo_\X)$ we have $V\circ (\qop_\X)=V$ and $(\qop_\X)\circ W=W$.
\end{lemma}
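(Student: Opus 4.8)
The plan is to observe that, once the definitions are unwound, the statement is essentially immediate: the only ingredients needed are the two defining conditions of a monotone relation together with the elementary behaviour of the dagger and the $\Sup$-enrichment of $\qRel$.

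First I would verify that $\qop_\X\colon(\X,\qo_\X)\to(\X,\qo_\X)$ is indeed a monotone relation. By the stated equivalence of conditions (1) and (2) it suffices to check condition (1), and with $\Y=\X$, $\qo_\Y=\qo_\X$, $V=\qop_\X$ the two inequalities there coincide with the single requirement $(\qop_\X)\circ(\qop_\X)\leq\qop_\X$. This is just transitivity of the relation $\qop_\X$, which holds because $\qop_\X$ is an order on $\X$ (as recorded just after the definition of a quantum poset); alternatively, applying the dagger to the transitivity inequality $\qo_\X\circ\qo_\X\leq\qo_\X$ and using $(R\circ S)^\dag=S^\dag\circ R^\dag$ together with the fact that $R\leq S$ iff $R^\dag\leq S^\dag$ yields the same conclusion.

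For the absorption identities, let $V\colon(\X,\qo_\X)\to(\Y,\qo_\Y)$ and $W\colon(\Y,\qo_\Y)\to(\X,\qo_\X)$ be monotone relations. Then $V\circ(\qop_\X)=V$ is precisely the right-hand equality in condition (2) of the definition applied to $V$, and $(\qop_\X)\circ W=W$ is precisely the left-hand equality in condition (2) applied to $W$ (with $\X$ now in the role of the codomain, so that $\qop_\X$ plays the role of the $\qop_\Y$ appearing there). This finishes the argument.

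There is no real obstacle here; the proof is a direct consequence of the definition, and the only point deserving a line of care is the dagger bookkeeping used to transfer transitivity of $\qo_\X$ to $\qop_\X$ in the first step. The content of the lemma is conceptual rather than computational: it says that $\qop_\X$ is the identity morphism on $(\X,\qo_\X)$ in the (to-be-constructed) category $\qRelPos$ of quantum posets and monotone relations, exactly as $\sqsupseteq$ is the identity in $\RelPos$.
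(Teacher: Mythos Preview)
Your argument is correct. The paper does not include a proof of this lemma (it is stated without proof, as the content is immediate from the definitions), and your reasoning is exactly the natural one: transitivity of $\qop_\X$ gives both inequalities in condition~(1), and the absorption identities are literally the two halves of condition~(2) applied to $V$ and $W$ respectively.
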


It follows from the previous two lemmas that the following definition is sound:
\begin{definition}
We define the category of quantum posets and monotone relations by $\mathbf{qRelPos}$. The identity monotone relation on a quantum poset $(\X,\qo_\X)$ is $\qop_\X$, which we often denote by  $I_{(\X,\qo)}$.
\end{definition}

\begin{lemma}
    There is a fully faithful functor $`(-)\colon\mathbf{RelPos}\to\mathbf{qRelPos}$ that sends any poset $(S,\sqsubseteq)$ to $(`S,`{\sqsubseteq})$ and  any monotone relation $v\colon(S,\sqsubseteq)\to(T,\sqsubseteq)$ to $`v$.
\end{lemma}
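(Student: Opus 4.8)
The plan is to define $`(-)$ on objects and morphisms exactly as stated, then verify functoriality, well-definedness, and full faithfulness by reducing each claim to the already-established properties of the embedding $`(-)\colon\Rel\to\qRel$ from the preliminaries. Recall that we already know $`(-)\colon\Rel\to\qRel$ is a fully faithful functor preserving the dagger and the inclusion order on homsets, and that $`(S,\sqsubseteq)=(`S,`\!\!\sqsubseteq)$ is a quantum poset whenever $(S,\sqsubseteq)$ is a poset, with $`(\sqsupseteq)=(`\!\!\sqsubseteq)^\dag=\qop_{`S}$. So the action on objects is forced and already makes sense.

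First I would check that $`(-)$ sends monotone relations to monotone relations. Given a monotone relation $v\colon(S,\sqsubseteq)\to(T,\sqsubseteq)$ in $\mathbf{RelPos}$, the defining conditions $(\sqsupseteq_T)\circ v = v$ and $v\circ(\sqsupseteq_S)=v$ hold in $\Rel$ (this is just the classical monotonicity condition rewritten relationally, using that $1_{(T,\sqsubseteq)}=\sqsupseteq_T$). Applying the functor $`(-)\colon\Rel\to\qRel$, which preserves composition, and using $`(\sqsupseteq_S)=\qop_{`S}$, $`(\sqsupseteq_T)=\qop_{`T}$, we immediately get $\qop_{`T}\circ`v = `v = `v\circ\qop_{`S}$, i.e.\ condition (2) of the definition of monotone relation in $\mathbf{qRelPos}$. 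Hence $`v$ is a monotone relation $(`S,`\!\!\sqsubseteq)\to(`T,`\!\!\sqsubseteq)$.

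Next I would verify functoriality. Preservation of identities is the computation $`(1_{(S,\sqsubseteq)}) = `(\sqsupseteq_S) = \qop_{`S} = I_{(`S,`\sqsubseteq)}$, and preservation of composition follows directly from the fact that composition of monotone relations is just composition in $\Rel$ (resp.\ $\qRel$) and that $`(-)\colon\Rel\to\qRel$ preserves composition. Full faithfulness: for posets $(S,\sqsubseteq)$ and $(T,\sqsubseteq)$, the map $v\mapsto`v$ from $\mathbf{RelPos}((S,\sqsubseteq),(T,\sqsubseteq))$ to $\mathbf{qRelPos}((`S,`\!\!\sqsubseteq),(`T,`\!\!\sqsubseteq))$ is a restriction of the bijection $\Rel(S,T)\to\qRel(`S,`T)$ given by full faithfulness of $`(-)\colon\Rel\to\qRel$; so it is injective, and for surjectivity I must show that every monotone relation $V\colon(`S,`\!\!\sqsubseteq)\to(`T,`\!\!\sqsubseteq)$ in $\mathbf{qRelPos}$ is of the form $`v$ for some binary relation $v\colon S\to T$, and that this $v$ is automatically a monotone relation. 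The first part is exactly full faithfulness of $`(-)\colon\Rel\to\qRel$ at the level of underlying relations; the second part follows by applying the (faithful, order-reflecting) functor back and using that $`v\circ\qop_{`S}=`v$ forces $v\circ(\sqsupseteq_S)=v$ and similarly on the other side, since $`(-)$ reflects the order on homsets and reflects composites.

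The only mild subtlety — the step I'd expect to require the most care — is the surjectivity half of full faithfulness: one must confirm that a $\qRel$-morphism between the quantum posets $(`S,`\!\!\sqsubseteq)$ and $(`T,`\!\!\sqsubseteq)$ satisfying the quantum monotonicity conditions really does descend to an \emph{ordinary} binary relation satisfying the classical monotonicity conditions, rather than to some genuinely quantum relation. This is handled entirely by the fact that $`(-)\colon\Rel\to\qRel$ is fully faithful and reflects the homset order (both recorded in the preliminaries): the preimage relation exists, is unique, and inherits exactly the composition-with-$\sqsupseteq$ identities that characterize monotone relations. No genuinely new quantum computation is needed; the lemma is essentially bookkeeping on top of the classical embedding.
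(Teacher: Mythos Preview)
The paper does not supply a proof of this lemma; it is stated and then the exposition moves on. Your argument is correct and is the natural one: every step reduces to the already-established fact that $`(-)\colon\Rel\to\qRel$ is a fully faithful functor preserving composition, daggers, and the homset order, together with the identification $`(\sqsupseteq_S)=\qop_{`S}$. One minor remark: in the fullness step you do not actually need that $`(-)$ reflects the \emph{order} on homsets, only that it reflects \emph{equalities} of morphisms (i.e.\ faithfulness), since the monotonicity condition you are pulling back is the equational form $(\qop_{`T})\circ V = V = V\circ(\qop_{`S})$ rather than an inequality.
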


\begin{lemma}\label{lem:embedding qPOS into qRelPos}
There is a faithful functor $(-)_\diamond\colon\qPOS\to\mathbf{qRelPos}$ which is the identity on objects, and which acts on monotone maps $F\colon(\X,\qo_\X)\to(\Y,\qo_\Y)$ by
$F_\diamond:=(\qop_\Y)\circ F$.
\end{lemma}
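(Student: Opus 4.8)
\textbf{Proof plan for Lemma \ref{lem:embedding qPOS into qRelPos}.}
The plan is to verify three things: (i) that $F_\diamond := (\qop_\Y)\circ F$ is a monotone relation $(\X,\qo_\X)\to(\Y,\qo_\Y)$ whenever $F$ is a monotone map; (ii) that $(-)_\diamond$ preserves identities and composition, i.e. that it is a functor which is the identity on objects; and (iii) that $(-)_\diamond$ is faithful. Throughout I would argue purely diagrammatically in $\qRel$, using only the dagger-compact and $\Sup$-enriched structure, the defining inequalities of a monotone map ($F\circ(\qo_\X)\leq(\qo_\Y)\circ F$) and of a function ($F^\dag\circ F\geq I_\X$, $F\circ F^\dag\leq I_\Y$), and the transitivity/reflexivity of the orders.

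For (i), I need $(\qop_\Y)\circ F_\diamond\leq F_\diamond$ and $F_\diamond\circ(\qop_\X)\leq F_\diamond$. The first is immediate: $(\qop_\Y)\circ F_\diamond=(\qop_\Y)\circ(\qop_\Y)\circ F\leq(\qop_\Y)\circ F=F_\diamond$ by transitivity of $\qop_\Y$ (which is an order by the remark that $\qo^\dag$ is always an order). For the second, I would take daggers in the monotonicity inequality $F\circ(\qo_\X)\leq(\qo_\Y)\circ F$: since $(-)^\dag$ is an order-preserving involution on homsets and reverses composition, this gives $(\qop_\X)\circ F^\dag\leq F^\dag\circ(\qop_\Y)$. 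Post-composing with $(\qop_\Y)$ and pre-composing appropriately, then using $F\circ F^\dag\leq I_\Y$ together with transitivity, I expect to land on $F_\diamond\circ(\qop_\X)=(\qop_\Y)\circ F\circ(\qop_\X)\leq(\qop_\Y)\circ F=F_\diamond$; the precise bracketing is the routine part. The reverse inequalities, giving equality as in condition (2) of the definition, then follow from reflexivity of the orders exactly as noted after the definition.

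For (ii), the identity on $(\X,\qo_\X)$ in $\qPOS$ is $I_\X$, and $(I_\X)_\diamond=(\qop_\X)\circ I_\X=\qop_\X$, which is precisely the identity monotone relation of $\mathbf{qRelPos}$, so identities are preserved. For composition, given monotone $F\colon\X\to\Y$ and $G\colon\Y\to\Z$ I must show $(G\circ F)_\diamond=G_\diamond\circ F_\diamond$, i.e. $(\qop_\Z)\circ G\circ F=(\qop_\Z)\circ G\circ(\qop_\Y)\circ F$. The inequality $\leq$ is clear from reflexivity $I_\Y\leq\qop_\Y$. For $\geq$, I would use that $G$ is monotone to slide $\qop_\Y$ past $G$: taking daggers in $G\circ(\qo_\Y)\leq(\qo_\Z)\circ G$ gives $(\qop_\Y)\circ G^\dag\leq G^\dag\circ(\qop_\Z)$, hence $(\qop_\Z)\circ G\circ(\qop_\Y)\leq(\qop_\Z)\circ G$ after taking daggers back and using transitivity of $\qop_\Z$; composing on the right with $F$ gives the desired bound. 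Faithfulness in (iii) is the cleanest part: if $F_\diamond=G_\diamond$, precompose — or rather, recover $F$ from $F_\diamond$ — by noting $F\leq(\qop_\Y)\circ F=F_\diamond$ is not quite enough, so instead I would exploit that $F$ is a function: from $F_\diamond=(\qop_\Y)\circ F$ and $F^\dag\circ F\geq I_\X$ one shows $F=(J_{\ran F})\circ\ldots$; more directly, $F = I_\Y \circ F \le (\qop_\Y)\circ F$ shows $F \le F_\diamond$, and conversely applying the monotone map structure together with antisymmetry of $\qo_\Y$ pins down $F$ as the unique function below $F_\diamond$, so $F_\diamond=G_\diamond$ forces $F=G$.

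The main obstacle I anticipate is step (iii), faithfulness: unlike the classical case, where $f$ is literally recovered as the set of maximal elements in each fibre of $f_\diamond$, here one must argue that among all functions (in the technical sense $F^\dag\circ F\geq I_\X$, $F\circ F^\dag\leq I_\Y$) there is at most one whose $\diamond$-image is a given monotone relation. I expect this to come down to an antisymmetry argument: if $F,G$ are functions with $(\qop_\Y)\circ F=(\qop_\Y)\circ G$, then $F\leq(\qop_\Y)\circ G$ and $G\leq(\qop_\Y)\circ F$, i.e. $F\sqsubseteq_\Y G$ and $G\sqsubseteq_\Y F$ in the pointwise order on $\qSet(\X,\Y)$, which is a partial order (this uses antisymmetry of $\qo_\Y$), hence $F=G$. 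Verifying that $\sqsubseteq_\Y$ is genuinely antisymmetric on functions — which should follow from antisymmetry of $\qo_\Y$ combined with $F\circ F^\dag\leq I_\Y$ — is the one spot where I would need to be careful rather than purely formal.
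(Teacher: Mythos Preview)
The paper states this lemma without proof, so there is nothing to compare against directly; your plan is correct and is the natural argument. Two small points are worth tightening.

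For (ii), once (i) is done you already know $G_\diamond$ is a monotone relation, so by condition~(2) of the definition $G_\diamond\circ(\qop_\Y)=G_\diamond$; hence $G_\diamond\circ F_\diamond=G_\diamond\circ(\qop_\Y)\circ F=G_\diamond\circ F=(\qop_\Z)\circ G\circ F=(G\circ F)_\diamond$ in one line, with no need to slide orders past $G$ a second time.

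For (iii) there is a harmless sign slip: from $F\leq(\qop_\Y)\circ G$ and $G\leq(\qop_\Y)\circ F$ you obtain $G\sqsubseteq_{\Y^\op}F$ and $F\sqsubseteq_{\Y^\op}G$, not $F\sqsubseteq_\Y G$ and $G\sqsubseteq_\Y F$ (the paper's convention is $F\sqsubseteq_\Y G\iff G\leq(\qo_\Y)\circ F$). This does not matter, since $(\Y,\qop_\Y)$ is also a quantum poset and the paper records that $\sqsubseteq$ is always a partial order on $\qSet(\X,\Y)$; antisymmetry of $\sqsubseteq_{\Y^\op}$ then gives $F=G$ exactly as you intend. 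Your instinct that this antisymmetry is the only non-formal step is right, and its verification does go through $F\circ F^\dag\leq I_\Y$, $G\circ G^\dag\leq I_\Y$ together with $(\qo_\Y)\wedge(\qop_\Y)\leq I_\Y$.
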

The functor in the previous lemma is an extension of the functor $(-)_\diamond\colon\POS\to\mathbf{RelPos}$ mentioned above, in the sense that $\POS\xrightarrow{`(-)}\qPOS\xrightarrow{(-)_\diamond}\qRelPos$ equals $\POS\xrightarrow{(-)_\diamond}\RelPos\xrightarrow{`(-)}\qRelPos$.


\begin{definition}
Let $(\X,\qo_\X)$ be a quantum poset. Then we define $(\X,\qo_\X)^*$ to be the quantum poset $(\X^*,\qo_\X^*)$. Sometimes, we write $\X^*$ instead of $(\X,\qo_\X)^*$.
\end{definition}

Since the operation of taking daggers in dagger compact categories commutes with the operation of taking duals, we obtain the following lemma:
\begin{lemma}
    Let $(\X,\qo)$ be a quantum poset. Then $(\X^*)^\op=(\X^\op)^*$.
\end{lemma}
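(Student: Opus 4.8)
The plan is to unfold both sides of the claimed equality to quantum posets whose underlying quantum set is $\X^*$ in either case, and then to check that the two resulting orders coincide by reducing to a componentwise identity in $\mathbf{FdHilb}$.

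First I would note that the dual $\X^*$ of a quantum set does not involve any order, so the underlying quantum set of both $(\X^*)^\op$ and $(\X^\op)^*$ is $\X^*$. By the definition of the opposite quantum poset we have $(\X^*)^\op=(\X^*,(\qo^*)^\dag)$, and by the definition of the dual quantum poset we have $(\X^\op)^*=(\X,\qop)^*=(\X^*,(\qop)^*)$. Hence the statement reduces to the equality of relations $(\qo^*)^\dag=(\qop)^*=(\qo^\dag)^*$ on $\X^*$.

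Next I would establish the general fact that, for an arbitrary binary relation $R\colon\X\to\Y$, taking the dagger and taking the dual commute, i.e.\ $(R^\dag)^*=(R^*)^\dag$ as binary relations $\X^*\to\Y^*$. Spelling out the $(X^*,Y^*)$-component of each side via the definitions of $R^\dag$ and $R^*$, this reduces to the elementary claim that for every $a\in L(X,Y)$ between finite-dimensional Hilbert spaces one has $(a^\dag)^*=(a^*)^\dag$ in $L(X^*,Y^*)$, where $a^\dag$ denotes the Hermitian adjoint and $a^*$ the Banach-space adjoint. This is a standard coherence property of the dagger compact structure of $\mathbf{FdHilb}$---the dual functor and the dagger functor commute---and it lifts componentwise to $\mathbf{FdOS}$ and hence to $\qRel=\mathrm{Matr}(\mathbf{FdOS})$.

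Finally, specialising this commutation to $R=\qo$ yields $(\qo^*)^\dag=(\qo^\dag)^*$, which is precisely the identity of orders needed to conclude $(\X^*)^\op=(\X^\op)^*$. There is essentially no serious obstacle here: the only point requiring a little care is the bookkeeping of which Hilbert space each adjoint is taken on, and verifying that the passage from the operator identity to the relation identity is compatible with the matrix structure of $\qRel$; both are routine once $(a^\dag)^*=(a^*)^\dag$ is in place.
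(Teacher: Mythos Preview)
Your proposal is correct and follows essentially the same approach as the paper: the paper simply observes that in any dagger compact category the dagger and dual operations commute, which immediately gives $(\qo^*)^\dag=(\qo^\dag)^*$ and hence the lemma. You unpack this same fact componentwise down to the identity $(a^\dag)^*=(a^*)^\dag$ in $\mathbf{FdHilb}$ and then lift it back up through $\mathrm{Matr}(\mathbf{FdOS})$, which is just a more explicit version of the paper's one-line appeal to the abstract coherence property.
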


\begin{theorem}\label{thm:qRelPos is compact}
The category $\mathbf{qRelPos}$ is compact closed: for each quantum poset $(\X,\qo)$, the unit $H_{(\X,\qo)}\colon(1,I_\mathbf 1)\to (\X,\qo)^*\times(\X,\qo)$ and the counit $E_{(\X,\qo)}\colon(\X,\qo)\times(\X,\qo)^*\to(\mathbf 1,I_\mathbf 1)$ are given by $(\qop^*\times \qop)\circ H_\X$ and  $E_{\X}\circ (\qop\times \qop^*)$, respectively, where $H_\X$ and $E_\X$ denote the usual unit and counit of $\mathbf{qRel}$. The associator, unitors and symmetry are obtained by applying the functor $(-)_\diamond$ to the usual associator, unitors and symmetry in $\qPOS$.
\end{theorem}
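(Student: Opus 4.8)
The plan is to exploit the fact that $\qRel$ is already known to be dagger compact closed, and to transport that structure along the embedding $(-)_\diamond\colon\qRel$-level data into $\qRelPos$, correcting the unit, counit, associator, unitors and symmetry by pre- and post-composition with the appropriate opposite orders. Concretely, one first checks that the proposed unit $H_{(\X,\qo)}=(\qop^*\times\qop)\circ H_\X$ and counit $E_{(\X,\qo)}=E_\X\circ(\qop\times\qop^*)$ are indeed morphisms of $\qRelPos$, i.e.\ monotone relations of the stated type. For $H_{(\X,\qo)}$ this means verifying $(\qop^*\times\qop)\circ H_{(\X,\qo)}=H_{(\X,\qo)}$ and $H_{(\X,\qo)}\circ I_{(\mathbf 1,I_{\mathbf 1})}=H_{(\X,\qo)}$; the first follows from transitivity of $\qo$ (so $\qop\circ\qop=\qop$ after using reflexivity the other way, giving idempotence of $\qop$ as the identity of $\qRelPos$ by the second lemma after Definition of monotone relation), applied in each tensor factor and using that $*$ is a functor commuting with $\dagger$; the second is trivial since $I_{(\mathbf 1,I_{\mathbf 1})}=I_{\mathbf 1}$. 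The argument for $E_{(\X,\qo)}$ is dual. Similarly, one checks that $A_\diamond$, $L_\diamond$, $R_\diamond$, $S_\diamond$ (the images under $(-)_\diamond$ of the associator, unitors, symmetry of $\qPOS$) are monotone relations, which is immediate from Lemma~\ref{lem:embedding qPOS into qRelPos}.

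Next I would verify the two triangle (``snake'' or ``zig-zag'') identities for $H_{(\X,\qo)}$ and $E_{(\X,\qo)}$ in $\qRelPos$. The key computation is that composition in $\qRelPos$ of two monotone relations coincides with their composition as plain binary relations in $\qRel$ (the only subtlety being what the \emph{identity} is), so the snake identity in $\qRelPos$ reduces to a $\qRel$-level identity. Starting from the known snake identity in $\qRel$ for $H_\X,E_\X$, one inserts the extra factors $\qop$, $\qop^*$ coming from the corrected unit/counit and from the corrected associators/unitors, and collapses the resulting chains of composites using idempotence of $\qop$ (equivalently: $\qop$ absorbs on either side of any monotone relation, by the second lemma after the definition of monotone relations, and $\qop^*$ likewise since $*$ is a dagger-preserving monoidal functor on $\qRel$). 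What remains after this collapse should be exactly $I_{(\X,\qo)}=\qop_\X$, not $I_\X$, which is precisely the identity of $\qRelPos$ — so the identity ``automatically corrects itself,'' and that is the conceptual heart of why the construction works. I would also need to confirm naturality of $A_\diamond,L_\diamond,R_\diamond,S_\diamond$ with respect to monotone \emph{relations} (not just monotone maps); this can be reduced to the Kleisli-style description of $\qRelPos$ via the $\qDwn$ monad, or checked by hand using $(\qop_\Y)\circ V=V=V\circ(\qop_\X)$ to move order-factors past an arbitrary monotone relation $V$.

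The main obstacle I anticipate is bookkeeping rather than conceptual: the corrected structure maps each carry one or two $\qop$ (or $\qop^*$) factors, and the pentagon, triangle, and hexagon coherence diagrams for the monoidal structure of $\qRelPos$ must be deduced from those of $\qRel$ while keeping track of how these factors interact with the tensor product $\times$ on morphisms and with $*$. The cleanest way to control this is to prove once and for all a ``normal form'' lemma: every composite built from the corrected structure maps equals the corresponding $\qRel$-composite post-composed (or pre-composed) with a single $\qop$-type factor on each object-boundary, because $\qop$ is idempotent and central enough to be pushed to the ends. Granting such a lemma, every coherence diagram for $\qRelPos$ follows by sandwiching the corresponding $\qRel$-diagram between $\qop$-factors, and compactness follows from the snake identities established above. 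A secondary point to be careful about is that $\qRelPos$ is genuinely compact closed (self-dual objects up to the $*$-twist) rather than merely monoidal closed — but since the dual object is $(\X,\qo)^*=(\X^*,\qo^*)$ and $\dagger$ commutes with $*$ in $\qRel$, the unit/counit exhibited above do land in the right homsets, so no extra work is needed there beyond what the snake identities already give.
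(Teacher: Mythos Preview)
The paper does not include a proof of this theorem; as stated in the overview section, only the proofs of Theorem~\ref{thm:Galois} and Theorem~\ref{thm:quantum down sets form quantum suplattice} are given in full. So there is no paper proof to compare against, and I can only assess your proposal on its own merits.

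Your outline is essentially correct and would go through. A few remarks:

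\begin{itemize}
\item Your suggestion to verify naturality of $A_\diamond,L_\diamond,R_\diamond,S_\diamond$ by ``the Kleisli-style description of $\qRelPos$ via the $\qDwn$ monad'' would be circular: the construction of $\qDwn$ in the next section uses the counit $E_{(\X,\qo)}$ defined in this very theorem. Your alternative (checking by hand using $(\qop_\Y)\circ V=V=V\circ(\qop_\X)$ together with naturality of $A,L,R,S$ in $\qRel$) is the right route and is a two-line computation.
\item You are over-engineering the coherence part. Since $(F\times G)_\diamond=F_\diamond\times G_\diamond$ on the nose, the functor $(-)_\diamond\colon\qPOS\to\qRelPos$ is strict monoidal, so the pentagon, triangle and hexagon for $A_\diamond,L_\diamond,R_\diamond,S_\diamond$ are just the images of the corresponding $\qPOS$ diagrams under a functor; no normal-form lemma is needed for those. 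The normal-form idea is only really needed for the snake identities, and there it amounts to repeatedly using the idempotence $\qop\circ\qop=\qop$, the interchange $(\qop\times\qop)=(\qop\times I)\circ(I\times\qop)$, and the naturality of the $\qRel$ structure maps to push all $\qop$-factors to the boundary, where they assemble into the single $\qop_\X$ that \emph{is} the identity of $\qRelPos$.
\item A slicker conceptual route you might mention: $\qRelPos$ sits as a full monoidal subcategory of the Karoubi envelope of $\qRel$ (objects are quantum sets equipped with the specific idempotent $\qop$; morphisms are relations absorbed on both sides). The Karoubi envelope of a compact closed category is compact closed, with dual of an idempotent $e$ given by $e^*$; since $(\qop)^*=\qop^*$ is again of the required form, the subcategory is closed under duals and inherits the compact structure with exactly the unit and counit you wrote down.
\end{itemize}
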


\section{The quantum down-set monad}
In this section, we introduce the quantum down-set monad. Its construction shares similarities with the construction of the quantum power set in \cite{KLM22}. This construction yields a quantum poset, but the construction of this order seems to be a bit ad hoc. The framework of monotone relations seems to be more appropriate for the construction of an ordered object. We will see that the quantum down-set monad by means of monotone relations is ordered in a natural way. When we apply the monad to a trivially ordered quantum set, then we obtain the quantum power set of this quantum set.

\begin{definition}
 We define the quantum poset $\qDwn(\X,\qo)$ of down-sets of a quantum poset $(\X,\qo)$ to be the  internal hom in $\qPOS$ from $(\X,\qo)^*$ to $(\mathbf 2,\qo_{\mathbf 2})$, i.e., $\qDwn(\X,\qo):=[(\X,\qo)^*,(\mathbf 2,\qo_\mathbf 2)]_\sqsubseteq$. We will denote its underlying quantum set by $\D(\X,\qo)$. The order on $\D(\X,\qo)$ is denoted by $\qsubseteq_{(\X,\qo)}$, so $\qDwn(\X,\qo)=(\D(\X,\qo),\qsubseteq_{(\X,\qo)})$.
\end{definition}
Note that the order $\qsubseteq$ on $\qDwn(\X)$ is a boldface symbol to distinguish it from the inclusion order $\subseteq$ between ordinary sets.
We will prove that the assignment of objects $(\X,\qo)\mapsto\qDwn(\X,\qo)$ extends to a monad on $\qPOS$ by showing that the functor $(-)_\diamond\colon\qPos\to\qRelPos$ has a right adjoint; the monad is then induced by this adjunction. The right adjoint also sends objects $(\X,\qo)$ to $\qDwn(\X,\qo)$. Nevertheless, it is useful to make a distinction in the notation between monad and right adjoint, hence we will denote the right adjoint by $\qDwn'$. The first step of showing the existence of $\qDwn$ is the following lemma, for which we note that we have embeddings $1\to 2$ which map $*\in 1$ to either $0\in 2$ or $1\in 2$. We denote the respective maps by $0$ and $1$ as well. As a consequence, we have functions $`0,`1\colon\mathbf 1\to\mathbf 2$. 

\begin{lemma}\label{lem:qRelPos1isqPos2}
Let $(\X,\qo_\X)$ be a quantum poset. Then the bijection $\qSet(\X,\mathbf 2)\to \mathbf{qRel}(\X,\mathbf 1)$, $F\mapsto `1^\dag\circ F$ in \cite[Theorem B.8]{Kornell18}  restricts and corestricts to a bijection \[\qPOS((\X,\qo_\X),(\mathbf 2,\qo_\mathbf 2))\to\qRelPos((\X,\qo_\X),(\mathbf 1,I_{\mathbf 1})).\]
\end{lemma}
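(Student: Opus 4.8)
The plan is to unwind the definitions and show that the Kornell bijection $F \mapsto {}`1^\dag \circ F$ between $\qSet(\X,\mathbf 2)$ and $\qRel(\X,\mathbf 1)$ sends monotone functions $(\X,\qo_\X) \to (\mathbf 2, \qo_\mathbf 2)$ exactly to monotone relations $(\X,\qo_\X) \to (\mathbf 1, I_\mathbf 1)$. First I would recall from \cite[Theorem B.8]{Kornell18} that the inverse of this bijection recovers $F$ from a relation $R \colon \X \to \mathbf 1$, so that it suffices to prove the two implications: (i) if $F$ is monotone then $R := {}`1^\dag \circ F$ satisfies the monotone-relation conditions, and (ii) conversely if $R$ is monotone then the corresponding $F$ is monotone. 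Since $\mathbf 1$ carries the trivial order $I_\mathbf 1$, condition (1) of the definition of monotone relation degrades: $(\qop_{\mathbf 1}) \circ R \leq R$ becomes $I_\mathbf 1 \circ R \leq R$, which is automatic, so the only nontrivial requirement on $R$ is $R \circ (\qop_\X) \leq R$, equivalently (by reflexivity) $R \circ (\qop_\X) = R$.

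The key computation is therefore to translate the monotonicity inequality $F \circ (\qo_\X) \leq (\qo_\mathbf 2) \circ F$ into a statement about $R = {}`1^\dag \circ F$. Composing on the left with ${}`1^\dag$ gives ${}`1^\dag \circ F \circ (\qo_\X) \leq {}`1^\dag \circ (\qo_\mathbf 2) \circ F$, i.e. $R \circ (\qo_\X) \leq {}`1^\dag \circ (\qo_\mathbf 2) \circ F$. Dualizing (taking daggers) the target condition $R \circ (\qop_\X) \leq R$ is equivalent to $(\qo_\X) \circ R^\dag \leq R^\dag$, so I would work with $R^\dag = F^\dag \circ {}`1$ and the inequality $(\qo_\X) \circ F^\dag \circ {}`1 \leq F^\dag \circ {}`1$. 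The crucial lemma I need is that ${}`1^\dag \circ (\qo_\mathbf 2) = {}`1^\dag$, equivalently $(\qop_\mathbf 2) \circ {}`1 = {}`1$, which holds because $1 \in 2$ is the top element: since ${}`(-)$ is fully faithful and preserves the order structure, $(\qop_\mathbf 2) \circ {}`1 = {}`(\sqsupseteq \circ 1) = {}`1$ as $\sqsupseteq \circ 1 = 1$ in $\RelPos$ (the down-set of the top element being everything below, but composed the right way the top is fixed). Granting this, $R \circ (\qo_\X) \leq {}`1^\dag \circ (\qo_\mathbf 2) \circ F = {}`1^\dag \circ F = R$, which after dualizing and using reflexivity of $\qo_\X$ gives exactly $R \circ (\qop_\X) = R$, the monotone-relation condition.

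For the converse, given a monotone relation $R \colon \X \to \mathbf 1$, I recover $F \in \qSet(\X,\mathbf 2)$ as in Kornell's proof and must check $F \circ (\qo_\X) \leq (\qo_\mathbf 2) \circ F$. Here I would exploit that $\mathbf 2$ has exactly two atoms $\CC_0, \CC_1$ and that a function $\X \to \mathbf 2$ is determined by a ``clopen'' decomposition of $\X$; the relation $R$ encodes the part mapping to $1$, and monotonicity of $R$ (the condition $R \circ (\qop_\X) = R$, i.e. downward closure in $\X^\op$, equivalently upward closure with respect to $\qo_\X$) translates precisely into the statement that this part is an up-set, which is exactly what monotonicity of $F$ into $(\mathbf 2, \qo_\mathbf 2)$ demands. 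Concretely I would compose the defining relations of $F$ in terms of $R$ (using $F = ({}`0 \circ \text{something}) \vee ({}`1 \circ \bar R)$-style formulas from the cited theorem) and push the inequality through, again invoking $(\qop_\mathbf 2) \circ {}`1 = {}`1$ and the analogous triviality $(\qop_\mathbf 2) \circ {}`0 \geq {}`0$.

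The main obstacle I anticipate is bookkeeping rather than conceptual: making the identification ${}`1^\dag \circ (\qo_\mathbf 2) = {}`1^\dag$ fully rigorous at the level of matrix components (it is really the classical fact that $1$ is the top of $2$, transported along the faithful functor ${}`(-)$, together with $\qo_{\mathbf 2} = {}`{\sqsubseteq}$), and then carefully managing the direction of all the dagger/opposite flips so that the degenerate trivial-order conditions on the $\mathbf 1$ side line up with the surviving condition $R \circ (\qop_\X) = R$. Once the little lemma about $1$ being fixed by the order is in place, both implications are short diagram chases, and the restriction/corestriction claim follows because Kornell's bijection is already a bijection of the ambient homsets and we have just cut out matching subsets on each side.
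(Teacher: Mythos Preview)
The paper does not supply a proof of this lemma, so there is no argument to compare yours with. Your overall strategy is the right one---reduce to the single nontrivial condition $R\circ(\qop_\X)\leq R$ and relate it to monotonicity of $F$ through an identity involving ${}`1$ and the order on $\mathbf 2$---but the ``crucial lemma'' you state has the order on $\mathbf 2$ the wrong way round. You assert ${}`1^\dag\circ(\qo_\mathbf 2)={}`1^\dag$, equivalently $(\qop_\mathbf 2)\circ{}`1={}`1$; however, since $1$ is the \emph{top} element of $2$ we have $\{y\in 2:1\sqsupseteq y\}=2$, so $(\qop_\mathbf 2)\circ{}`1$ is the full relation $\mathbf 1\to\mathbf 2$, not ${}`1$. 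The correct identity is $(\qo_\mathbf 2)\circ{}`1={}`1$ (nothing lies strictly above the top), equivalently ${}`1^\dag\circ(\qop_\mathbf 2)={}`1^\dag$. With your version the step ${}`1^\dag\circ(\qo_\mathbf 2)\circ F=R$ fails, and even if it held you would only obtain $R\circ(\qo_\X)\leq R$, which is neither the required condition nor convertible into it by ``dualizing''.

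The repair is short. For a function $F$, monotonicity is equivalent to $(\qo_\X)\circ F^\dag\leq F^\dag\circ(\qo_\mathbf 2)$: compose $F\circ(\qo_\X)\leq(\qo_\mathbf 2)\circ F$ with $F^\dag$ on each side and use $F^\dag\circ F\geq I_\X$ and $F\circ F^\dag\leq I_\mathbf 2$. Postcomposing with ${}`1$ and invoking the corrected identity gives $(\qo_\X)\circ R^\dag\leq R^\dag$, hence $R\circ(\qop_\X)\leq R$. Equivalently, the corrected identity says precisely that ${}`1^\dag$ is a monotone relation $(\mathbf 2,\qo_\mathbf 2)\to(\mathbf 1,I_\mathbf 1)$, so $R={}`1^\dag\circ(\qop_\mathbf 2)\circ F={}`1^\dag\circ F_\diamond$ is a composite of monotone relations. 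Your sketch of the converse is too informal to stand as a proof; one workable route is to test $F\circ(\qo_\X)\leq(\qo_\mathbf 2)\circ F$ atomwise on $\mathbf 2$, where the ${}`1^\dag$-component is automatic and the ${}`0^\dag$-component reduces (via ${}`0^\dag\circ(\qo_\mathbf 2)={}`0^\dag$, as $0$ is the bottom) to an inequality on ${}`0^\dag\circ F$ that must then be linked back to $R$ using the explicit inverse of Kornell's bijection.
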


The counit of the adjunction that yields the ordinary down-set monad is the inverse membership relation $\ni$. Lemma \ref{lem:qRelPos1isqPos2} assures the existence of the quantum equivalent of this counit, which we will denote with a boldface symbol $\qni$.

\begin{lemma}\label{lem:counit downset monad}
For any quantum poset $(\X,\qo)$ there is a unique monotone relation $\qni_{(\X,\qo)}\colon\qDwn(\X,\qo)\to(\X,\qo)$ such that $
`1^\dag\circ\Eval_{\sqsubseteq}=E_{(\X,\qo)}\circ (\qni_{(\X,\qo)}\times I_{(\X,\qo)^*}).$
\end{lemma}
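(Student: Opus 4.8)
The plan is to construct $\qni_{(\X,\qo)}$ explicitly and then verify it is the unique monotone relation making the stated triangle commute. First I would unpack the defining equation: $\Eval_\sqsubseteq\colon \qDwn(\X,\qo)\times(\X,\qo)^*\to(\mathbf 2,\qo_\mathbf 2)$ is the evaluation morphism of $\qPOS$, so composing with $`1^\dag\colon\mathbf 2\to\mathbf 1$ gives a relation $\qDwn(\X,\qo)\times(\X,\qo)^*\to\mathbf 1$. By Lemma~\ref{lem:qRelPos1isqPos2} applied to the quantum poset $\qDwn(\X,\qo)\times(\X,\qo)^*$ (whose order is the product order), the assignment $G\mapsto `1^\dag\circ G$ is a bijection between $\qPOS(\qDwn(\X,\qo)\times(\X,\qo)^*,(\mathbf 2,\qo_\mathbf 2))$ and $\qRelPos(\qDwn(\X,\qo)\times(\X,\qo)^*,(\mathbf 1,I_\mathbf 1))$; since $\Eval_\sqsubseteq$ lies in the left-hand set, $`1^\dag\circ\Eval_\sqsubseteq$ is a monotone relation $\qDwn(\X,\qo)\times(\X,\qo)^*\to(\mathbf 1,I_\mathbf 1)$.

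Next I would use the compact closed structure of $\qRelPos$ established in Theorem~\ref{thm:qRelPos is compact}. The equation to solve, $`1^\dag\circ\Eval_\sqsubseteq = E_{(\X,\qo)}\circ(\qni_{(\X,\qo)}\times I_{(\X,\qo)^*})$, says precisely that $\qni_{(\X,\qo)}$ is obtained from the monotone relation $`1^\dag\circ\Eval_\sqsubseteq\colon \qDwn(\X,\qo)\times(\X,\qo)^*\to(\mathbf 1,I_\mathbf 1)$ by "name/coname" transposition across the counit $E_{(\X,\qo)}$ of the compact structure on $(\X,\qo)$. Concretely, set
\[
\qni_{(\X,\qo)} \;:=\; \bigl(I_{\qDwn(\X,\qo)}\times E_{(\X,\qo)}\bigr)\circ\bigl((\,`1^\dag\circ\Eval_\sqsubseteq\,)^{\flat}\times I_{(\X,\qo)}\bigr)\circ(\text{unitors}),
\]
i.e. the standard currying in a compact closed category; because all of these are morphisms of $\qRelPos$ (using that $(-)_\diamond$-images of unitors and the compact unit/counit are monotone relations, and that monotone relations compose by the first Lemma), $\qni_{(\X,\qo)}$ is a monotone relation $\qDwn(\X,\qo)\to(\X,\qo)$. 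Existence of the triangle then follows from the snake equations in $\qRelPos$, and uniqueness follows because transposition across a compact-closed dual is a bijection (its inverse is the reverse transposition), so any $R$ with $E_{(\X,\qo)}\circ(R\times I_{(\X,\qo)^*}) = `1^\dag\circ\Eval_\sqsubseteq$ must equal $\qni_{(\X,\qo)}$.

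The main obstacle I anticipate is bookkeeping: one must be careful that the compact structure being used on the right-hand object is that of $(\X,\qo)^*$ as a quantum poset, i.e. $E_{(\X,\qo)}$ is the poset-level counit from Theorem~\ref{thm:qRelPos is compact} (built from $E_\X$ precomposed with $(\qop\times\qop^*)$), not merely the $\qRel$-level counit, and that $\qDwn(\X,\qo)$ must be equipped with its order $\qsubseteq_{(\X,\qo)}$ throughout. The one genuinely content-bearing point is the appeal to Lemma~\ref{lem:qRelPos1isqPos2} to know that $`1^\dag\circ\Eval_\sqsubseteq$ is a \emph{monotone} relation (not just a relation); once that is in hand, the rest is the formal manipulation of a compact closed category, and the uniqueness clause is automatic from the invertibility of currying. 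I would also remark, as a sanity check worth including, that unwinding the definitions shows $\qni_{(\X,\qo)}$ restricts along $`(-)$ to the ordinary inverse-membership relation $\ni$ on $\Dwn(S)$ when $(\X,\qo) = `(S,\sqsubseteq)$, matching the classical situation described just before the lemma.
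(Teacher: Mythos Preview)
The paper does not include a proof of this lemma (it explicitly states that only Theorem~\ref{thm:Galois} and Theorem~\ref{thm:quantum down sets form quantum suplattice} are proved in the text), so there is nothing to compare against directly. That said, your approach is the natural one and is correct: Lemma~\ref{lem:qRelPos1isqPos2} guarantees that $`1^\dag\circ\Eval_\sqsubseteq$ is a morphism of $\qRelPos$ from $\qDwn(\X,\qo)\times(\X,\qo)^*$ to $(\mathbf 1,I_\mathbf 1)$, and then the compact closed structure of $\qRelPos$ from Theorem~\ref{thm:qRelPos is compact} gives the standard bijection $\qRelPos(\A,\B)\cong\qRelPos(\A\times\B^*,\mathbf 1)$, $R\mapsto E_\B\circ(R\times I_{\B^*})$, yielding both existence and uniqueness of $\qni_{(\X,\qo)}$ at once.

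Two small points. First, your displayed formula for $\qni_{(\X,\qo)}$ does not typecheck as written (the unexplained $(\cdot)^\flat$ and the placement of $E_{(\X,\qo)}$ are off); the clean inverse to $R\mapsto E_{(\X,\qo)}\circ(R\times I_{(\X,\qo)^*})$ is
\[
f\;\longmapsto\; R_{(\X,\qo)}\circ\bigl(f\times I_{(\X,\qo)}\bigr)\circ\bigl(I_{\qDwn(\X,\qo)}\times H_{(\X,\qo)}\bigr)\circ R_{\qDwn(\X,\qo)}^{-1},
\]
with $R$ the right unitor and $H$ the unit of the compact structure in $\qRelPos$. Second, your caution about bookkeeping is warranted: the symbol $I_{(\X,\qo)^*}$ in the statement is the identity of $\qRelPos$, hence equals $\qop^*$, and the product $\times$ on morphisms must be read as the monoidal product of $\qRelPos$; once this is fixed the snake-equation argument goes through verbatim.
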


\begin{theorem}\label{thm:downset}
There is a functor $\qDwn'\colon\qRelPos\to\qPOS$ whose action on objects is given by $(\X,\qo)\mapsto\D(\X,\qo)$, and which is right adjoint to the functor $(-)_\diamond\colon\qPOS\to\qRelPos$. The $(\X,\qo)$-component of the counit $\qni$ of this adjunction is the monotone relation constructed in Lemma \ref{lem:counit downset monad}. The unit of the adjunction is denoted by $\qdown$.  Its $(\X,\qo)$-component is an order embedding that is the unique monotone function $(\X,\qo)\to\D(\X,\qo)$ such that
$\qni_{(\X,\qo)}\circ\qdown_{(\X,\qo)}=I_{(\X,\qo)}.$
\end{theorem}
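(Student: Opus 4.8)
The plan is to build the right adjoint $\qDwn'$ explicitly and then exhibit the adjunction via a natural bijection of homsets, using Lemma \ref{lem:qRelPos1isqPos2} as the computational engine. First I would observe that, by definition, $\qDwn(\X,\qo) = [(\X,\qo)^*,(\mathbf 2,\qo_\mathbf 2)]_\sqsubseteq$, so by the monoidal closed structure of $\qPOS$ there is for every quantum poset $(\Y,\qo_\Y)$ a natural bijection
\[
\qPOS\big((\Y,\qo_\Y),\qDwn(\X,\qo)\big) \;\cong\; \qPOS\big((\Y,\qo_\Y)\times(\X,\qo)^*,(\mathbf 2,\qo_\mathbf 2)\big).
\]
Next I would chain this with the analogue of Lemma \ref{lem:qRelPos1isqPos2} for the product quantum poset $(\Y,\qo_\Y)\times(\X,\qo)^*$: the bijection $F\mapsto {`1}^\dag\circ F$ restricts to
\[
\qPOS\big((\Y,\qo_\Y)\times(\X,\qo)^*,(\mathbf 2,\qo_\mathbf 2)\big)\;\cong\;\qRelPos\big((\Y,\qo_\Y)\times(\X,\qo)^*,(\mathbf 1,I_\mathbf 1)\big).
\]
Finally, since $\qRelPos$ is compact closed (Theorem \ref{thm:qRelPos is compact}) with $(\X,\qo)^*$ the dual of $(\X,\qo)$, there is a natural bijection
\[
\qRelPos\big((\Y,\qo_\Y)\times(\X,\qo)^*,(\mathbf 1,I_\mathbf 1)\big)\;\cong\;\qRelPos\big((\Y,\qo_\Y),(\X,\qo)\big),
\]
the standard ``currying against the unit/counit'' bijection of a compact closed category. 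Composing these three bijections yields
\[
\qPOS\big((\Y,\qo_\Y),\qDwn(\X,\qo)\big)\;\cong\;\qRelPos\big((\Y,\qo_\Y),(\X,\qo)\big),
\]
naturally in $(\Y,\qo_\Y)$, which is exactly an adjunction $(-)_\diamond\dashv\qDwn'$ once one checks naturality in $(\X,\qo)$ as well, so that $\qDwn'$ is a functor $\qRelPos\to\qPOS$ on objects $(\X,\qo)\mapsto\D(\X,\qo)$.

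From the composite bijection I would extract the counit: applying the isomorphism with $(\Y,\qo_\Y)=\qDwn(\X,\qo)$ to the identity map $\mathrm{id}_{\qDwn(\X,\qo)}$ produces a monotone relation $\qDwn(\X,\qo)\to(\X,\qo)$, and I would check that it coincides with the $\qni_{(\X,\qo)}$ of Lemma \ref{lem:counit downset monad}. This is essentially a matter of unwinding the three bijections above on the identity: the first sends $\mathrm{id}$ to $\Eval_\sqsubseteq$, the second to ${`1}^\dag\circ\Eval_\sqsubseteq$, and the third expresses this as $E_{(\X,\qo)}\circ(\qni\times I_{(\X,\qo)^*})$ for the monotone relation $\qni$ named in Lemma \ref{lem:counit downset monad} — so the defining equation of $\qni$ in that lemma is precisely the statement that $\qni_{(\X,\qo)}$ is the image of the identity, i.e.\ the counit component. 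For the unit $\qdown_{(\X,\qo)}\colon(\X,\qo)\to\qDwn(\X,\qo)$, one of the triangle identities for the adjunction says $\qni_{(\X,\qo)}\circ(\qdown_{(\X,\qo)})_\diamond = I_{(\X,\qo)}$ in $\qRelPos$; unwinding $(\qdown_{(\X,\qo)})_\diamond=(\qop_\X)\circ\qdown_{(\X,\qo)}$ and using that $\qop_\X$ is the identity of $\qRelPos$ on $(\X,\qo)$, and that $\qop_\X$ absorbs on the right of a monotone relation into $(\X,\qo)$ (second Lemma on identity monotone relations), gives $\qni_{(\X,\qo)}\circ\qdown_{(\X,\qo)}=I_{(\X,\qo)}$. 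Uniqueness of a monotone function with this property, and the fact that it is an order embedding, then follow: if $G$ is another monotone function with $\qni_{(\X,\qo)}\circ G = I_{(\X,\qo)}$, transposing $G$ across the adjunction and comparing with the transpose of $\qdown_{(\X,\qo)}$ forces $G=\qdown_{(\X,\qo)}$; and $\qdown_{(\X,\qo)}$ is an order embedding because the universal property of the unit, combined with the fact that $\qni$ is a two-sided ``membership'' relation, lets one recover $\qo_\X$ as $\qdown_{(\X,\qo)}^\dagger\circ(\qsubseteq_{(\X,\qo)})\circ\qdown_{(\X,\qo)}$ by a direct computation with the defining equation.

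The main obstacle I expect is not any single one of these bijections in isolation — each is a standard categorical move — but verifying that the $\qPOS$-to-$\qRelPos$ correspondence of Lemma \ref{lem:qRelPos1isqPos2} is compatible with the compact closed structure of $\qRelPos$ described in Theorem \ref{thm:qRelPos is compact}, i.e.\ that currying a monotone map $(\Y,\qo_\Y)\times(\X,\qo)^*\to(\mathbf 1,I_\mathbf 1)$ using $E_{(\X,\qo)}$ and $H_{(\X,\qo)}$ (which are built from $\qop$-twisted versions of the ordinary $\qRel$ unit and counit) really does land inside the monotone relations and agrees with the closed structure of $\qPOS$. Concretely, one must check that the ``names'' and ``conames'' transported through ${`1}^\dag\circ(-)$ interact correctly with the extra $(\qop)$-composites appearing in $E_{(\X,\qo)}$, $H_{(\X,\qo)}$; this is where the reflexivity, transitivity and antisymmetry of the orders all get used, via the identities $(\qop_\Y)\circ V=V=V\circ(\qop_\X)$ for monotone relations $V$. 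I would isolate this as a short lemma (``the closed structure of $\qPOS$ restricted to targets $(\mathbf 2,\qo_\mathbf 2)$, transported along Lemma \ref{lem:qRelPos1isqPos2}, is the compact closed structure of $\qRelPos$ restricted to $(\mathbf 1,I_\mathbf 1)$'') and prove it by a diagram chase on components, reducing everything to the corresponding identities in $\mathbf{FdHilb}$ for the associator, symmetry, $\eta$ and $\epsilon$. Once that compatibility is in hand, the rest of the theorem — functoriality of $\qDwn'$, naturality of unit and counit, the triangle identities, and the explicit descriptions of $\qni$ and $\qdown$ — follows formally.
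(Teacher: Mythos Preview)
The paper does not actually include a proof of this theorem; it only states it, having announced in the overview that only the proofs of Theorem~\ref{thm:Galois} and Theorem~\ref{thm:quantum down sets form quantum suplattice} are given. That said, your three-step chain of bijections (monoidal closure of $\qPOS$, then Lemma~\ref{lem:qRelPos1isqPos2}, then compactness of $\qRelPos$) is exactly the architecture the paper sets up: Lemma~\ref{lem:qRelPos1isqPos2} and Lemma~\ref{lem:counit downset monad} are stated precisely so that this composite can be formed, and your identification of the counit via $\mathrm{id}_{\qDwn(\X)}\mapsto\Eval_\sqsubseteq\mapsto{`1}^\dag\circ\Eval_\sqsubseteq\mapsto\qni_{(\X,\qo)}$ matches the defining equation in Lemma~\ref{lem:counit downset monad} on the nose. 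The obstacle you flag---compatibility of the $\qPOS$ closed structure (transported through ${`1}^\dag\circ(-)$) with the $\qRelPos$ compact structure built from the $\qop$-twisted unit and counit of Theorem~\ref{thm:qRelPos is compact}---is real but routine once one uses $(\qop)\circ V=V=V\circ(\qop)$ systematically; your plan to isolate it as a lemma is sensible.

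One slip to fix: in your derivation of $\qni_{(\X,\qo)}\circ\qdown_{(\X,\qo)}=I_{(\X,\qo)}$ from the triangle identity, you write $(\qdown_{(\X,\qo)})_\diamond=(\qop_\X)\circ\qdown_{(\X,\qo)}$. By the definition in Lemma~\ref{lem:embedding qPOS into qRelPos}, $F_\diamond=(\qop_\Y)\circ F$ uses the order on the \emph{codomain}, so here $(\qdown)_\diamond=(\qsupseteq_{(\X,\qo)})\circ\qdown$, with $\qsupseteq_{(\X,\qo)}$ the opposite of the order on $\D(\X,\qo)$. The argument still goes through, because $\qni_{(\X,\qo)}$ is a monotone relation \emph{out of} $\qDwn(\X,\qo)$ and therefore absorbs $\qsupseteq_{(\X,\qo)}$ on the right; but the order you named and the absorption direction you invoked are the wrong ones.
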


\begin{definition}
    We define the \emph{quantum down-set monad} $\qDwn$ to be the monad induced by the adjunction $(-)_\diamond\dashv\qDwn'$, so $\qDwn=\qDwn'\circ(-)_\diamond$. We denote its multiplication by $\qun$ and its unit by $\qdown$.
\end{definition}
Note that the multiplication $\qun$ is a boldfaced version of the usual union $\bigcup$ of ordinary sets. 

\section{Opposite quantum posets and upper sets}

Let $X$ be an ordinary poset. Then the complementation operator provides a bijection between the set $D(X)$ of down sets of $X$ and the set $U(X)$ of upper sets of $X$. Both $D(X)$ and $U(X)$ can be extended to endofunctors on $\POS$, for which we have to order the former by inclusion and the latter by containment. Then, writing $\mathrm{Dwn}(X)=(D(X),\subseteq)$ and $\mathrm{Up}(X)=(U(X),\supseteq)$, the bijection extends to an  order isomorphism $\mathrm{Dwn}(X)\to\mathrm{Up}(X)$.  In the quantum world, we can obtain a similar order isomorphism by showing that we can also construct a different right adjoint of $(-)_\diamond$ in terms of upper sets, which, by the uniqueness of right adjoints up to natural isomorphism, should be naturally isomorphic to $\qDwn'$. This natural isomorphism is precisely the operation of taking complements. Before we construct this right adjoint in terms of upper sets, we first have to extend the operation of taking opposite quantum posets to an endofunctor on $\qPOS$.

\begin{lemma}\label{lem:functions and the opposite order}
    There is an endofunctor $(-)^\op\colon\qPOS\to\qPOS$, defined on objects by $(\X,\qo_\X)\mapsto (\X,\qop_\X)$ and which maps any monotone map $F\colon(\X,\qo_
    \X)\to(\Y,\qo_\Y)$ to the monotone map $F\colon(\X,\qop_\X)\to(\Y,\qop_\Y)$. This functor $(-)^\op$ is involutory, i.e., $(-)^{\op\op}=1_{\qPOS}$, hence an isomorphism of categories.
\end{lemma}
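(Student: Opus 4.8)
The plan is to verify the three claims in Lemma~\ref{lem:functions and the opposite order} in order: that $(-)^\op$ is well-defined on morphisms, that it is functorial, and that it is involutory and hence an isomorphism of categories. The only genuine content is the first claim, namely that if $F\colon(\X,\qo_\X)\to(\Y,\qo_\Y)$ is monotone, then the \emph{same} underlying function $F\colon\X\to\Y$ is also monotone as a map $(\X,\qop_\X)\to(\Y,\qop_\Y)$; everything else is then immediate because $(-)^\op$ leaves both objects' underlying quantum sets and all functions between them untouched, so composition and identities are preserved on the nose, and applying $(-)^\op$ twice returns $\qo_\X$ to $\qo_\X$ (as $(\qop_\X)^\dag=\qo_\X$, using that $\dagger$ is involutory in $\qRel$).

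For the well-definedness on morphisms, I would argue as follows. Monotonicity of $F$ as a map $(\X,\qo_\X)\to(\Y,\qo_\Y)$ means $F\circ(\qo_\X)\leq(\qo_\Y)\circ F$. I want to deduce $F\circ(\qop_\X)\leq(\qop_\Y)\circ F$. The natural move is to take daggers: in $\qRel$ the dagger is an order-isomorphism on each homset and is contravariant with respect to composition, so from $F\circ(\qo_\X)\leq(\qo_\Y)\circ F$ we get $(\qo_\X)^\dag\circ F^\dag\leq F^\dag\circ(\qo_\Y)^\dag$, i.e.\ $(\qop_\X)\circ F^\dag\leq F^\dag\circ(\qop_\Y)$. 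Now I would compose with $F$ on the left and on the right and use the defining inequalities of a function, $F^\dag\circ F\geq I_\X$ and $F\circ F^\dag\leq I_\Y$: composing on the left with $F$ and on the right with $F$ gives $F\circ(\qop_\X)\circ F^\dag\circ F\leq F\circ F^\dag\circ(\qop_\Y)\circ F$, and then $F\circ(\qop_\X)\leq F\circ(\qop_\X)\circ F^\dag\circ F\leq F\circ F^\dag\circ(\qop_\Y)\circ F\leq(\qop_\Y)\circ F$, using monotonicity of composition in the quantaloid $\qRel$ at each step. This is exactly the asserted monotonicity of $F\colon(\X,\qop_\X)\to(\Y,\qop_\Y)$. (One should double-check the direction of the dagger-reversal and that $(\qo)^\dag=\qop$ by definition of the opposite order, but these are routine.)

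Having established that, functoriality is trivial: $(-)^\op$ sends the identity function $I_\X$ to itself, and sends $G\circ F$ to $G\circ F$, so it strictly preserves identities and composition. Involutivity is equally quick: on objects, $((\X,\qo_\X)^\op)^\op=(\X,(\qo_\X^\dag)^\dag)=(\X,\qo_\X)$ since $\dagger$ is involutory in the dagger category $\qRel$; on morphisms $(-)^\op$ is literally the identity assignment, so applying it twice is the identity. An endofunctor that is its own two-sided inverse is in particular an isomorphism of categories, which is the last assertion.

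The main obstacle is the morphism well-definedness step: one must be careful that the dagger genuinely reverses the inequality the right way (it does not, since $\dagger$ is order-preserving on homsets) and that the ``sandwiching'' with $F$ and $F^\dag$ via the function inequalities is performed so that the weak inequalities chain in the correct direction. Everything downstream of that lemma-internal computation is bookkeeping.
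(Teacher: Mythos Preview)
Your argument is correct. The paper does not include a proof of this lemma (it explicitly states that only the proofs of Theorem~\ref{thm:Galois} and Theorem~\ref{thm:quantum down sets form quantum suplattice} are given), so there is no in-paper proof to compare against; however, your computation is exactly the standard one, and the paper itself uses the very same dagger-then-sandwich maneuver inside the proof of Theorem~\ref{thm:Galois} when it writes ``$G\circ(\qop_\Y)\leq(\qop_\X)\circ G$, which implies $(\qo_\Y)\circ G^\dag\leq G^\dag\circ(\qo_\X)$ after taking daggers.'' Your chain
\[
F\circ(\qop_\X)\;\leq\; F\circ(\qop_\X)\circ F^\dag\circ F\;\leq\; F\circ F^\dag\circ(\qop_\Y)\circ F\;\leq\;(\qop_\Y)\circ F
\]
is valid because composition in $\qRel$ is monotone in each argument, $\dagger$ is order-preserving on homsets, and $F^\dag\circ F\geq I_\X$, $F\circ F^\dag\leq I_\Y$ are precisely the function axioms. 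The remaining claims (functoriality, involutivity) are indeed bookkeeping as you say.
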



\begin{proposition}\label{prop:XopYopisXYop}
Let $(\X,\qo_\X)$ and $(\Y,\qo_\Y)$ be quantum posets. Then $[\X^\op,\Y^\op]_\sqsubseteq=[\X,\Y]^\op_\sqsubseteq.$
\end{proposition}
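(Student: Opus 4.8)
The plan is to exhibit an order isomorphism between the two quantum posets that is the identity on the underlying quantum set, so that the equality of objects in $\qPOS$ holds literally (both sides have the same underlying quantum set, and I will check that the two orders coincide). First I would recall that, by definition of the internal hom, the underlying quantum set of $[\X^\op,\Y^\op]_\sqsubseteq$ is the same as that of $[\X,\Y]_\sqsubseteq$: the internal hom in $\qPOS$ is built on the underlying quantum set of the internal hom in $\qSet$ (which only sees the underlying quantum sets $\X$, $\Y$, not the orders), and the operation $(-)^\op$ of Lemma~\ref{lem:functions and the opposite order} leaves underlying quantum sets unchanged. So the content is entirely in comparing the two orders on this common quantum set.

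Next I would use the universal property of the internal hom to pin down each order. The order $\sqsubseteq$ on $[\X,\Y]_\sqsubseteq$ is characterized by the requirement that $\Eval_\sqsubseteq\colon[\X,\Y]_\sqsubseteq\times\X\to\Y$ be monotone and be universal among such; concretely, for any quantum poset $\Z$ and monotone $G\colon\Z\times\X\to\Y$ there is a unique monotone curry $\Lambda(G)\colon\Z\to[\X,\Y]_\sqsubseteq$. Applying $(-)^\op$, which by Lemma~\ref{lem:functions and the opposite order} is an isomorphism of categories and (by Proposition~\ref{prop:XopYopisXYop}'s companion reasoning, or directly from $(\X\times\Y)^\op=\X^\op\times\Y^\op$, which is immediate from $\qo_\X\times\qo_\Y$ having opposite $\qop_\X\times\qop_\Y$) commutes with $\times$, I get that $\Eval_\sqsubseteq^\op\colon[\X,\Y]_\sqsubseteq^\op\times\X^\op\to\Y^\op$ is monotone and has the same universal property, now with respect to monotone maps $\Z^\op\times\X^\op\to\Y^\op$ — i.e. it is an evaluation morphism for $[\X^\op,\Y^\op]_\sqsubseteq$. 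By uniqueness of the internal hom (it is a representing object, hence unique up to unique isomorphism), there is a canonical order isomorphism $[\X^\op,\Y^\op]_\sqsubseteq\to[\X,\Y]_\sqsubseteq^\op$; I would then observe that under the identification of underlying quantum sets this isomorphism is $I$ (because the currying operations agree on underlying relations — currying in $\qSet$ is the same on both sides), so the two orders are literally equal.

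Alternatively, and perhaps cleaner to write, I would argue pointwise using the pointwise order of functions introduced at the end of Section~2: for functions $F,G\colon\Z\to[\X,\Y]_\sqsubseteq$ one has $F\sqsubseteq G$ iff a corresponding inequality holds between the uncurried relations, and the order on the internal hom is determined by taking $\Z=\mathbf 1$. Then $F\sqsubseteq_{[\X,\Y]^\op_\sqsubseteq}G$ means $F\sqsupseteq_{[\X,\Y]_\sqsubseteq}G$, which unwinds to $\Eval_\sqsubseteq\circ(G\times I_\X)\sqsupseteq\Eval_\sqsubseteq\circ(F\times I_\X)$ in the pointwise order on $\qSet(\X,\Y)$ valued in $(\Y,\qo_\Y)$; and $F\sqsubseteq_{[\X^\op,\Y^\op]_\sqsubseteq}G$ unwinds to the corresponding inequality in the pointwise order valued in $(\Y,\qop_\Y)$, which is exactly the reverse. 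Matching these two unwindings — using that $\Eval$ for $[\X^\op,\Y^\op]_\sqsubseteq$ has the same underlying relation as $\Eval_\sqsubseteq$ — gives the equality of orders.

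The main obstacle I anticipate is bookkeeping rather than conceptual: making fully precise the claim that the evaluation morphism and the currying bijection for $[\X^\op,\Y^\op]_\sqsubseteq$ have \emph{the same underlying binary relations} as those for $[\X,\Y]_\sqsubseteq$, so that "identity on underlying quantum sets" is legitimate. This requires tracing through the construction of the internal hom in $\qSet$ and checking that the monotonicity condition $F\circ(\qo_\X)\le(\qo_\Y)\circ F$ passed through $(-)^\op$ becomes $F\circ(\qop_\X)\le(\qop_\Y)\circ F$ verbatim, with no change to $F$ itself — which is exactly what Lemma~\ref{lem:functions and the opposite order} says on morphisms. Once that identification is granted, the comparison of the two orders is a short diagram chase via the universal property.
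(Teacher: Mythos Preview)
The paper states this proposition without proof, so there is no argument in the text to compare against. Your approach via the universal property is the natural one and is essentially correct: since $(-)^\op\colon\qPOS\to\qPOS$ is an involutory isomorphism of categories (Lemma~\ref{lem:functions and the opposite order}) and is strictly monoidal (because $(\qo_\X\times\qo_\Y)^\dag=\qop_\X\times\qop_\Y$), applying it to the evaluation $\Eval_\sqsubseteq\colon[\X,\Y]_\sqsubseteq\times\X\to\Y$ produces an evaluation exhibiting $[\X,\Y]_\sqsubseteq^\op$ as an internal hom from $\X^\op$ to $\Y^\op$, and the induced comparison iso is the identity on the underlying quantum set.

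One point deserves correction. You write that ``the internal hom in $\qPOS$ is built on the underlying quantum set of the internal hom in $\qSet$ (which only sees the underlying quantum sets $\X$, $\Y$, not the orders).'' That is not how the $\qPOS$-internal hom is constructed: already classically $[\X,\Y]_\sqsubseteq$ is the sub-object of monotone maps inside the $\qSet$-exponential, so its underlying quantum set \emph{does} depend on the orders. This does not break your argument, but the reason the underlying quantum sets of $[\X,\Y]_\sqsubseteq$ and $[\X^\op,\Y^\op]_\sqsubseteq$ agree is not that the orders are invisible to the construction; it is that a function is monotone $(\X,\qo_\X)\to(\Y,\qo_\Y)$ iff it is monotone $(\X,\qop_\X)\to(\Y,\qop_\Y)$, which is exactly the content of Lemma~\ref{lem:functions and the opposite order} (equivalently, $F\circ(\qo_\X)\le(\qo_\Y)\circ F$ iff $(\qo_\X)\le F^\dag\circ(\qo_\Y)\circ F$, and the latter is self-dual under $(-)^\dag$). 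With that correction your identification of underlying quantum sets is justified, and the rest of your comparison of orders via the universal property (or, equivalently, via your pointwise-order argument) goes through. The residual ``bookkeeping'' you flag---that the specific construction of $[\cdot,\cdot]_\sqsubseteq$ in \cite{KLM22} yields literal equality rather than mere isomorphism---is exactly what needs to be traced, and your outline for doing so is sound.
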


\begin{definition}
    Let $(\X,\qo)$ be a quantum poset. Then we define the \emph{quantum poset of upsets of} $(\X,R)$ as the quantum poset $\qUp(\X,\qo):=[(\X,\qo)^*,(\mathbf 2,\qop_\mathbf 2)]_\sqsubseteq$. We denote its underlying quantum set by $\U(\X,\qo)$.
\end{definition}
The previous proposition yields $\qUp(\X,\qo)=[\X^*,\mathbf 2^\op]_\sqsubseteq=[(\X^\op)^*,\mathbf 2]_\sqsubseteq^\op=
\qDwn(\X,\qop)^\op$, whence $\U(\X,\qo)=\D(\X,\qop)$.
The other right adjoint is obtained by constructing a different counit, namely the inverse non-membership relation. This is done by taking $\mathbf 2^\op$ instead of $\mathbf 2$ and $`0^\dag\circ F$ instead of $`1^\dag\circ F$ in Lemma \ref{lem:qRelPos1isqPos2}: 
Then given a quantum poset $(\X,\qo)$, we have that $\qnotni_{(\X,\qo)}$ is the unique monotone relation $\qUp(\X,\qo)\to(\X,\qo)$ such that $
`0^\dag\circ\Eval_{\sqsubseteq}=E_{(\X,\qo)}\circ (\qnotni_{(\X,\qo)}\times I_{(\X,\qo)^*}).$

\begin{theorem}\label{thm:upset}
There is a functor $\qUp'\colon\mathbf{qRelPos}\to\qPOS$ whose action on objects is given by $(\X,\qo)\mapsto\qUp(\X,\qo)$ that is right adjoint to the functor $(-)_\diamond\colon\qPOS\to\mathbf{qRelPos}$. Its counit is denoted by $\qnotni$, and its  $(\X,\qo)$-component of the counit is the monotone relation $\qnotni_{(\X,\qo)}$.
\end{theorem}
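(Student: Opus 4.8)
\textbf{Proof proposal for Theorem~\ref{thm:upset}.}

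The plan is to obtain $\qUp'$ and its adjunction directly by transporting the already-established adjunction $(-)_\diamond \dashv \qDwn'$ of Theorem~\ref{thm:downset} along the isomorphism of categories $(-)^\op \colon \qPOS \to \qPOS$ of Lemma~\ref{lem:functions and the opposite order}, together with the identity $\U(\X,\qo)=\D(\X,\qop)$ already derived in the excerpt. First I would define $\qUp'\colon \qRelPos\to\qPOS$ on objects by $(\X,\qo)\mapsto\qUp(\X,\qo)=\qDwn(\X,\qop)^\op$, and on morphisms by $\qUp'(V)=\big(\qDwn'(V^{??})\big)^\op$, where the inner morphism is the monotone relation $V$ regarded as going between the opposite quantum posets. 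For this to typecheck I first need the observation that $(-)^\op$ also lifts to $\qRelPos$: a binary relation $V\colon\X\to\Y$ is a monotone relation $(\X,\qo_\X)\to(\Y,\qo_\Y)$ iff it is a monotone relation $(\X,\qop_\X)\to(\Y,\qop_\Y)$ after swapping source and target via the dagger, i.e. $V^\dag$ is monotone $(\Y,\qop_\Y)\to(\X,\qop_\X)$; alternatively $V$ itself, read with the opposite orders, satisfies conditions (1)--(2) of the definition of monotone relation with the roles of $\qop$ and $\qo$ interchanged. I would isolate this as a small lemma (an $\op$-functor on $\qRelPos$) if it is not immediate from the compact-closed structure of Theorem~\ref{thm:qRelPos is compact}.

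Granting that, the adjunction $(-)_\diamond\dashv\qDwn'$ conjugated by the involution $(-)^\op$ on both sides (which commutes with $(-)_\diamond$, since $F_\diamond = (\qop_\Y)\circ F$ becomes $(\qo_\Y)\circ F$ under $\op$, which is exactly $(F^\op)_\diamond$ computed in the opposite posets) yields an adjunction $(-)_\diamond\dashv\qUp'$ with $\qUp' = (-)^\op\circ\qDwn'\circ(-)^{\op}$, where the outer $\op$ is on $\qPOS$ and the inner one on $\qRelPos$. The counit of the transported adjunction is then $(\varepsilon^{\qDwn})^\op$; I would check that this is precisely the monotone relation $\qnotni_{(\X,\qo)}$ characterized in the excerpt by $`0^\dag\circ\Eval_\sqsubseteq = E_{(\X,\qo)}\circ(\qnotni_{(\X,\qo)}\times I_{(\X,\qo)^*})$. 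This uses the construction in Lemma~\ref{lem:qRelPos1isqPos2} with $\mathbf 2^\op$ in place of $\mathbf 2$ and $`0$ in place of $`1$, together with the identification $\qUp(\X,\qo)=\qDwn(\X,\qop)^\op$; the complementation order-isomorphism $\qDwn(X)\to\qUp(X)$ mentioned before the proposition is the concrete shadow of the abstract $\op$-conjugation, so matching the two descriptions of the counit is the content to verify.

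The main obstacle I anticipate is not the abstract transport but the bookkeeping in the second step: making sure that the involution $(-)^\op$ really does commute with $(-)_\diamond$ up to the identity (and not merely up to a natural isomorphism), so that one genuinely gets an adjunction with the \emph{same} left adjoint $(-)_\diamond$ rather than a twisted variant. This hinges on the fact that $(-)^\op$ is the identity on underlying quantum sets and only swaps $\qo\leftrightarrow\qop$, so $(F^\op)_\diamond$ and $F_\diamond$ have literally the same underlying binary relation; once that is pinned down, the triangle identities for the new adjunction are formal consequences of those for $(-)_\diamond\dashv\qDwn'$. A secondary nuisance is confirming that $\qUp'$ as defined is actually functorial on $\qRelPos$ — but this is automatic once the $\op$-functor on $\qRelPos$ is in place, since $\qUp'$ is then a composite of three functors. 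I would therefore structure the write-up as: (i) the $\op$-functor on $\qRelPos$; (ii) commutation of $(-)^\op$ with $(-)_\diamond$; (iii) conjugation of the adjunction; (iv) identification of the transported counit with $\qnotni$.
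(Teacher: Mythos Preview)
Your transport-along-$(-)^\op$ strategy has a genuine gap at exactly the point you flag as the ``main obstacle'', and it does not close. There is no covariant endofunctor on $\qRelPos$ that sends $(\X,\qo)$ to $(\X,\qop)$ and makes the square with $(-)_\diamond$ commute. Your two candidates both fail: the assignment $V\mapsto V^\dag$ does land in $\qRelPos\big((\Y,\qop_\Y),(\X,\qop_\X)\big)$, but it is \emph{contravariant}, so composing with $\qDwn'$ and $(-)^\op$ yields a contravariant functor, not a right adjoint of $(-)_\diamond$. The other candidate, ``$V$ itself read with the opposite orders'', is simply not monotone in general: from $(\qop_\Y)\circ V=V=V\circ(\qop_\X)$ one cannot conclude $(\qo_\Y)\circ V=V$ or $V\circ(\qo_\X)=V$. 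Worse, your later claim that $(F^\op)_\diamond$ and $F_\diamond$ ``have literally the same underlying binary relation'' is false: $F_\diamond=(\qop_\Y)\circ F$ while $(F^\op)_\diamond=(\qo_\Y)\circ F$, and these are distinct binary relations whenever $\qo_\Y$ is nontrivial. So the conjugation does not reproduce the \emph{same} left adjoint $(-)_\diamond$, and the argument collapses.

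The paper does not attempt any such transport. Its route is the direct one already signalled in the text preceding the theorem: rerun the proof of Theorem~\ref{thm:downset} verbatim, replacing $(\mathbf 2,\qo_{\mathbf 2})$ by $(\mathbf 2,\qop_{\mathbf 2})$ and the bijection $F\mapsto `1^\dag\circ F$ of Lemma~\ref{lem:qRelPos1isqPos2} by $F\mapsto `0^\dag\circ F$. This produces the counit $\qnotni_{(\X,\qo)}$ via the displayed equation $`0^\dag\circ\Eval_\sqsubseteq=E_{(\X,\qo)}\circ(\qnotni_{(\X,\qo)}\times I_{(\X,\qo)^*})$, and the universal property of $\qnotni$ then gives the adjunction exactly as for $\qni$. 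The natural isomorphism $\qDwn'\cong\qUp'$ (complementation, Corollary~\ref{cor:complement}) is a \emph{consequence} of having two right adjoints to the same functor, not an input to the construction of $\qUp'$; your plan tries to run this implication backwards, which is why it needs the nonexistent $\op$-functor on $\qRelPos$.
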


\begin{definition}
    The monad $\qUp'\circ(-)_\diamond$ on $\qPOS$ is denoted by $\qUp$. Note that its action on objects agrees with the assignment $(\X, \qo)\mapsto\qUp(\X,\qo)$.
\end{definition}

\begin{corollary}\label{cor:complement}
There is a natural isomorphism $C\colon\qDwn'\to \qUp'$ between functors $\mathbf{qRelPos}\to\qPOS$ that induces a natural isomorphism between the endofunctors $\qDwn\to \qUp$ on $\qPOS$ that we also denote by $C$. In particular, for a quantum poset $(\X,\qo)$, the $(\X,\qo)$-component of $C$ is an order isomorphism $C_{(\X,\qo)}\colon\qDwn(\X,\qo)\to\qUp(\X,\qo)$ that is natural in $(\X,\qo)$ such that $\qni_{(\X,\qo)}\circ C_{(\X,\qo)}=\qnotni_{(\X,\qo)}$. 
\end{corollary}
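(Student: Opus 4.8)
The plan is to exploit the uniqueness of right adjoints up to natural isomorphism. Both $\qDwn'$ and $\qUp'$ are, by Theorems \ref{thm:downset} and \ref{thm:upset}, right adjoints of the same functor $(-)_\diamond\colon\qPOS\to\qRelPos$. Hence there is a unique natural isomorphism $C\colon\qDwn'\to\qUp'$ compatible with the adjunctions, i.e., satisfying $\qnotni = \qni\cdot (C\star (-)_\diamond)$, where $\qni$ and $\qnotni$ are the respective counits; concretely this says $\qni_{(\X,\qo)}\circ C_{(\X,\qo)}=\qnotni_{(\X,\qo)}$ for every quantum poset $(\X,\qo)$, and $C_{(\X,\qo)}$ is an order isomorphism because every component of a natural isomorphism is an isomorphism, and isomorphisms in $\qPOS$ are exactly the order isomorphisms. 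The induced natural isomorphism on the endofunctors $\qDwn=\qDwn'\circ(-)_\diamond$ and $\qUp=\qUp'\circ(-)_\diamond$ is then simply the whiskering $C\star(-)_\diamond$, which we also call $C$.

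First I would recall the standard construction of $C$: given the two adjunctions $(-)_\diamond\dashv\qDwn'$ with unit $\qdown$ and counit $\qni$, and $(-)_\diamond\dashv\qUp'$ with unit $\qdown'$ (the unit of the upset adjunction) and counit $\qnotni$, one defines $C_{(\X,\qo)}\colon\qDwn(\X,\qo)\to\qUp(\X,\qo)$ as the transpose under the second adjunction of the $\qRelPos$-morphism $\qni_{(\X,\qo)}\colon (\D(\X,\qo))_\diamond\to(\X,\qo)$; equivalently $C_{(\X,\qo)} = \qUp'(\qni_{(\X,\qo)})\circ\qdown'_{\qDwn(\X,\qo)}$. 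Its inverse is constructed symmetrically from $\qnotni$, and the triangle identities for both adjunctions give $C^{-1}\circ C = 1$ and $C\circ C^{-1}=1$. Naturality of $C$ in $(\X,\qo)$ follows from naturality of $\qdown'$ and of $\qni$ together with functoriality of $\qUp'$. The identity $\qni_{(\X,\qo)}\circ C_{(\X,\qo)}=\qnotni_{(\X,\qo)}$ is exactly one of the two triangle-type compatibility equations between the transpose and the counit.

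It remains to identify $C$ with "taking complements". Since $\U(\X,\qo)=\D(\X,\qop)$ by the remark following the definition of $\qUp$, and since $\qDwn(\X,\qo)$ was defined as $[(\X,\qo)^*,(\mathbf 2,\qo_\mathbf 2)]_\sqsubseteq$ while $\qUp(\X,\qo)=[(\X,\qo)^*,(\mathbf 2,\qop_\mathbf 2)]_\sqsubseteq$, the only difference between the two internal homs is the order on the codomain $\mathbf 2$; postcomposition with the order isomorphism $\mathbf 2\to\mathbf 2^\op$ given by $`0\leftrightarrow`1$ (the classical "not" map, which is an order isomorphism $(2,\sqsubseteq)\to(2,\sqsubseteq)^\op$) furnishes an order isomorphism $[(\X,\qo)^*,\mathbf 2]_\sqsubseteq\to[(\X,\qo)^*,\mathbf 2^\op]_\sqsubseteq$ that is natural in $(\X,\qo)$, and this is the complementation operator. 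One then checks that this concrete isomorphism satisfies $\qni\circ(\text{complement})=\qnotni$, using the characterizations of $\qni$ and $\qnotni$ via $`1^\dag\circ\Eval_\sqsubseteq = E\circ(\qni\times I)$ and $`0^\dag\circ\Eval_\sqsubseteq = E\circ(\qnotni\times I)$ and the fact that $`0^\dag$ and $`1^\dag$ are swapped by precomposition with the "not" map; by the uniqueness clause in the abstract argument this concrete map must coincide with $C$. The main obstacle I expect is this last identification: unwinding the internal-hom description of $\qDwn$ and $\qUp$ and the definitions of $\Eval_\sqsubseteq$, $\qni$, $\qnotni$ carefully enough to verify the equation $\qni\circ C = \qnotni$ on the nose, rather than merely up to the abstract isomorphism — everything else is a formal consequence of adjunction calculus.
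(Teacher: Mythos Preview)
Your proposal is correct and matches the paper's (implicit) argument: the paper states this as a corollary with no proof, relying precisely on the uniqueness of right adjoints applied to Theorems~\ref{thm:downset} and~\ref{thm:upset}, which immediately gives the natural isomorphism $C$ together with the counit compatibility $\qni\circ C=\qnotni$.

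Your third paragraph, identifying $C$ with postcomposition by the ``not'' isomorphism $\mathbf 2\to\mathbf 2^\op$, goes beyond what the corollary actually asserts. The statement only claims the \emph{existence} of a natural isomorphism $C$ satisfying $\qni\circ C=\qnotni$; it does not claim that $C$ coincides with any particular concrete map. (Indeed, in Section~\ref{sec:quantum power set} the paper only says $C$ ``can be regarded as the quantum analog of the complement operator'', not that it literally is postcomposition with $`(\neg)$.) So the ``main obstacle'' you anticipate is not an obstacle at all for this corollary---your first two paragraphs already constitute a complete proof. That said, your concrete description is correct and the verification you sketch does go through, so it is a useful supplement even if not required here.
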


\section{The quantum power set}\label{sec:quantum power set}

In \cite{KLM22}, the quantum power set monad $\P$ was introduced. We sketch how to derive this monad from the quantum down-set monad. We have a clear categorical embedding $\qSet\to\qPOS$ acting on objects by $\X\mapsto(\X,I_\X)$, which is left adjoint to a forgetful functor acting on objects by $(\X,\qo)\mapsto \X$. We can now define the quantum power set $\P(\X)$ of a quantum poset $\X$ as the quantum set $\D(\X,I_\X)$, which can be made into a functor by composing $\D$ with the embedding of $\qSet$ into $\qPOS$. By playing with the composition of adjunctions, we can equip $\P$ with the structure of a monad such as in \cite{KLM22}. The ordered quantum power set $\qPow(\X)$ can be obtained as $\qDwn(\X,I_\X)$.  We note that $\P(\X)$ is also the underlying quantum set of $\qUp(\X,I_\X)$. The natural isomorphism $C$ from Corollary \ref{cor:complement} now yields a bijection $C_{(\X,I_\X)}:\P(\X)\to\P(\X)$ that is an order isomorphism $\qPow(\X)\to\qPow(\X)^\op$, and which can be regarded as the quantum analog of the complement operator on the power set. For simplicity, we write $C_\X$ instead of $C_{(\X,I_\X)}$.

\section{Galois connections}

\begin{definition}
Let $(\X,\qo_\X)$ and $(\Y,\qo_\Y)$ be quantum posets and let $F\colon\X\to\Y$ and $G\colon\Y\to\X$ be functions. If 
$(\qo_\Y)\circ F=G^\dag\circ (\qo_\X),$
we say that $(F,G)$ forms a \emph{Galois connection}, or that $F$ is the \emph{lower adjoint} of $G$, or that $G$ is the \emph{upper adjoint} of $F$. 
\end{definition}
This definition is a generalization of the usual definition of a Galois connection between ordinary posets: the $`(-)$ functor maps ordinary Galois connections to Galois connections in the sense of the definition above. Similarly as in the classical case, the lower adjoint in a Galois connection determines the upper adjoint and \emph{vice versa}. The next theorem provides alternative characterizations of Galois connections, which might look more familiar to the classical case:
\begin{theorem}\label{thm:Galois}
Let $(\X,\qo_\X)$ and $(\Y,\qo_\Y)$ be quantum posets and let $F\colon\X\to\Y$ and $G\colon\Y\to\X$ be monotone functions. Then the following statements are equivalent:
\begin{itemize}
    \item[(1)]$F$ is the lower adjoint of $G\colon\Y\to \X$.
    \item[(2)] 
    $F\circ K\sqsubseteq_\Y M\iff K\sqsubseteq_\X G\circ M$ for any quantum set $\Z$ and  functions $K\colon\Z\to\X$ and $M\colon\Z\to\Y$.
\item[(3)] We have $I_\X\sqsubseteq_\X G\circ F$ and $ F\circ G\sqsubseteq_\Y I_\Y$.
\end{itemize}
\end{theorem}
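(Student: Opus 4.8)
The plan is to prove the cycle of implications $(1)\Rightarrow(2)\Rightarrow(3)\Rightarrow(1)$. The workhorse throughout will be the elementary observation that for any function $G\colon\Y\to\X$ (so that $G^\dag\circ G\geq I_\Y$ and $G\circ G^\dag\leq I_\X$) and any binary relations $M\colon\Z\to\Y$ and $N\colon\Z\to\X$, one has $G\circ M\leq N$ if and only if $M\leq G^\dag\circ N$. Both directions follow immediately from the two defining inequalities of a function together with monotonicity of composition in the quantaloid $\qRel$: if $M\leq G^\dag\circ N$ then $G\circ M\leq G\circ G^\dag\circ N\leq N$, and if $G\circ M\leq N$ then $M\leq G^\dag\circ G\circ M\leq G^\dag\circ N$. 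In other words $G\circ(-)$ is left adjoint to $G^\dag\circ(-)$ as order-preserving maps between homsets. I would record this as a preliminary remark since it is used repeatedly. I would also note that $P\sqsubseteq_\W P$ holds for every function $P\colon\Z\to\W$ into a quantum poset $(\W,\qo_\W)$, because $(\qo_\W)\circ P\geq I_\W\circ P=P$ by reflexivity of $\qo_\W$.

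For $(1)\Rightarrow(2)$: unfolding the definition of $\sqsubseteq$, the statement $F\circ K\sqsubseteq_\Y M$ reads $M\leq(\qo_\Y)\circ F\circ K$. Substituting the hypothesis $(\qo_\Y)\circ F=G^\dag\circ(\qo_\X)$ turns this into $M\leq G^\dag\circ\bigl((\qo_\X)\circ K\bigr)$, which by the preliminary remark (with $N=(\qo_\X)\circ K$) is equivalent to $G\circ M\leq(\qo_\X)\circ K$, that is, to $K\sqsubseteq_\X G\circ M$. Thus $(2)$ is exactly $(1)$ transported along the composition adjunction. For $(2)\Rightarrow(3)$: instantiate the biconditional of $(2)$ twice. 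Taking $\Z=\X$, $K=I_\X$, $M=F$ gives $F\circ I_\X\sqsubseteq_\Y F\iff I_\X\sqsubseteq_\X G\circ F$; since $F\circ I_\X=F$ and $F\sqsubseteq_\Y F$ always holds, we get $I_\X\sqsubseteq_\X G\circ F$. Taking $\Z=\Y$, $K=G$, $M=I_\Y$ gives $F\circ G\sqsubseteq_\Y I_\Y\iff G\sqsubseteq_\X G\circ I_\Y$; since $G\circ I_\Y=G$ and $G\sqsubseteq_\X G$ always holds, we get $F\circ G\sqsubseteq_\Y I_\Y$.

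For $(3)\Rightarrow(1)$ we finally use that $F$ and $G$ are monotone. Unfolding $(3)$, we have $G\circ F\leq(\qo_\X)$ and $I_\Y\leq(\qo_\Y)\circ F\circ G$. For the inequality $(\qo_\Y)\circ F\leq G^\dag\circ(\qo_\X)$, the preliminary remark reduces it to $G\circ(\qo_\Y)\circ F\leq(\qo_\X)$, and then $G\circ(\qo_\Y)\circ F\leq(\qo_\X)\circ G\circ F\leq(\qo_\X)\circ(\qo_\X)\leq(\qo_\X)$, using monotonicity of $G$, the first half of $(3)$, and transitivity of $\qo_\X$. For the reverse inequality $G^\dag\circ(\qo_\X)\leq(\qo_\Y)\circ F$, write $G^\dag\circ(\qo_\X)=I_\Y\circ G^\dag\circ(\qo_\X)$, insert the second half of $(3)$ to get $G^\dag\circ(\qo_\X)\leq(\qo_\Y)\circ F\circ G\circ G^\dag\circ(\qo_\X)$, contract $G\circ G^\dag\leq I_\X$, and conclude via monotonicity of $F$ and transitivity of $\qo_\Y$: $(\qo_\Y)\circ F\circ(\qo_\X)\leq(\qo_\Y)\circ(\qo_\Y)\circ F\leq(\qo_\Y)\circ F$. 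Combining the two inequalities gives $(\qo_\Y)\circ F=G^\dag\circ(\qo_\X)$, which is $(1)$.

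The entire argument is a string of routine quantaloid manipulations, and I do not anticipate a real obstacle. The only points requiring care are bookkeeping the direction of $\sqsubseteq$ against $\leq$ and the order of relational composition, and the one genuinely directed step: the reverse inequality in $(3)\Rightarrow(1)$, where one must apply the counit inequality $I_\Y\leq(\qo_\Y)\circ F\circ G$ on the left and then use $G\circ G^\dag\leq I_\X$ to collapse it, rather than trying to argue symmetrically with the unit.
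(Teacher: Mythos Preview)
Your proof is correct and follows essentially the same cycle $(1)\Rightarrow(2)\Rightarrow(3)\Rightarrow(1)$ as the paper, with the same instantiations for $(2)\Rightarrow(3)$ and the same core inequalities for $(3)\Rightarrow(1)$. The one cosmetic difference is that you isolate the adjunction $G\circ(-)\dashv G^\dag\circ(-)$ as a preliminary remark and use it to reduce $(\qo_\Y)\circ F\leq G^\dag\circ(\qo_\X)$ to $G\circ(\qo_\Y)\circ F\leq(\qo_\X)$ before applying monotonicity of $G$, whereas the paper first bounds $F\leq G^\dag\circ(\qo_\X)$ and then invokes the daggered form of monotonicity of $G$ (via the opposite-order lemma) to absorb the leading $(\qo_\Y)$; these are equivalent manipulations and neither buys anything the other lacks.
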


\begin{proof}
    We start by showing that (1) implies (2). So assume that $F$ is the lower adjoint of $G$, so $(\qo_\Y)\circ F=G^\dag\circ (\qo_\X)$. Let $\Z$ be a quantum set and let $K:\Z\to\X$ and $M:\Z\to\Y$ be functions. 
 Assume $F\circ K\sqsubseteq_\Y M$. By definition of $\sqsubseteq_\Y$, we have $M\leq (\qo_\Y)\circ F\circ K$, so $M\leq G^\dag\circ \qo_\X\circ K$, hence $G\circ M\leq G\circ G^\dag\circ \qo_\X\circ K\leq (\qo_\X)\circ K$ since $G$ is a function. Hence $K\sqsubseteq_\X G\circ M$.

Conversely if $K\sqsubseteq_\X G\circ M$, we have $G\circ M\leq (\qo_\X)\circ K$ by definition of a function,
$M\leq G^\dag\circ G\circ M\leq G^\dag\circ \qo_\X\circ K=(\qo_\Y)\circ F\circ K$, so $F\circ K\sqsubseteq_\Y M$.

We show that (2) implies (3). Since $F\sqsubseteq_\Y F$,  condition (2) yields $I_\X\sqsubseteq_\X G\circ F$ if we choose $\Z=\X$, $K=I_\X$ and $M=F$. Since $G\sqsubseteq_\X G$, we obtain $F\circ G\sqsubseteq_\Y I_\Y$ by choosing $\Z=\Y$, $K=G$ and $M=I_\Y$.

We proceed by showing that (3) implies (1). Since $I_\X\sqsubseteq_\X G\circ F$, we have $G\circ F\leq (\qo_\X)$. Then $F\leq G^\dag\circ G\circ F\leq G^\dag\circ (\qo_\X)$. By Lemma \ref{lem:functions and the opposite order}, $G:\Y^\op\to\X^\op$ is also monotone, so $G\circ (\qop_\Y)\leq (\qop_\X)\circ G$, which implies $(\qo_\Y)\circ G^\dag\leq G^\dag\circ (\qo_\X)$ after taking daggers. Hence, $(\qo_\Y)\circ F\leq (\qo_\Y)\circ G^\dag\circ \qo_\X\leq G^\dag\circ( \qo_\X\circ \qo_\X)=G^\dag\circ (\qo_\X)$.

Since $F\circ G\sqsubseteq_\Y I_\Y$, we have $I_\Y\leq (\qo_\Y)\circ F\circ G$. Hence, $G^\dag\leq (\qo_\Y)\circ F\circ G\circ G^\dag\leq (\qo_\Y)\circ F$. Since $F$ is monotone, we have $F\circ (\qo_\X)\leq (\qo_\Y)\circ F$, whence $G^\dag\circ (\qo_\X)\leq (\qo_\Y)\circ F\circ (\qo_\X)\leq (\qo_\Y\circ \qo_\Y)\circ F=(\qo_\Y)\circ F$.
We conclude that $(\qo_\Y)\circ F=G^\dag\circ (\qo_\X)$, so $F$ is the lower adjoint of $G$. 
\end{proof}

The next example is the quantum version of the statement that the direct image and the preimage of a function form a Galois connection:
\begin{example}
$\qPow(F)\colon\qPow(\X)\to\qPow(\Y)$ is the lower Galois adjoint of $\qPow(F^\dag)$  for any function $F\colon\X\to\Y$.
\end{example}

Also the notion of closure operators can be generalized to the quantum setting:

\begin{definition}
Let $(\X,\qo)$ be a quantum poset. Then we call a monotone function $C\colon\X\to\X$ a \emph{closure operator} on $\X$ if $I_\X\sqsubseteq C$ and $C\circ C=C$.
\end{definition}

Just as in the classical case, there is a relation between Galois connections and closure operators:

\begin{theorem}\label{thm:closure operator vs Galois connection}
Let $(\X,\qo_\X)$ and $(\Y,\qo_\Y)$ be quantum posets and let $F\colon\X\to\Y$ be a monotone function that is the lower adjoint of a monotone function $G\colon\Y\to\X$. Then $C:=G\circ F$ is a closure operator on $\X$.
Conversely, let $C$ be a closure operator on a quantum poset $(\X,\qo_\X)$. Let $\Y=\ran C$, and let $\qo_\Y$ be the induced order on $\Y$, i.e., $\qo_\Y=\qo_\X|_\Y^\Y$. Then the unique surjective monotone function $\bar C\colon\X\to\Y$ such that $C=J_\Y\circ\bar C$ is is the lower adjoint of the order embedding $J_\Y\colon\Y\to\X$. In particular, we have $\bar C\circ J_\Y=I_\Y$.
\end{theorem}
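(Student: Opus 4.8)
The plan is to verify each direction by manipulating the defining inequalities of Galois connections and closure operators, using Theorem~\ref{thm:Galois} (especially the equivalence (1)$\iff$(3)) as the main engine, together with the facts about order embeddings and corestrictions collected in the preliminaries.

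For the first direction, suppose $F$ is the lower adjoint of $G$, and set $C := G\circ F$. Monotonicity of $C$ is immediate from functoriality of composition in $\qPOS$. For $I_\X\sqsubseteq C$, I would apply Theorem~\ref{thm:Galois}(3), which gives exactly $I_\X\sqsubseteq_\X G\circ F = C$. For idempotency $C\circ C = C$, I would use both halves of Theorem~\ref{thm:Galois}(3): from $F\circ G\sqsubseteq_\Y I_\Y$ one gets, after composing on the left with $G$ and on the right with $F$ and using monotonicity of $G$, that $G\circ F\circ G\circ F \sqsubseteq G\circ F$, i.e.\ $C\circ C\sqsubseteq C$; and from $I_\X\sqsubseteq G\circ F$ one gets $C = I_\X\circ C \sqsubseteq C\circ C$ by monotonicity of $C$. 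Antisymmetry of $\sqsubseteq_\X$ (which follows from antisymmetry of $\qo_\X$, unwinding the definition of $\sqsubseteq_\X$) then yields $C\circ C = C$. One technical point to be careful about: $\sqsubseteq_\Y$ and $\sqsubseteq_\X$ are the pointwise orders on function spaces and one must check that composition is monotone for these orders, which is a short computation from $F\sqsubseteq_\Y G \iff G\leq (\qo_\Y)\circ F$ and the fact that $F$, $G$ are functions (so $F\circ F^\dag\leq I$).

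For the converse, let $C$ be a closure operator, $\Y = \ran C$ with the induced order $\qo_\Y = \qo_\X|_\Y^\Y$, and let $\bar C\colon\X\to\Y$ be the corestriction with $C = J_\Y\circ\bar C$; this $\bar C$ is monotone and surjective by the remarks following Lemma~\ref{lem:embedding qPOS into qRelPos}, and $J_\Y$ is an order embedding. To show $\bar C$ is the lower adjoint of $J_\Y$, I would verify Theorem~\ref{thm:Galois}(3) for the pair $(\bar C, J_\Y)$: namely $I_\X\sqsubseteq_\X J_\Y\circ\bar C = C$, which is part of the hypothesis that $C$ is a closure operator; and $\bar C\circ J_\Y\sqsubseteq_\Y I_\Y$. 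For the latter, the key observation is that $J_\Y^\dag\circ C\circ J_\Y = \bar C\circ J_\Y$ (since $J_\Y^\dag\circ J_\Y = I_\Y$, as $J_\Y$ is injective), and on the other hand $C\circ C = C$ restricted appropriately shows that $C$ acts as the identity on its range; concretely, $C\circ J_\Y = J_\Y$ would follow from $C\circ C = C$ by corestricting, giving $\bar C\circ J_\Y = J_\Y^\dag\circ C\circ J_\Y = J_\Y^\dag\circ J_\Y = I_\Y$, which is even stronger than the required inequality and simultaneously establishes the final assertion $\bar C\circ J_\Y = I_\Y$. I would then invoke Theorem~\ref{thm:Galois} to conclude $\bar C$ is the lower adjoint of $J_\Y$.

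The main obstacle I anticipate is the identity $C\circ J_\Y = J_\Y$, i.e.\ that a closure operator fixes its own range; classically this is trivial, but here one must argue it relation-theoretically. The natural route is: $C = J_\Y\circ\bar C$ with $\bar C$ surjective means $\ran C = \Y$ and $\bar C\circ\bar C^\dag = I_\Y$; combining $C\circ C = C$ with $C = J_\Y\circ\bar C$ gives $J_\Y\circ\bar C\circ J_\Y\circ\bar C = J_\Y\circ\bar C$, and since $J_\Y$ is injective ($J_\Y^\dag\circ J_\Y = I_\Y$) we can cancel it on the left to get $\bar C\circ J_\Y\circ\bar C = \bar C$, and then post-composing with $\bar C^\dag$ and using surjectivity of $\bar C$ yields $\bar C\circ J_\Y = I_\Y$. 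From this, $C\circ J_\Y = J_\Y\circ\bar C\circ J_\Y = J_\Y$ follows. This cancellation argument, while elementary, is exactly the kind of step that "is usually much more complicated than its classical counterpart," so I would write it out in full detail.
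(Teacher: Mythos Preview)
The paper does not include a proof of this theorem; it explicitly states that only the proofs of Theorem~\ref{thm:Galois} and Theorem~\ref{thm:quantum down sets form quantum suplattice} are given. Your proposal is correct and is exactly the natural argument one would expect: both directions reduce to verifying condition~(3) of Theorem~\ref{thm:Galois}, and the only nontrivial step is the cancellation argument for $\bar C\circ J_\Y = I_\Y$, which you carry out cleanly using injectivity of $J_\Y$ and surjectivity of $\bar C$.

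Two minor remarks. First, in the idempotency step you invoke ``monotonicity of $C$'' for $C\sqsubseteq C\circ C$, but in fact right composition by any function preserves $\sqsubseteq$, so monotonicity is not needed there (it is needed for the left composition by $G$ in the other inequality, as you note). Second, antisymmetry of $\sqsubseteq_\X$ is already asserted in the preliminaries (``this defines an order on $\qSet(\X,\Y)$''), so you may simply cite that rather than re-derive it. Neither point affects correctness.
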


The quantum version of the operation $A\mapsto\down A$ on a power set $\Pow(X)$ is a closure operator:
\begin{example}\label{ex:downward closure operator} 
    Let $\qo$ be an order on a quantum set $\X$. Then $\qPow(\qop)$ is a closure operator on $\qPow(\X)$. Its range equipped with the relative order equals $\qDwn(\X,\qo)$.   
\end{example}

\section{Quantum suplattices}
Classically, a poset $X$ is a complete lattice if its canonical embedding $X\to\Dwn(X)$, $x\mapsto\down x$ into its poset of down-sets has a lower Galois adjoint. We will use this fact in order to define quantum suplattices. 
\begin{definition}
Let $(\X,\qo_\X)$ be a quantum poset, and let $\qdown_{(\X,\qo_\X)}\colon(\X,\qo_\X)\to\qDwn(\X,\qo_\X)$ be the order embedding of $\X$ into the quantum poset of down-sets of $\X$. Then we say that $(\X,\qo_\X)$ is a quantum suplattice if $\qdown_{(\X,\qo_\X)}$ has a lower adjoint $\qsup_{(\X,\qo_\X)}\colon\qDwn(\X,\qo_\X)\to(\X,\qo_\X)$. Moreover, if $(\Y,\qo_\Y)$ is another quantum suplattice, then we say that a function $K\colon\X\to\Y$ is a \emph{homomorphism of quantum suplattices} if  $K\circ \qsup_{(\X,\qo_\X)}=\qsup_{(\Y,\qo_\Y)}\circ\D(K).$ We denote the category of quantum suplattices and homomorphisms of quantum suplattices by $\qSup$.
\end{definition}
We will show that quantum posets of down-sets form the primary examples of quantum suplattices. In the classical case, this is completely obvious; one just need to observe that a union of down-sets is a down-set. However, in the quantum case,
the proof is nontrivial. We need one crucial lemma.
\begin{lemma}\label{lem:Qorderidentity}
Let $(\X,\qo)$ be a quantum poset. Then the inverse inclusion order $\qsupseteq$ on $\D(\X)$ is the largest binary relation $T$ on $\D(\X)$ such that $
(\qni_{(\X,\qo)})\circ T\leq(\qni_{(\X,\qo)})$.
\end{lemma}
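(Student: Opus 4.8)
The plan is to characterize $\qsupseteq$ by a universal property involving the counit $\qni_{(\X,\qo)}$ of the adjunction $(-)_\diamond\dashv\qDwn'$, exploiting that $\qni$ has a universal property as a counit together with the fact that, for a trivially-ordered-type argument, $\qni$ essentially remembers the entire order structure of $\D(\X)$. First I would unwind what $\qni_{(\X,\qo)}\colon\qDwn(\X,\qo)\to(\X,\qo)$ is concretely via Lemma \ref{lem:counit downset monad}, namely the unique monotone relation with $`1^\dag\circ\Eval_\sqsubseteq = E_{(\X,\qo)}\circ(\qni_{(\X,\qo)}\times I_{(\X,\qo)^*})$. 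Since $\Eval_\sqsubseteq$ is the evaluation of the internal hom $[(\X,\qo)^*,(\mathbf 2,\qo_\mathbf 2)]_\sqsubseteq$, this says $\qni$ is, componentwise, the relation that pairs a "down-set" with the atoms it "contains", transposed along the compact structure. The upshot I want is that $\qni_{(\X,\qo)}\circ\qsupseteq = \qni_{(\X,\qo)}$ — this is essentially the statement that the inverse inclusion order is reflexive ($\qsupseteq\geq I_{\D(\X)}$, giving $\geq$) and that composing the membership relation with a larger down-set can only produce more, combined with down-sets being downward closed (giving $\leq$); more cleanly, it follows because $\qni_{(\X,\qo)}$ is a monotone relation $\qDwn(\X,\qo)\to(\X,\qo)$ and $\qsupseteq$ is (the opposite of) the order on $\qDwn(\X,\qo)$, so $\qni_{(\X,\qo)}\circ(\qsupseteq) = \qni_{(\X,\qo)}\circ I_{\qDwn(\X,\qo)} = \qni_{(\X,\qo)}$ by the identity-relation property from the lemma preceding the definition of $\qRelPos$.

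Next I would establish maximality: suppose $T$ is a binary relation on $\D(\X)$ with $(\qni_{(\X,\qo)})\circ T\leq(\qni_{(\X,\qo)})$; I must show $T\leq(\qsupseteq)$. The strategy is to first upgrade $T$ to a monotone relation without loss of generality: replacing $T$ by $(\qsupseteq)\circ T\circ(\qsupseteq)$ (i.e., closing under the orders on both sides, which by condition (2) of the definition of monotone relation is the monotone-relation generated by $T$) does not increase $\qni_{(\X,\qo)}\circ(-)$, because $\qni_{(\X,\qo)}\circ(\qsupseteq)=\qni_{(\X,\qo)}$ on the left and on the right $(\qsupseteq)$ is the identity monotone relation on $\qDwn(\X,\qo)$ in $\qRelPos$ so postcomposing $T$ with it and then hitting with $\qni$ is harmless; and it can only enlarge $T$ since orders are reflexive. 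So assume $T$ is a monotone relation $\qDwn(\X,\qo)\to\qDwn(\X,\qo)$. Then $\qni_{(\X,\qo)}\circ T$ is a monotone relation $\qDwn(\X,\qo)\to(\X,\qo)$ that is $\leq\qni_{(\X,\qo)}$. Now I invoke the universal property from Theorem \ref{thm:downset}: monotone relations out of $\qDwn(\X,\qo)$ into $(\X,\qo)$ correspond, via the counit $\qni$, to monotone maps $\qDwn(\X,\qo)\to\qDwn(\X,\qo)$ — more precisely $\qDwn'$ is right adjoint to $(-)_\diamond$, so a monotone relation $V\colon\qDwn(\X,\qo)\to(\X,\qo)$ factors uniquely as $\qni_{(\X,\qo)}\circ \hat V_\diamond$ for a monotone map $\hat V\colon\qDwn(\X,\qo)\to\qDwn(\X,\qo)$, and the assignment $V\mapsto\hat V$ is monotone (order-preserving on homsets, since the adjunction is $\Sup$-enriched / at least order-enriched). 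Applying this to $V=\qni_{(\X,\qo)}\circ T\leq\qni_{(\X,\qo)}=\qni_{(\X,\qo)}\circ(I_{\qDwn(\X,\qo)})_\diamond$ and using faithfulness/monotonicity of transposition gives that the transpose of $\qni_{(\X,\qo)}\circ T$ is $\leq$ the identity map $I_{\qDwn(\X,\qo)}$ in $\qPOS$, hence equals a submap, and then chasing back through $(-)_\diamond$ and the fact that $T$ itself is recovered as (or is below) the transpose composed appropriately yields $T\leq(\qsupseteq)$.

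The step I expect to be the main obstacle is the maximality direction, specifically pinning down exactly which universal property of $\qni_{(\X,\qo)}$ does the work and verifying that the correspondence "monotone relation $V\colon\qDwn(\X,\qo)\to(\X,\qo)$ with $V\leq\qni_{(\X,\qo)}$" $\leftrightarrow$ "monotone map $\hat V\leq I_{\qDwn(\X,\qo)}$" is genuinely an order isomorphism and not merely a bijection — i.e., that the adjunction $(-)_\diamond\dashv\qDwn'$ is order-enriched so that $\leq$ is transported faithfully in both directions. If the order-enrichment of that adjunction is not immediately available from the excerpt, the fallback is a direct computation at the level of components: expand $\qni_{(\X,\qo)}$ via the compact-closed transpose of $\Eval_\sqsubseteq$ (Lemma \ref{lem:counit downset monad}), expand the internal hom order $\qsubseteq_{(\X,\qo)}$ and hence $\qsupseteq$ in terms of the order on $(\mathbf 2,\qo_\mathbf 2)$ and $(\X,\qo)^*$, and check the inequality $\qni_{(\X,\qo)}\circ T\leq\qni_{(\X,\qo)}$ forces $T(D,D')$ to consist of operators intertwining the "containment" data the right way — essentially the quantum analogue of "if $d\in D\Rightarrow d\in D'$ whenever $D' T D$, and $D'$ is down-closed, then $D\supseteq D'$". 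I would present the clean adjunction-theoretic argument as the main proof and relegate the component computation to a remark if the enrichment needs checking.
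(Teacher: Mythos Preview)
The paper does not give a proof of this lemma; it is stated as a ``crucial lemma'' and then immediately used in the proof of Theorem~\ref{thm:quantum down sets form quantum suplattice}. So there is nothing to compare against, and your proposal has to stand on its own.

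Your first step and the reduction to monotone $T$ are fine: $\qni_{(\X,\qo)}\circ(\qsupseteq)=\qni_{(\X,\qo)}$ is exactly the identity-law for monotone relations, and closing $T$ under $(\qsupseteq)\circ(-)\circ(\qsupseteq)$ only enlarges it while preserving the hypothesis.

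The problem is the maximality argument. The adjunction $(-)_\diamond\dashv\qDwn'$ gives you, for each quantum poset $\Z$, a bijection between monotone \emph{relations} $\Z\to(\X,\qo)$ and monotone \emph{maps} $\Z\to\qDwn(\X)$. Your $T$, however, is a monotone relation $\qDwn(\X)\to\qDwn(\X)$, not a monotone map; it is not of the form $G_\diamond$ for a map $G$, so you cannot ``recover $T$ from the transpose''. What the universal property actually hands you is a monotone map $\hat V\colon\qDwn(\X)\to\qDwn(\X)$ with $\qni_\X\circ\hat V=\qni_\X\circ T$, together with (granting order-enrichment) $\hat V\sqsubseteq I_{\qDwn(\X)}$. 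To pass from this to $T\leq(\qsupseteq)$ you would transpose once more: $T$ corresponds to $\tilde T\colon\qDwn(\X)\to\qDwn^2(\X)$ with $\qni_{\qDwn(\X)}\circ\tilde T=T$, and one checks $\hat V=\qun_{(\X,\qo)}\circ\tilde T$; the desired conclusion then reads $\qun_{(\X,\qo)}\circ\tilde T\sqsubseteq I \Rightarrow \tilde T\sqsubseteq\qdown_{\qDwn(\X)}$. But that implication is precisely (one direction of) the Galois connection $\qun_{(\X,\qo)}\dashv\qdown_{\qDwn(\X)}$, i.e.\ Theorem~\ref{thm:quantum down sets form quantum suplattice}, which in the paper is \emph{proved using} Lemma~\ref{lem:Qorderidentity}. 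So the adjunction-theoretic route, made precise, is circular.

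Your fallback is therefore not optional but the actual proof. You should unwind the order $\qsubseteq$ on $\D(\X)=[\X^*,\mathbf 2]_\sqsubseteq$ via the maximality characterization of the internal-hom order in $\qPOS$ (the largest relation making $\Eval_\sqsubseteq$ monotone in its first argument) and combine this with the defining equation of Lemma~\ref{lem:counit downset monad}, which identifies $\qni_{(\X,\qo)}$ with the compact-closed transpose of $`1^\dag\circ\Eval_\sqsubseteq$. The transpose $E_{(\X,\qo)}\circ(-\times I_{(\X,\qo)^*})$ is an order-isomorphism on $\qRel$-homs, so $\qni\circ T\leq\qni$ translates into a condition on $\Eval_\sqsubseteq\circ(T\times I_{\X^*})$; matching this against the internal-hom characterization of $\qsubseteq$ (after handling the passage through $`1^\dag$ and the bijection of Lemma~\ref{lem:qRelPos1isqPos2}) yields $T\leq(\qsupseteq)$. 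That computation is where the content lives, and it should be the body of the proof rather than a remark.
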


\begin{theorem}\label{thm:quantum down sets form quantum suplattice}
Let $(\X,\qo)$ be a quantum poset. Then $\qDwn(\X,\qo)$ is a quantum suplattice. More specifically, the lower Galois adjoint $\qsup_{\D(\X,\qo)}$ of the canonical embedding $\qdown_{\qDwn(\X,\qo)}\colon\qDwn(\X,\qo)\to\qDwn^2(\X,\qo)$ is given by the $(\X,\qo)$-component of $\qun_{(\X,\qo)}$ of the multiplication $\qun\colon\qDwn^2\to\qDwn$ of the $\qDwn$-monad on $\qPos$.  
\end{theorem}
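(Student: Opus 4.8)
The goal is to show that for a quantum poset $(\X,\qo)$, the canonical embedding $\qdown_{\qDwn(\X,\qo)}\colon\qDwn(\X,\qo)\to\qDwn^2(\X,\qo)$ has a lower Galois adjoint, and moreover that this lower adjoint is exactly the monad multiplication component $\qun_{(\X,\qo)}$. By Theorem \ref{thm:Galois}(3), it suffices to verify the two pointwise-order inequalities $I_{\D(\X,\qo)}\sqsubseteq \qdown_{\qDwn(\X,\qo)}\circ\qun_{(\X,\qo)}$ and $\qun_{(\X,\qo)}\circ\qdown_{\qDwn(\X,\qo)}\sqsubseteq I_{\D^2(\X,\qo)}$; the second is immediate from the monad unit law $\qun\circ\qdown_{\qDwn}=\mathrm{id}$ (which holds on the nose, hence a fortiori in the pointwise order). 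So the real content is the first inequality, $\qdown\circ\qun\sqsupseteq I$, i.e.\ that $\qun$ followed by the Yoneda-style embedding dominates the identity in the pointwise order $\sqsubseteq_{\qDwn(\X,\qo)}$.

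First I would unwind the pointwise order: by definition $G\sqsubseteq_{\qDwn(\X,\qo)} G'$ means $G'\leq (\qsubseteq_{(\X,\qo)})\circ G$, so the target inequality $I\sqsubseteq \qdown\circ\qun$ amounts to $\qdown_{\qDwn(\X,\qo)}\circ\qun_{(\X,\qo)}\leq (\qsubseteq_{(\X,\qo)})\circ I = (\qsubseteq_{(\X,\qo)})$, an inequality of binary relations on $\D(\X,\qo)$. The strategy is to reduce this to Lemma \ref{lem:Qorderidentity}, which characterizes $(\qsupseteq_{(\X,\qo)})$ — equivalently, after taking daggers, $(\qsubseteq_{(\X,\qo)})$ — as the largest relation $T$ with $(\qni_{(\X,\qo)})\circ T\leq(\qni_{(\X,\qo)})$. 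Concretely, I would set $T := \qdown_{\qDwn(\X,\qo)}\circ\qun_{(\X,\qo)}$ and verify $(\qni_{(\X,\qo)})\circ T\leq(\qni_{(\X,\qo)})$; Lemma \ref{lem:Qorderidentity} (and its dagger) then forces $T\leq(\qsubseteq_{(\X,\qo)})$, as desired. To verify the defining inequality for $T$, I would use (i) the defining property of the unit from Theorem \ref{thm:downset}, $\qni_{(\X,\qo)}\circ\qdown_{(\X,\qo)}=I_{(\X,\qo)}$, applied at the level of $\qDwn(\X,\qo)$ (so $\qni_{\qDwn(\X,\qo)}\circ\qdown_{\qDwn(\X,\qo)}=I_{\qDwn(\X,\qo)}$), and (ii) the compatibility of the counit $\qni$ with the monad multiplication $\qun$ — namely that $\qni_{(\X,\qo)}\circ \qun_{(\X,\qo)} = \qni_{(\X,\qo)}\circ (\qni_{\qDwn(\X,\qo)})_\diamond$ or the analogous Kleisli-style identity coming from $\qun$ being built from the adjunction $(-)_\diamond\dashv\qDwn'$. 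Chaining these should collapse $(\qni_{(\X,\qo)})\circ T$ down to something manifestly $\leq(\qni_{(\X,\qo)})$.

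**The main obstacle.** The delicate point is item (ii): extracting the right interaction identity between $\qni$, $\qdown$ and $\qun$ at the quantum-relational level. Classically, $\bigcup$ on $\Dwn^2(X)$ is characterized by $x\in\bigcup\mathcal{A}\iff \exists A\in\mathcal{A},\ x\in A$, and the verification that $\ni_X\circ(\text{embed}\circ\bigcup)\subseteq\ni_X$ is a one-line set-theoretic check; in the quantum setting this "existential" is replaced by a supremum of composites of operator spaces, and one must manipulate it through the definitions of $\qun$ (as $\qDwn'$ applied to the counit, precomposed with units) and of the order $\qsubseteq$ on $\D^2(\X)$ as an internal hom. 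I expect the cleanest route is to work entirely with the adjunction: since $\qun_{(\X,\qo)}$ is by construction the image under $\qDwn'$ of a monotone relation (a Kleisli composite), its universal property relative to $\qni$ gives the needed identity $\qni_{(\X,\qo)}\circ\qun_{(\X,\qo)} \leq \qni_{(\X,\qo)}\circ(\text{counit at }\qDwn(\X,\qo))$, and then the triangle identity for the counit finishes the job. Making this fully rigorous requires carefully matching the abstract monad data against the concrete down-set relations $\qni$, and that bookkeeping — rather than any single hard idea — is where the work lies.
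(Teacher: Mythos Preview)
Your high-level plan matches the paper's: use Theorem~\ref{thm:Galois}(3), get one inequality from the monad unit law $\qun\circ\qdown_{\qDwn(\X)}=I_{\D(\X)}$, and get the other via Lemma~\ref{lem:Qorderidentity}. But the execution has a level confusion that leaves a real gap.

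The relation $T:=\qdown_{\qDwn(\X,\qo)}\circ\qun_{(\X,\qo)}$ is an endorelation on $\D^2(\X)$, not on $\D(\X)$. Consequently the target inequality $I\sqsubseteq\qdown\circ\qun$ unwinds to $T\leq(\qsubseteq_{\qDwn(\X,\qo)})$ (the order on $\D^2(\X)$), not to $T\leq(\qsubseteq_{(\X,\qo)})$ as you wrote; and the composite $(\qni_{(\X,\qo)})\circ T$ does not even typecheck, since $\qni_{(\X,\qo)}\colon\D(\X)\to\X$. To invoke Lemma~\ref{lem:Qorderidentity} for $T$ you must apply it at the level of $\qDwn(\X,\qo)$, i.e.\ show $\qni_{\qDwn(\X,\qo)}\circ T\leq\qni_{\qDwn(\X,\qo)}$. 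Using your item~(i) this reduces to $(\qsupseteq_{(\X,\qo)})\circ\qun\leq\qni_{\qDwn(\X,\qo)}$. But your item~(ii), the identity $\qni_{(\X,\qo)}\circ\qun=\qni_{(\X,\qo)}\circ\qni_{\qDwn(\X,\qo)}$, lives one level down and does not yield this: $\qni_{(\X,\qo)}$ is not left-cancellable, so you cannot strip it off to compare $(\qsupseteq)\circ\qun$ with $\qni_{\qDwn(\X,\qo)}$ directly.

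The paper's proof closes this gap by passing to $\qun^\dag$ and applying Lemma~\ref{lem:Qorderidentity} \emph{twice}, once at each level. First, surjectivity of $\qun$ (from $\qun\circ\qdown=I_{\D(\X)}$) gives $\qni_{(\X,\qo)}=\qni_{(\X,\qo)}\circ\qni_{\qDwn(\X,\qo)}\circ\qun^\dag$; since $\qni_{\qDwn(\X,\qo)}\circ\qun^\dag$ is an endorelation on $\D(\X)$, the Lemma at level $(\X,\qo)$ yields $\qni_{\qDwn(\X,\qo)}\circ\qun^\dag\leq(\qsupseteq_{(\X,\qo)})=\qni_{\qDwn(\X,\qo)}\circ\qdown_{\qDwn(\X,\qo)}$. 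Postcomposing with $\qdown^\dag$ and using that $\qdown$ is a function then gives $\qni_{\qDwn(\X,\qo)}\circ(\qun^\dag\circ\qdown^\dag)\leq\qni_{\qDwn(\X,\qo)}$, and now the Lemma at level $\qDwn(\X,\qo)$ applies to the endorelation $\qun^\dag\circ\qdown^\dag$ on $\D^2(\X)$, giving $\qun^\dag\circ\qdown^\dag\leq(\qsupseteq_{\qDwn(\X,\qo)})$, i.e.\ $\qdown\circ\qun\leq(\qsubseteq_{\qDwn(\X,\qo)})$. The passage through $\qun^\dag$ is precisely what lets the level-$(\X,\qo)$ information be leveraged into a level-$\qDwn(\X,\qo)$ conclusion; your direct attack on $T=\qdown\circ\qun$ lacks this bridge.
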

\begin{proof}For simplicity, we write $\qdown$ instead of $\qdown_{\qDwn(\X,\qo)}$ and $\qun$ instead of $\qun_{(\X,\qo)}$. 
Consider the following two diagrams:
\[\begin{tikzcd}
\qDwn(\X)\ar{d}[swap]{\qdown}\ar{dr}{I_{\qDwn(
\X)}} &    && \qDwn(\X)\ar{d}[swap]{\qdown}\ar{ddr}{\qni_{(\X,\qo)}} & \\
\qDwn^2(\X)\ar{r}[swap]{\qni_{\qDwn(\X,\qo)}}\ar{d}[swap]{\qDwn'(\qni_{(\X,\qo)})} & \qDwn(\X)\ar{d}{\qni_{(\X,\qo)}} && \qDwn^2(\X)\ar{d}[swap]{\qun} &  \\
\qDwn(\X)\ar{r}[swap]{\qni_{(\X,\qo)}} & (\X,\qo) && \qDwn(\X)\ar{r}[swap]{\qni_{(\X,\qo)}} & (\X,\qo)
\end{tikzcd}\]
The triangle in the left diagram commutes by Theorem \ref{thm:downset}. The square commutes by naturality of $\qni$ and since $\qDwn'$ equals $\qDwn$ on objects. The right diagram commutes since  $\qun_{(\X,\qo)}=\qDwn(\qni_{(\X,\qo)})$, so it is the outside of the left diagram. By the universal property of $\ni$ as the counit of the adjunction $(-)_\diamond\dashv\qDwn$, the unique monotone function $K\colon\qDwn(\X)\to\qDwn(\X)$ such that $\qni_{(\X,\qo)}\circ K=\qni_{(\X,\qo)}$ is the identity function on $\D(\X)$.  Hence, we conclude that $\qun\circ \qdown=I_{\D(\X)}$. Note that the right-hand side is not the same as the identity monotone relation $I_{\qDwn(\X)}$, which is $\qsupseteq$, even though $\D(\X)$ is the underlying quantum set of $\qDwn(\X)$. 

As a consequence, $\qun$ is an epimorphism in $\qSet$, hence it is surjective, so $\qun\circ \qun^\dag=I_{\D(\X)}$. Then, using the naturality of $\qni$, we obtain
\[\qni_{(\X,\qo)}=\qni_{(\X,\qo)}\circ \qun\circ \qun^\dag=\qni_{(\X,\qo)}\circ\qDwn'(\qni_{(\X,\qo)})\circ \qun^\dag=\qni_{(\X,\qo)}\circ \qni_{\qDwn(\X,\qo)}\circ \qun^\dag,\]
which, by Lemma \ref{lem:Qorderidentity}, implies
$\qni_{\qDwn(\X)}\circ \qun^\dag\leq (\qsupseteq)=I_{\qDwn(\X)}$. Since also $I_{\qDwn(\X)}=\qni_{\D(\X,\qo)}\circ\qdown_{\D(\X,\qo)}$ (cf. Theorem \ref{thm:downset}), we obtain $\qni_{\qDwn(\X)}\circ \qun^\dag\leq \qni_{\qDwn(\X)}\circ\qdown$. Then, since $\qdown$ is a function, we find 
$\qni_{\qDwn(\X)}\circ \qun^\dag\circ \left(\qdown\right)^\dag\leq \qni_{\qDwn(\X)}\circ\qdown\circ \left(\qdown\right)^\dag\leq \qni_{\qDwn(\X)}$. Again applying  Lemma \ref{lem:Qorderidentity} yields $\qun^\dag\circ \left(\qdown\right)^\dag\leq(\qsupseteq)$, whence $(\qsubseteq)\circ \qdown\circ \qun \leq (\qsubseteq)\circ(\qsupseteq)^\dag=(\qsubseteq)\circ(\qsubseteq)\leq(\qsubseteq)=(\qsubseteq)\circ I_{\D^2(\X)}$, which expresses that $I_{\D^2(\X)}\leq \qun\circ \qdown$.
It now follows from Theorem \ref{thm:Galois} that $\qun$ is the lower Galois adjoint $\qsup_{\qDwn(\X)}$ of $\qdown$, hence $\qDwn(\X)$ is indeed a quantum suplattice.
\end{proof}

We can now formulate the quantum versions of two classical theorems on suplattices:
\begin{theorem}
    Let $(\X,\qo_\X)$ and $(\Y,\qo_\Y)$ be quantum suplattices and let $F\colon\X\to\Y$ be monotone. Then $F$ is a homomorphism of quantum suplattices if and only if it has an upper Galois adjoint.
\end{theorem}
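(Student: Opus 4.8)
The plan is to reduce this to the classical-flavored characterization of lower adjoints in Theorem~\ref{thm:Galois}, run on the appropriate canonical embeddings, and to exploit the universal property of $\qni$ as the counit of the adjunction $(-)_\diamond\dashv\qDwn'$ together with Theorem~\ref{thm:quantum down sets form quantum suplattice}. The key observation is that $F\colon\X\to\Y$ between quantum suplattices sits in a square with the two canonical embeddings $\qdown_\X\colon\X\to\qDwn(\X)$ and $\qdown_\Y\colon\Y\to\qDwn(\Y)$ and the functor-action $\D(F)$, and that being a homomorphism of quantum suplattices is exactly a compatibility with the lower adjoints $\qsup_\X$, $\qsup_\Y$ of these embeddings. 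So the statement is really the quantum analog of: a monotone map between suplattices preserves suprema iff it has a right adjoint.

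First I would prove the easy direction: if $F$ has an upper Galois adjoint $G$, then $F$ is a homomorphism. Here the standard classical argument should go through formally. One shows $F\circ\qsup_\X = \qsup_\Y\circ\D(F)$ by showing both sides are the lower adjoint of the \emph{same} monotone function, namely $\qdown_\X\circ G$ composed appropriately, or more directly by a two-sided inequality using condition (3) of Theorem~\ref{thm:Galois} for both Galois pairs $(F,G)$ and $(\qsup,\qdown)$ at $\X$ and $\Y$, plus the naturality of $\qni$ / the monad structure. The composite of lower adjoints is a lower adjoint (the lower adjoint of the composite of the uppers, in reverse order), and lower adjoints are unique, so it suffices to check the relevant squares of adjoints commute; the monotone-relation calculus of Section~3 and Theorem~\ref{thm:downset} give exactly the needed naturality square for $\qni$ and $\qdown$.

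For the converse — $F$ a homomorphism implies $F$ has an upper adjoint — I would construct $G$ explicitly. Classically $G(y)=\bigvee\{x : F(x)\le y\}$, i.e. $G = \qsup_\X\circ(\text{preimage under }F)\circ\qdown_\Y$, where the ``preimage'' map $\qDwn(\Y)\to\qDwn(\X)$ is $\D(F^\dag)$-style, realized through the monotone relation $F_\diamond$ and the compact-closed structure of $\qRelPos$ (Theorem~\ref{thm:qRelPos is compact}). Concretely: the monotone relation $F_\diamond\colon\X\to\Y$ has a dagger/dual in $\qRelPos$, and applying the right adjoint $\qDwn'$ to it yields a monotone function $\qDwn(\Y)\to\qDwn(\X)$; precompose with $\qdown_\Y$ and postcompose with $\qsup_\X$ to define $G$. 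Then one verifies $(F,G)$ is a Galois connection using criterion (3) of Theorem~\ref{thm:Galois}: the inequalities $I_\X\sqsubseteq G\circ F$ and $F\circ G\sqsubseteq I_\Y$ should follow from the unit/counit inequalities of the monad $\qDwn$ (from Theorem~\ref{thm:quantum down sets form quantum suplattice}, $\qsup\circ\qdown = I$ and $I\le\qdown\circ\qsup$) combined with the homomorphism equation $F\circ\qsup_\X=\qsup_\Y\circ\D(F)$, which is precisely what lets the $\D(F)$ and $\D(F^\dag)$ pieces interact as a Galois pair on the down-set level. I expect this direction to be the main obstacle, and within it the delicate point is transporting the preimage construction honestly through $\qRel$: one must check that the relation $\D(F)$ really has the lifted-preimage as a genuine monotone \emph{function} (using $F^\dag\circ F\ge I_\X$, $F\circ F^\dag\le I_\Y$) and that the homomorphism hypothesis is used in exactly the spot where, classically, ``$F$ preserves the supremum'' is invoked — avoiding any implicit appeal to elements, since quantum sets have none. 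Everything else is formal manipulation in the $\Sup$-enriched dagger compact setting already set up in the preliminaries.
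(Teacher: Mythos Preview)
The paper does not include a proof of this theorem; it is stated without argument, the authors having explicitly chosen to include only the proofs of Theorem~\ref{thm:Galois} and Theorem~\ref{thm:quantum down sets form quantum suplattice}. So there is nothing to compare against directly, and I will simply assess your plan.

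Your overall strategy is sound and is the natural quantum transcription of the classical proof. Both directions go through essentially as you outline. A few points deserve sharpening:

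\textbf{The preimage relation is not a dagger or a dual.} You describe the relation $\qDwn(\Y)\to\qDwn(\X)$ as coming from ``the dagger/dual of $F_\diamond$ in $\qRelPos$''. Neither is correct: $\qRelPos$ is not a dagger category (indeed $(F_\diamond)^\dag=F^\dag\circ(\qo_\Y)$ fails the monotone-relation axioms), and the compact-closed dual lives over $\X^*,\Y^*$. What you want is the \emph{right adjoint} $F^\diamond:=F^\dag\circ(\qop_\Y)$ of $F_\diamond$ in the locally-ordered bicategory $\qRelPos$. One checks directly that $F^\diamond$ is a monotone relation $(\Y,\qo_\Y)\to(\X,\qo_\X)$, that $F^\diamond\circ F_\diamond\geq(\qop_\X)$ and $F_\diamond\circ F^\diamond\leq(\qop_\Y)$, so $F_\diamond\dashv F^\diamond$. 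Moreover, when $F$ has an upper Galois adjoint $G$, the defining equation $(\qo_\Y)\circ F=G^\dag\circ(\qo_\X)$ gives $F^\diamond=G_\diamond$ on the nose, which is exactly what makes your ``both sides have the same upper adjoint'' argument close.

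\textbf{The step you underplay is that $\qDwn'$ is locally monotone.} To pass from $F_\diamond\dashv F^\diamond$ in $\qRelPos$ to a Galois connection $\D(F)\dashv\qDwn'(F^\diamond)$ in $\qPOS$ (via condition (3) of Theorem~\ref{thm:Galois}), you need $\qDwn'$ to preserve the order on hom-sets. This is where Lemma~\ref{lem:Qorderidentity} is actually used: from $V\leq W$ one gets $\qni\circ\qDwn'(V)=V\circ\qni\leq W\circ\qni=\qni\circ\qDwn'(W)$ by naturality of $\qni$, and then Lemma~\ref{lem:Qorderidentity} plus a short function calculation yields $\qDwn'(V)\sqsubseteq\qDwn'(W)$. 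You gesture at the right ingredients but do not name this as the crux.

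\textbf{Minor.} You cite Theorem~\ref{thm:quantum down sets form quantum suplattice} for $\qsup\circ\qdown=I$, but that theorem gives it only for $\X=\qDwn(\Z)$. For a general quantum suplattice you need in addition that $\qdown_\X$ is an order embedding (Theorem~\ref{thm:downset}): combining this with the Galois equation $(\qo_\X)\circ\qsup_\X=\qdown_\X^\dag\circ(\qsubseteq)$ gives $(\qo_\X)\circ\qsup_\X\circ\qdown_\X=(\qo_\X)$, hence $\qsup_\X\circ\qdown_\X=I_\X$ by antisymmetry of $\sqsubseteq$.

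With these three clarifications your argument is complete; there is no genuine gap.
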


\begin{theorem}
    The Eilenberg-Moore category of $\qDwn$ is equivalent to $\qSup$.
\end{theorem}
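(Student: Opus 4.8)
The plan is to mimic the classical proof that $\mathbf{Sup}$ is the Eilenberg--Moore category of the down-set monad, transporting every step along the $\qRel$-machinery that has been set up. First I would recall the general abstract fact that whenever a right adjoint $\qDwn'\colon\qRelPos\to\qPos$ is \emph{monadic}, the Eilenberg--Moore category of the induced monad $\qDwn=\qDwn'\circ(-)_\diamond$ is equivalent to $\qRelPos$ via the comparison functor; but here the relevant monad is not $\qDwn$ itself as presented on $\qPos$ by that adjunction — rather one wants to show that a $\qDwn$-algebra structure on a quantum poset $(\X,\qo)$ is the same data as a quantum suplattice structure, and that $\qDwn$-algebra homomorphisms are exactly the homomorphisms of quantum suplattices. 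So the real work is to establish the equivalence $\qPos^{\qDwn}\simeq\qSup$ directly.

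The key steps, in order, would be: (i) Given a quantum suplattice $(\X,\qo_\X)$ with supremum map $\qsup_{(\X,\qo_\X)}\colon\qDwn(\X,\qo_\X)\to(\X,\qo_\X)$ (the lower adjoint of $\qdown_{(\X,\qo_\X)}$), verify the two Eilenberg--Moore axioms: $\qsup\circ\qdown_{(\X,\qo)}=I_\X$ (unit law) and $\qsup\circ\qDwn(\qsup)=\qsup\circ\qun_{(\X,\qo)}$ (associativity law). The unit law is exactly one half of the Galois adjunction $\qdown\dashv\qsup$ composed appropriately with the counit, and should follow from Theorem~\ref{thm:Galois}(3) together with the fact that $\qdown_{(\X,\qo)}$ is an order embedding; in the classical case $\bigvee\downarrow x=x$. (ii) For the associativity law I would use the uniqueness clause in the universal property of $\qni$ from Theorem~\ref{thm:downset}: both $\qsup\circ\qDwn(\qsup)$ and $\qsup\circ\qun_{(\X,\qo)}$ are monotone maps $\qDwn^2(\X,\qo)\to(\X,\qo)$, and one checks they induce the same monotone relation after postcomposing with $\qni$, invoking naturality of $\qni$ and of $\qdown$, and the description of $\qun_{(\X,\qo)}=\qDwn(\qni_{(\X,\qo)})$ already used in the proof of Theorem~\ref{thm:quantum down sets form quantum suplattice}; this pins them down uniquely as a Galois lower adjoint, hence they agree. (iii) Conversely, given a $\qDwn$-algebra $(\X,\qo,a)$ with structure map $a\colon\qDwn(\X,\qo)\to(\X,\qo)$, show $a$ is the lower adjoint of $\qdown_{(\X,\qo)}$: the unit law gives $a\circ\qdown_{(\X,\qo)}=I_\X$, i.e.\ one inequality of Theorem~\ref{thm:Galois}(3) with equality; for the other, $\qdown_{(\X,\qo)}\circ a\sqsubseteq I_{\qDwn(\X,\qo)}$, I would argue as in the proof of Theorem~\ref{thm:quantum down sets form quantum suplattice}, using Lemma~\ref{lem:Qorderidentity} and the associativity law to show that $\qni$-postcomposition collapses $\qdown\circ a$ below $\qsupseteq$. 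Then by uniqueness of Galois adjoints, $a=\qsup_{(\X,\qo)}$, so the quantum poset is a quantum suplattice and the algebra structure is forced. (iv) Finally, compare morphisms: a map $K\colon(\X,a)\to(\Y,b)$ is a $\qDwn$-algebra homomorphism iff $K\circ a=b\circ\qDwn(K)$, which under the identifications $a=\qsup_{\X}$, $b=\qsup_{\Y}$ and $\qDwn(K)$ acting as $\D(K)$ is precisely the defining condition $K\circ\qsup_{(\X,\qo_\X)}=\qsup_{(\Y,\qo_\Y)}\circ\D(K)$ of a homomorphism of quantum suplattices. Hence the comparison functor $\qPos^{\qDwn}\to\qSup$ is an isomorphism on objects-up-to-the-forgetful-data and fully faithful, giving the claimed equivalence.

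The main obstacle I expect is step (ii)/(iii): proving the associativity axiom of the Eilenberg--Moore algebra, and dually extracting the Galois adjunction from an abstract algebra structure, genuinely at the level of quantum relations. Classically these are one-line observations about unions of down-sets, but here one cannot reason elementwise; every identity must be pushed through the compact-closed structure of $\qRelPos$, the universal property of $\qni$, naturality squares, and Lemma~\ref{lem:Qorderidentity} (which plays the role of ``$\qsupseteq$ is the largest relation compatible with $\qni$''). The delicate point is matching the two candidate maps $\qDwn^2(\X)\to(\X)$ by showing they have the same composite with $\qni$ and then invoking the uniqueness of lower Galois adjoints — exactly the style of argument carried out in the proof of Theorem~\ref{thm:quantum down sets form quantum suplattice}, which should serve as the template. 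A secondary, more bookkeeping, obstacle is verifying that the comparison functor is well-defined and faithful, i.e.\ that a homomorphism of quantum suplattices automatically commutes with the $\qdown$-units as well (so that it is a genuine morphism of algebras), which follows from naturality of $\qdown$ together with $F$ being monotone.

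Alternatively, if a cleaner route is preferred, one could appeal directly to Beck's monadicity theorem for the adjunction $(-)_\diamond\dashv\qDwn'$: showing $\qDwn'$ is monadic would give $\qPos^{\qDwn}\simeq\qRelPos$, and then one would separately identify $\qRelPos$ with $\qSup$ via the bijection (established implicitly by Lemma~\ref{lem:qRelPos1isqPos2} and Theorem~\ref{thm:downset}) between monotone relations $\X\to\Y$ and suplattice homomorphisms $\qDwn(\X)\to\qDwn(\Y)$, together with the fact that every quantum suplattice is a retract of some $\qDwn(\X)$ (via its own $\qsup$ and $\qdown$, which split the idempotent $\qdown\circ\qsup$ on $\qDwn(\X,\qo)$, using Theorem~\ref{thm:quantum down sets form quantum suplattice} and Theorem~\ref{thm:closure operator vs Galois connection}). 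In that approach the hard part migrates to checking the hypotheses of Beck's theorem — in particular that $\qDwn'$ creates coequalizers of $\qDwn'$-split pairs — which again reduces to the same $\qni$-universal-property manipulations; I would lean toward the direct approach above as it reuses the already-proved Theorem~\ref{thm:quantum down sets form quantum suplattice} most economically.
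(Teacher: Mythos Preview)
The paper does not actually include a proof of this theorem; as the authors state in the overview, the only proofs given in full are Theorem~\ref{thm:Galois} and Theorem~\ref{thm:quantum down sets form quantum suplattice}. Your direct strategy (i)--(iv) is the expected one and is essentially sound as an outline.

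Two corrections are worth flagging. In step (iii) you have the inequality backwards: for $a$ to be the \emph{lower} adjoint of $\qdown_{(\X,\qo)}$, Theorem~\ref{thm:Galois}(3) requires $I_{\qDwn(\X,\qo)}\sqsubseteq_{\qDwn(\X)}\qdown_{(\X,\qo)}\circ a$, not the reverse. Moreover, rather than re-running the Lemma~\ref{lem:Qorderidentity} argument from scratch, it is cleaner to leverage Theorem~\ref{thm:quantum down sets form quantum suplattice} directly: naturality of $\qdown$ gives $\qdown_\X\circ a=\qDwn(a)\circ\qdown_{\qDwn(\X)}$; since $\qun_\X\dashv\qdown_{\qDwn(\X)}$ one obtains $\qDwn(a)\sqsubseteq\qDwn(a)\circ\qdown_{\qDwn(\X)}\circ\qun_\X=\qdown_\X\circ a\circ\qun_\X$, and precomposing with $\qDwn(\qdown_\X)$ together with the algebra unit law $a\circ\qdown_\X=I_\X$ and the monad unit law $\qun_\X\circ\qDwn(\qdown_\X)=I_{\qDwn(\X)}$ yields the desired $I_{\qDwn(\X)}\sqsubseteq\qdown_\X\circ a$.

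For step (ii), your idea of matching the two maps via uniqueness of lower adjoints is the right one, but note that naturality of $\qdown$ gives $\qdown_{\qDwn(\X)}\circ\qdown_\X=\qDwn(\qdown_\X)\circ\qdown_\X$, so both $\qsup_\X\circ\qun_\X$ and $\qsup_\X\circ\qDwn(\qsup_\X)$ are candidate lower adjoints of the \emph{same} map; the only nontrivial check is that $\qDwn(\qsup_\X)\dashv\qDwn(\qdown_\X)$, which amounts to $\qDwn$ preserving the order $\sqsubseteq$ on hom-sets (the $\POS$-enrichment of $\qPOS$ from \cite{KLM22}). Your alternative Beck-monadicity route is viable but, as you note, relocates rather than eliminates the work: identifying $\qRelPos$ with $\qSup$ still requires essentially the content of steps (i)--(iii).
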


\section{Quantum inflattices}
We can define an inflattice to be the opposite of a suplattice. In the quantum world, we follow the same path to define quantum inflattices:
\begin{definition}
    A quantum poset $(\X,\qo)$ is called a \emph{quantum inflattice} if $(\X,\qop)$ is a quantum suplattice.
\end{definition}
Classically, a poset is a suplattice if and only if it is an inflattice. The same is true in the quantum case. In order to prove this, a first step is showing a different characterization of quantum suplattices, which is analogue to the observation that any ordinary poset $X$ is a suplattice if and only if the embedding $X\to \mathrm{Pow}(X)$, $x\mapsto\down x$ has a lower Galois adjoint. The analogue embedding $D_{(\X,\qo)}$ of a quantum poset $(\X,\qo)$ into $\qPow(\X)$ is given by $J\circ\qdown_\X$, where $J\colon\qDwn(\X,\qo)\to \qPow(\X)$ is the order embedding from Example \ref{ex:downward closure operator}. 
\begin{lemma}\label{lem:alt definition qsup}
    A quantum poset $(\X,\qo)$ is a quantum suplattice if and only if $D_{(\X,\qo)}\colon(\X,\qo)\to\qPow(\X)$ has a lower Galois adjoint $F_{(\X,\qo)}$.
\end{lemma}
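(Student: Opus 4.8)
The plan is to exploit the fact that the canonical embedding $\qdown_{(\X,\qo)}\colon(\X,\qo)\to\qDwn(\X,\qo)$ factors through $D_{(\X,\qo)}\colon(\X,\qo)\to\qPow(\X)$ via the order embedding $J\colon\qDwn(\X,\qo)\to\qPow(\X)$ of Example \ref{ex:downward closure operator}, namely $D_{(\X,\qo)}=J\circ\qdown_{(\X,\qo)}$. The strategy is then: (i) produce a lower Galois adjoint for $J$ itself out of the closure operator $\qPow(\qop)$ from Example \ref{ex:downward closure operator}; (ii) use the fact that lower adjoints compose, and that a composite $G\circ F$ being a lower adjoint implies $G$ is a lower adjoint whenever $F$ is a surjective (or suitably split) lower adjoint, to transfer the existence of a lower adjoint back and forth between $\qdown_{(\X,\qo)}$ and $D_{(\X,\qo)}$.

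First I would analyze $J$. By Example \ref{ex:downward closure operator}, $\qPow(\qop)$ is a closure operator on $\qPow(\X)$ whose range, with the induced order, is exactly $\qDwn(\X,\qo)$, and $J$ is the inclusion order embedding of that range. By Theorem \ref{thm:closure operator vs Galois connection} applied to the closure operator $C:=\qPow(\qop)$, the unique surjective monotone map $\bar C\colon\qPow(\X)\to\qDwn(\X,\qo)$ with $C=J\circ\bar C$ is the lower adjoint of $J$, and moreover $\bar C\circ J=I_{\D(\X,\qo)}$. So $J$ is a split mono with a known lower Galois adjoint $\bar C$ satisfying $\bar C\circ J=\mathrm{id}$.

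Now the forward direction: suppose $(\X,\qo)$ is a quantum suplattice, so $\qdown_{(\X,\qo)}$ has a lower adjoint $\qsup_{(\X,\qo)}$. Then $D_{(\X,\qo)}=J\circ\qdown_{(\X,\qo)}$ is a composite of monotone maps each of which has a lower adjoint ($\bar C$ for $J$, $\qsup_{(\X,\qo)}$ for $\qdown_{(\X,\qo)}$), and by functoriality of the passage to upper/lower adjoints (equivalently: compose the characterizations in Theorem \ref{thm:Galois}(2)), $D_{(\X,\qo)}$ has the lower adjoint $\qsup_{(\X,\qo)}\circ\bar C$. Conversely, suppose $D_{(\X,\qo)}$ has a lower adjoint $F_{(\X,\qo)}$. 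Then $\qdown_{(\X,\qo)}=\bar C\circ J\circ\qdown_{(\X,\qo)}=\bar C\circ D_{(\X,\qo)}$, so $\qdown_{(\X,\qo)}$ is the composite $\bar C\circ D_{(\X,\qo)}$; but $\bar C$ has the lower adjoint $J$, so again $\qdown_{(\X,\qo)}$ has the lower adjoint $F_{(\X,\qo)}\circ J$, whence $(\X,\qo)$ is a quantum suplattice.

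The main obstacle I anticipate is bookkeeping rather than a deep idea: one must be careful that "composite of maps with lower adjoints has a lower adjoint" is genuinely available in this setting. The cleanest route is to avoid invoking an unstated 2-functoriality and instead verify it directly from Theorem \ref{thm:Galois}(2): if $F_1$ is lower adjoint to $G_1$ and $F_2$ is lower adjoint to $G_2$ (with codomains/domains matching), then for all test functions $K,M$ one chains the two biconditionals $F_2\circ F_1\circ K\sqsubseteq M\iff F_1\circ K\sqsubseteq G_2\circ M\iff K\sqsubseteq G_1\circ G_2\circ M$, giving that $F_2\circ F_1$ is lower adjoint to $G_1\circ G_2$. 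A secondary subtlety is confirming the identity $D_{(\X,\qo)}=J\circ\qdown_{(\X,\qo)}$ and $\qdown_{(\X,\qo)}=\bar C\circ D_{(\X,\qo)}$; the first is the definition of $D_{(\X,\qo)}$ given just before the lemma, and the second follows from $\bar C\circ J=I_{\D(\X,\qo)}$ of Theorem \ref{thm:closure operator vs Galois connection}. With these in hand the proof is a short diagram chase.
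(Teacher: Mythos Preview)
The paper does not include a proof of this lemma, but the factorization $D_{(\X,\qo)}=J\circ\qdown_{(\X,\qo)}$ you use is exactly the one set up in the sentence preceding the statement, and together with Example~\ref{ex:downward closure operator} and Theorem~\ref{thm:closure operator vs Galois connection} it is clearly the intended route. Your forward direction is correct: $\bar C$ is the lower adjoint of $J$ by Theorem~\ref{thm:closure operator vs Galois connection}, lower adjoints compose (as you verify via Theorem~\ref{thm:Galois}(2)), and hence $\qsup_{(\X,\qo)}\circ\bar C$ is lower adjoint to $J\circ\qdown_{(\X,\qo)}=D_{(\X,\qo)}$.

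The converse, however, contains a genuine error. You assert that ``$\bar C$ has the lower adjoint $J$'', but Theorem~\ref{thm:closure operator vs Galois connection} says the opposite: $\bar C$ \emph{is} the lower adjoint of $J$, i.e.\ $J$ is the \emph{upper} adjoint of $\bar C$. Your composition-of-adjoints argument therefore does not apply to $\qdown_{(\X,\qo)}=\bar C\circ D_{(\X,\qo)}$ as written; indeed, $\bar C$ need not possess a lower adjoint at all (unless the order on $\X$ is trivial, $\qDwn(\X,\qo)$ is a proper reflective, not coreflective, sub-poset of $\qPow(\X)$). Your candidate $F_{(\X,\qo)}\circ J$ is nevertheless the right one, but it needs a different justification: use directly that $J$ is an order embedding, so that $J\circ K\sqsubseteq J\circ M$ iff $K\sqsubseteq M$ for all functions $K,M$ into $\D(\X,\qo)$. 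Then chain
\[
F_{(\X,\qo)}\circ J\circ K\ \sqsubseteq_\X\ M \;\iff\; J\circ K\ \sqsubseteq_{\qPow(\X)}\ D_{(\X,\qo)}\circ M = J\circ\qdown_{(\X,\qo)}\circ M \;\iff\; K\ \sqsubseteq_{\qDwn(\X,\qo)}\ \qdown_{(\X,\qo)}\circ M,
\]
which exhibits $F_{(\X,\qo)}\circ J$ as lower adjoint to $\qdown_{(\X,\qo)}$ via Theorem~\ref{thm:Galois}(2). In other words, in the converse you should work with the factorization $D_{(\X,\qo)}=J\circ\qdown_{(\X,\qo)}$ and the order-embedding property of $J$, rather than with $\qdown_{(\X,\qo)}=\bar C\circ D_{(\X,\qo)}$.
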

Classically, for any ordinary set $X$ and any subset $\A\subseteq\Pow(X)$, the intersection $\bigcap\A$ of all subsets in $\A$ is given by $X\setminus\bigcup_{A\in\A}(X\setminus A)$, so by using the union operator and the complement operator. Since for any quantum set $\X$ its quantum power set $\qPow(\X)=\qDwn(\X,I_\X)$ is a quantum suplattice (cf. Theorem \ref{thm:quantum down sets form quantum suplattice}), and similar to the complementation operator on ordinary power sets, we have an order isomorphism $C_\X\colon\qPow(\X)\to\qPow(\X)^\op$, we can prove:
\begin{lemma}
   The quantum power set $\qPow(\X)$ of any quantum set $\X$ is a quantum inflattice.
\end{lemma}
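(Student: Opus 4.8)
The plan is to obtain the statement by combining Theorem~\ref{thm:quantum down sets form quantum suplattice} with the complement order isomorphism $C_\X\colon\qPow(\X)\to\qPow(\X)^\op$ produced in Section~\ref{sec:quantum power set} from Corollary~\ref{cor:complement}. Since $\qPow(\X)=\qDwn(\X,I_\X)$, Theorem~\ref{thm:quantum down sets form quantum suplattice} applied to the trivially ordered quantum set $(\X,I_\X)$ already tells us that $\qPow(\X)$ is a quantum suplattice. By the definition of a quantum inflattice it then suffices to show that $\qPow(\X)^\op$ is a quantum suplattice, and since $C_\X$ exhibits an order isomorphism $\qPow(\X)\iso\qPow(\X)^\op$, the whole statement reduces to the general fact that \emph{the class of quantum suplattices is closed under order isomorphism}. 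I would isolate this as a short auxiliary lemma.

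To prove that lemma, let $\Phi\colon(\X,\qo_\X)\to(\Y,\qo_\Y)$ be an order isomorphism with $(\X,\qo_\X)$ a quantum suplattice, and let $\qsup_{(\X,\qo_\X)}\colon\qDwn(\X,\qo_\X)\to(\X,\qo_\X)$ be the lower adjoint of $\qdown_{(\X,\qo_\X)}$. First I would record that any order isomorphism is the lower adjoint of its inverse: for a bijective function $\Phi^{-1}=\Phi^\dag$, and from the order-embedding identity $\qo_\X=\Phi^\dag\circ\qo_\Y\circ\Phi$ together with $\Phi\circ\Phi^\dag=I_\Y$ one gets $\qo_\Y\circ\Phi=\Phi\circ\qo_\X=(\Phi^{-1})^\dag\circ\qo_\X$, which is exactly the Galois condition. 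Next, $\qDwn$ is a functor on $\qPOS$, so $\qDwn(\Phi)$ is again an order isomorphism, hence itself a lower adjoint of $\qDwn(\Phi)^{-1}$. Finally, lower adjoints compose: if $F_1$ is a lower adjoint of $G_1$ and $F_2$ a lower adjoint of $G_2$ with $F_2\circ F_1$ defined, then from $\qo\circ F_2=G_2^\dag\circ\qo$ and $\qo\circ F_1=G_1^\dag\circ\qo$ one computes $\qo\circ(F_2\circ F_1)=G_2^\dag\circ G_1^\dag\circ\qo=(G_1\circ G_2)^\dag\circ\qo$ (with the evident subscripts on the orders), so $F_2\circ F_1$ is the lower adjoint of $G_1\circ G_2$. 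Applying this twice to $\Phi\circ\qsup_{(\X,\qo_\X)}\circ\qDwn(\Phi)^{-1}$ shows it is the lower adjoint of $\qDwn(\Phi)\circ\qdown_{(\X,\qo_\X)}\circ\Phi^{-1}$, and naturality of the unit $\qdown$ of the $\qDwn$-monad gives $\qDwn(\Phi)\circ\qdown_{(\X,\qo_\X)}\circ\Phi^{-1}=\qdown_{(\Y,\qo_\Y)}\circ\Phi\circ\Phi^{-1}=\qdown_{(\Y,\qo_\Y)}$. Hence $\qdown_{(\Y,\qo_\Y)}$ has a lower adjoint, i.e., $(\Y,\qo_\Y)$ is a quantum suplattice.

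With the lemma in hand, taking $\Phi=C_\X$ gives that $\qPow(\X)^\op$ is a quantum suplattice, which by definition means $\qPow(\X)$ is a quantum inflattice. I do not anticipate a genuine obstacle here; the only points requiring a little care are the two elementary facts used in the lemma — that an order isomorphism is a Galois adjoint of its inverse (where the dagger-compact structure enters through $\Phi^{-1}=\Phi^\dag$) and that lower Galois adjoints are stable under composition — both of which follow directly from the relational definitions of monotone map and Galois connection, without invoking Theorem~\ref{thm:Galois}. If one prefers, the same reduction can be run through Lemma~\ref{lem:alt definition qsup} using the embedding $D_{(\X,\qo)}$ in place of $\qdown_{(\X,\qo)}$, but the monad-unit formulation is the most economical.
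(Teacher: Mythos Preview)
Your proposal is correct and follows exactly the route the paper sketches in the sentence preceding the lemma: use that $\qPow(\X)=\qDwn(\X,I_\X)$ is a quantum suplattice by Theorem~\ref{thm:quantum down sets form quantum suplattice}, and then transport this along the complement order isomorphism $C_\X\colon\qPow(\X)\to\qPow(\X)^\op$ from Section~\ref{sec:quantum power set}. The paper leaves the auxiliary fact that quantum suplattices are closed under order isomorphism implicit, whereas you spell it out via naturality of $\qdown$ and composition of Galois adjoints; this extra detail is sound and does not deviate from the intended argument.
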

Then, given a quantum suplattice $(\X,\qo)$ with embedding $D_{(\X,\qo)}\colon(\X,\qo)\to\qPow(\X)$ that is the upper adjoint of $F$, and denoting the lower adjoint of $D_{\qPow(\X)^\op}\colon\qPow(\X)^\op\to\qPow(\P(\X))$ by $N$, we can show that $F\circ N\circ\qPow(D_{(\X,\qo)})$ is the lower Galois adjoint of $D_{(\X,\qop)}\colon(\X,\qop)\to\qPow(\X)$, proving:
\begin{theorem}\label{thm:quantum suplattice if quantum inflattice}
Any quantum suplattice $(\X,\qo)$ is a quantum inflattice.
\end{theorem}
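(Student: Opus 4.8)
The plan is to follow exactly the classical argument, in which for a suplattice $X$ the infima are constructed from unions and complements, but now executed with the categorical machinery assembled in the previous sections. By Lemma~\ref{lem:alt definition qsup}, it suffices to produce a lower Galois adjoint for the embedding $D_{(\X,\qop)}\colon(\X,\qop)\to\qPow(\X)$. Note $\qPow(\X)=\qDwn(\X,I_\X)$ has the same underlying quantum set whether we order it for $\X$ or for $\X^\op$, since the power-set order uses the trivial order $I_\X=I_\X^\dag$; this is the quantum incarnation of ``the down-sets of $X$ and of $X^\op$ are the same sets, ordered oppositely,'' and it is what lets the same object $\qPow(\X)$ mediate between $(\X,\qo)$ and $(\X,\qop)$.

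First I would unpack the data available. Since $(\X,\qo)$ is a quantum suplattice, Lemma~\ref{lem:alt definition qsup} gives a lower adjoint $F\colon\qPow(\X)\to(\X,\qo)$ of $D_{(\X,\qo)}$. Applying Theorem~\ref{thm:quantum down sets form quantum suplattice} to the trivially ordered quantum set $\X$ shows $\qPow(\X)$ is a quantum suplattice, and the preceding lemma shows it is a quantum inflattice, i.e.\ $\qPow(\X)^\op$ is a quantum suplattice; so by Lemma~\ref{lem:alt definition qsup} again the embedding $D_{\qPow(\X)^\op}\colon\qPow(\X)^\op\to\qPow(\P(\X))$ has a lower adjoint $N$. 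Finally, functoriality of $\qPow$ (from the monad structure, via Section~\ref{sec:quantum power set}) applied to the monotone map $D_{(\X,\qo)}\colon(\X,\qo)\to\qPow(\X)$ yields a monotone map $\qPow(D_{(\X,\qo)})\colon\qPow(\X)\to\qPow(\P(\X))$. I would then claim that the composite
\[
F\circ N\circ\qPow(D_{(\X,\qo)})\colon\qPow(\X)\longrightarrow(\X,\qop)
\]
is the desired lower Galois adjoint of $D_{(\X,\qop)}$.

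To verify the claim I would use characterization~(3) of Theorem~\ref{thm:Galois}: it suffices to check $I_\X\sqsubseteq_{\X^\op}\bigl(F\circ N\circ\qPow(D_{(\X,\qo)})\bigr)\circ D_{(\X,\qop)}$ and $D_{(\X,\qop)}\circ\bigl(F\circ N\circ\qPow(D_{(\X,\qo)})\bigr)\sqsubseteq_{\qPow(\X)}I_{\qPow(\X)}$, where the pointwise order $\sqsubseteq_{\X^\op}$ is computed with $\qop$. Each of the three constituent maps satisfies a pair of such unit/counit inequalities by Theorem~\ref{thm:Galois}(3) for its own Galois connection, and the strategy is to chain these, being careful that the orders flip: a $\sqsubseteq$-inequality for $\qPow(\X)$ becomes a $\sqsupseteq$-inequality usable on the $\X^\op$ side. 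Concretely, unwinding $\sqsubseteq$ into the relational inequalities $M\leq(\qo)\circ L$ and composing the triangle identities for $F\dashv D_{(\X,\qo)}$, for $N\dashv D_{\qPow(\X)^\op}$, and for the naturality square relating $\qPow(D_{(\X,\qo)})$ to the units, should collapse the composite to an identity up to order, exactly as the classical computation $X\setminus\bigcup_{A\in\A}(X\setminus A)=\bigcap\A$ does after one expands the definitions of complement and union.

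The main obstacle I anticipate is bookkeeping rather than a conceptual gap: tracking which of the two orders ($\qo$ or $\qop$) and which direction of pointwise order ($\sqsubseteq$ or $\sqsupseteq$) is in force at each stage, and ensuring that the naturality of the complement isomorphism $C$ from Corollary~\ref{cor:complement} and of the power-set unit are invoked with the correct variances so that the three Galois connections compose. In particular one must confirm that $\qPow(D_{(\X,\qo)})$ interacts correctly with the embedding $D_{\qPow(\X)^\op}$ — this is the place where the identification $\U(\X,I_\X)=\D(\X,I_\X)$ and the order isomorphism $C_\X\colon\qPow(\X)\to\qPow(\X)^\op$ do the real work, playing the role of ``complementation commutes with direct images up to complementation.'' Once the variances are pinned down, each step reduces to an application of Theorem~\ref{thm:Galois} together with functoriality, with no genuinely new quantum phenomenon to overcome.
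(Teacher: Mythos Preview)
Your proposal is correct and matches the paper's approach exactly: the paper also constructs the lower adjoint of $D_{(\X,\qop)}$ as the composite $F\circ N\circ\qPow(D_{(\X,\qo)})$, with $F$ the lower adjoint of $D_{(\X,\qo)}$ and $N$ the lower adjoint of $D_{\qPow(\X)^\op}$. Your additional remarks about verifying the Galois connection via Theorem~\ref{thm:Galois}(3) and tracking variances are sound elaborations of what the paper leaves implicit.
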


\section{Enrichment}\label{sec:enrichment}
It was shown in \cite{KLM22} that the pointwise order of functions between quantum sets induces a $\POS$-enrichment of $\qPOS$. Similarly, in \cite{KLM20} it was shown that the category $\qCPO$ of quantum cpos is enriched over the category $\CPO$ of cpos, i.e., posets for which any monotonically increasing sequence has a supremum. We can enrich $\qSup$ over $\Sup$ in a similar way. We first have to quantize the supremum of collections of functions into a quantum suplattice. 
\begin{definition}
    Let $\X$ be a quantum set and $(\Y,\qo)$ a quantum poset. Let $\kK$ be a subset of $\qSet(\X,\Y)$. Then a function $F\colon\X\to\Y$ is called the \emph{limit} of $\kK$, denoted by $F=\lim\kK$ 
    if $(\qo)\circ F=\bigwedge_{K\in\kK}(\qo)\circ K$.
    \end{definition}
A quantum cpo is a quantum poset $(\Y,\qo)$ for which the limit of any countable chain in $\qSet(\X,\Y)$ exists for any quantum set $\X$, so the above definition is a  generalization of the concept of limits in \cite{KLM22}.

Let $\X$ be a quantum set and let $(\Y,\qo)$ be a quantum poset. Let $\kK\subseteq\qSet(\X,\Y)$. If $\lim\kK$ exists, then it is the supremum of $\kK$ in $\qSet(\X,\Y)$ ordered by $\sqsubseteq$. The converse does not always hold. If $\Y$ is a quantum suplattice, then $\lim\kK$ always exists. We can prove this by first showing the case $\Y=\mathbf 2$, then the case $\Y=\qPow(\X)$, which equals $[\X^*,\mathbf 2]_\sqsubseteq$, and finally the general case using Lemma \ref{lem:alt definition qsup}. Moreover, one can show that $\lim\kK$ is a homomorphism of quantum suplattices if $\X$ is also a quantum suplattice and all functions in $\kK$ are homomorphisms of quantum suplattices. Finally, one can show that composition with homomorphisms of quantum suplattices preserves the operation of taking limits, yielding:  \begin{theorem}
$\qSup$ is enriched over $\Sup$.  
\end{theorem}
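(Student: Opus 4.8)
The plan is to prove that $\qSup$ is enriched over $\Sup$ by following the recipe already outlined in the paragraph preceding the statement, filling in the three layers of the bootstrap. First I would establish that for any quantum set $\X$ and any subset $\kK\subseteq\qSet(\X,\mathbf 2)=\qSet(\X,\mathbf 2)$, the limit $\lim\kK$ exists: since $\mathbf 2=`2$ has only the one nontrivial atom and $\qo_\mathbf 2$ is concrete, the condition $(\qo_\mathbf 2)\circ F=\bigwedge_{K\in\kK}(\qo_\mathbf 2)\circ K$ can be solved explicitly, the right-hand side being a meet of subspaces in the relevant homsets of $\qRel$; one checks the resulting relation $F$ is a function and is the desired limit. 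The key structural point here is that meets in the homset lattices of $\qRel$ are well-behaved and that $\mathbf 2$ is ``small'' enough that the meet defines a function rather than merely a relation.

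Next I would upgrade to $\Y=\qPow(\X')=[\X'^*,\mathbf 2]_\sqsubseteq$ for an arbitrary quantum set $\X'$: using the monoidal closed structure of $\qPOS$ (and $\qSet$), a family of functions $\X\to[\X'^*,\mathbf 2]_\sqsubseteq$ transposes to a family of functions $\X\times\X'^*\to\mathbf 2$, and one shows that $\lim$ is computed ``pointwise'' in the sense that the transpose of $\lim\kK$ is the $\mathbf 2$-valued limit of the transposed family, which exists by the first step. This requires checking that the $\Eval_\sqsubseteq$ morphism and the transposition bijection interact correctly with the defining equation $(\qo)\circ F=\bigwedge_K(\qo)\circ K$ — essentially that precomposition and the closed structure preserve meets, which follows from the $\Sup$-enrichment of $\qRel$. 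Then the general case of a quantum suplattice $(\Y,\qo)$ follows from Lemma~\ref{lem:alt definition qsup}: the embedding $D_{(\Y,\qo)}\colon(\Y,\qo)\to\qPow(\Y)$ is an order embedding with a lower adjoint $F_{(\Y,\qo)}$, and one shows $\lim\kK = F_{(\Y,\qo)}\circ \lim\bigl(\{D_{(\Y,\qo)}\circ K : K\in\kK\}\bigr)$; that the right-hand side satisfies the limit equation uses that $D_{(\Y,\qo)}$ reflects the order (being an order embedding) and that $F_{(\Y,\qo)}$ is a suplattice homomorphism, hence commutes suitably with meets of the form $(\qo)\circ(-)$.

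Having shown that limits of arbitrary subsets of $\qSet(\X,\Y)$ exist whenever $\Y$ is a quantum suplattice, I would next verify three closure properties needed for the enrichment. First, when $\X$ is itself a quantum suplattice and every $K\in\kK$ is a homomorphism of quantum suplattices, then $\lim\kK$ is again such a homomorphism: this amounts to checking the equation $(\lim\kK)\circ\qsup_{(\X,\qo_\X)}=\qsup_{(\Y,\qo_\Y)}\circ\D(\lim\kK)$, which one reduces, via the defining meet-equations and naturality, to the corresponding equations for the individual $K$, using that $\D$ preserves the relevant structure and that $\qsup$ is a Galois adjoint (Theorem~\ref{thm:Galois}). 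Second, composition preserves limits: for homomorphisms $H\colon\Y\to\Z$ and $G\colon\W\to\X$ one checks $H\circ\lim\kK=\lim(H\circ\kK)$ and $(\lim\kK)\circ G=\lim(\kK\circ G)$, again reducing to the meet-equation and using that composition in $\qRel$ preserves meets on the appropriate side (and that $H$ being a homomorphism lets it pass through the meet $\bigwedge_K(\qo)\circ K$). Third, one notes that $\qSet(\X,\Y)$ with $\sqsubseteq$ is a complete lattice because $\lim$ provides suprema (the supremum of $\kK$ is $\lim\kK$ when the latter exists, and existence is guaranteed here), so each hom-object lies in $\Sup$, and the composition maps are $\Sup$-morphisms by the second property; bilinearity/associativity/unit coherence of this enrichment is inherited from the ordinary composition of $\qSet$.

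The main obstacle I expect is the base case $\Y=\mathbf 2$ together with the passage to $\qPow(\X')$: one must verify that the meet $\bigwedge_{K\in\kK}(\qo_\mathbf 2)\circ K$, taken in a homset lattice of $\qRel$, actually arises as $(\qo_\mathbf 2)\circ F$ for a genuine \emph{function} $F$, not merely a relation, and that this $F$ is monotone — the analogous classical fact is trivial, but in the quantum setting it hinges on the specific structure of $\mathbf 2$ and on $\qo_\mathbf 2$ being a ``classical'' order, so one cannot simply invoke abstract nonsense. A secondary subtlety is bookkeeping in the reduction for the general quantum suplattice: Lemma~\ref{lem:alt definition qsup} gives the embedding into $\qPow(\X)$ rather than $\qDwn(\X,\qo)$, so one must be careful that the lower adjoint $F_{(\X,\qo)}$ interacts with $\lim$ correctly, which is where the order-embedding property (that $(\qo_\X)=D^\dag\circ\qo_{\qPow(\X)}\circ D$) does the real work. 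Everything beyond these points is a routine, if lengthy, chase through the defining equations using the $\Sup$-enrichment and dagger-compactness of $\qRel$.
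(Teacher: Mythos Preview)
Your proposal is correct and follows essentially the same approach as the paper: the paper's own argument is precisely the three-step bootstrap ($\mathbf 2$, then $\qPow(\X')=[\X'^*,\mathbf 2]_\sqsubseteq$, then general quantum suplattices via Lemma~\ref{lem:alt definition qsup}), followed by closure of $\lim$ under homomorphisms and compatibility with composition, which is exactly what you outline. Your identification of the base case and the order-embedding bookkeeping as the delicate points is also accurate.
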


\section{Fixpoints}
Let $\X$ be a quantum set and $\Y$ a quantum suplattice. Since $\qSet(\X,\Y)$ is a complete lattice, it follows from the Knaster-Tarski fixpoint theorem that:
\begin{proposition}
    Let $F\colon\Y\to\Y$ be a monotone map on a quantum suplattice $\Y$. Then for each quantum set $\X$, the set of all functions $K\colon\X\to\Y$ such that $F\circ K=K$ is a complete lattice. 
\end{proposition}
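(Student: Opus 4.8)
The plan is to deduce this from the classical Knaster--Tarski theorem applied to the ordinary complete lattice $\qSet(\X,\Y)$. The first step is to verify that, for a quantum suplattice $\Y$ and an arbitrary quantum set $\X$, the homset $\qSet(\X,\Y)$ ordered by $\sqsubseteq_\Y$ is a complete lattice. This is exactly what the enrichment results of Section~\ref{sec:enrichment} provide: since $\Y$ is a quantum suplattice, $\lim\kK$ exists for every $\kK\subseteq\qSet(\X,\Y)$, and $\lim\kK$ is the supremum of $\kK$ in $(\qSet(\X,\Y),\sqsubseteq_\Y)$. A poset in which every subset has a join is a complete lattice (taking $\kK=\emptyset$ and $\kK=\qSet(\X,\Y)$ supplies a bottom and a top, and an arbitrary meet is recovered as the join of the set of lower bounds), so $(\qSet(\X,\Y),\sqsubseteq_\Y)$ is a complete lattice.

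The second step is to exhibit a monotone endomap of this lattice whose set of fixpoints is $\{K\colon F\circ K=K\}$. The candidate is $\Phi\colon\qSet(\X,\Y)\to\qSet(\X,\Y)$, $\Phi(K)=F\circ K$, which is well defined because the composite of two functions in $\qSet$ is again a function. For monotonicity, suppose $K\sqsubseteq_\Y K'$, i.e. $K'\leq(\qo_\Y)\circ K$. Composing on the left with $F$ and then using that $F$ is monotone, i.e. $F\circ(\qo_\Y)\leq(\qo_\Y)\circ F$, we get $F\circ K'\leq F\circ(\qo_\Y)\circ K\leq(\qo_\Y)\circ F\circ K$, which says exactly $\Phi(K)\sqsubseteq_\Y\Phi(K')$; here the only ambient fact used is that composition in $\qRel$ is monotone in each argument. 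Clearly $\Phi(K)=K$ iff $F\circ K=K$.

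The final step is to invoke the classical Knaster--Tarski fixpoint theorem: the set of fixpoints of a monotone endomap of a complete lattice, ordered by the restriction of the ambient order, is again a complete lattice. Applying this to $\Phi$ on $(\qSet(\X,\Y),\sqsubseteq_\Y)$ gives the statement. I do not expect a genuine obstacle here: the substantive content---that $\qSet(\X,\Y)$ is a complete lattice---is already established in Section~\ref{sec:enrichment}, and what remains is the elementary order-theoretic passage from "all joins" to "complete lattice" together with the routine monotonicity computation above. The only point worth flagging is that the complete-lattice structure on the fixpoint set is the one furnished by Knaster--Tarski and need not be computed by the ambient joins and meets of $\qSet(\X,\Y)$, but this does not affect the statement as phrased.
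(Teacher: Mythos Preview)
Your proposal is correct and follows precisely the paper's approach: the paper introduces the proposition with the sentence ``Since $\qSet(\X,\Y)$ is a complete lattice, it follows from the Knaster-Tarski fixpoint theorem that:'', which is exactly your argument. You have simply spelled out the monotonicity of $K\mapsto F\circ K$ that the paper leaves implicit.
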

Categorically, a generalized fixpoint of an endomorphism $f\colon Y\to Y$ in a given category $\mathbf C$ can be defined as a monomorphism $m\colon X\to Y$ for some object $X$ such that $f\circ m=m$. This leads to:
\begin{definition}
    Let $(\Y,\qo)$ be a quantum poset, and let $F\colon\Y\to\Y$ be a monotone function. If the largest subset $\X$ of $\Y$ such that $F\circ J_\X=J_\X$ exists, we call it the \emph{quantum set of fixpoints of} $F$, denoted by $\mathrm{Fix}(F)$. Similarly, if the largest subset $\X$ of $\Y$ such that  $F\circ J_\X\sqsubseteq J_\X$ exists, we call it the  \emph{quantum set of prefixpoints} of $F$, denoted by $\mathrm{Pre}(F)$. Finally, if the largest subset $\X$ of $\Y$ such that $F\circ J_\X\sqsupseteq J_\X$ exists, we call it  the \emph{quantum set of postfixpoints} of $F$, denoted by $\mathrm{Post}(F)$.
    \end{definition}
    Denoting $\Q\{X\}$ for the quantum subset of $\Y$ consisting of the atom $X\atomof\Y$, we can show that the largest subset $\mathrm{Post}(F)$ of postfixpoints of a monotone endofunction $F$ on a quantum poset $\Y$ exists and is determined by $\At(\X)=\{X\atomof\Y\colon J_{\Q\{X\}}\sqsubseteq F\circ J_{\Q\{X\}}\}$. Similarly, one can show that $\mathrm{Pre}(F)$ and $\mathrm{Fix}(F)$ exist. 
    
    Classically, the postfixpoints $P$ of a monotone map $f\colon Y\to Y$ on a suplattice form a suplattice. This can be seen by a follows. $P=\{y\in Y\colon y\leq f(y)\}$. Let $S\subseteq P$, and let $x=\sup S$. Then by monotonicity of $f$, we have $y\leq f(y)\leq f(x)$ for each $y\in S$, hence $x=\bigvee S\leq f(x)$, showing that $P$ is closed under suprema, so a suplattice. In a similar way, we can prove:
\begin{proposition}
  Given a monotone endofunction $F\colon\Y\to\Y$ on a quantum suplattice $\Y$, the quantum set  $\mathrm{Post}(F)$ of postfixpoints of $F$ is a quantum suplattice with respect to the relative order.
\end{proposition}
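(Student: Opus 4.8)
The plan is to follow the classical proof: the postfixpoints of $F$ are closed under the suprema computed in $\Y$, so the supremum map of $\Y$ corestricts to a supremum map on $\mathrm{Post}(F)$. Write $\X:=\mathrm{Post}(F)$, equipped with the relative order $\qo_\X:=\qo|_\X^\X$, and let $J:=J_\X\colon(\X,\qo_\X)\to(\Y,\qo)$ be the inclusion, which is an order embedding and an injective function. By the construction of $\mathrm{Post}(F)$ as the largest subset of $\Y$ with $F\circ J_\X\sqsupseteq J_\X$, we have $J\sqsubseteq F\circ J$. Let $\qsup_\Y\colon\qDwn(\Y)\to\Y$ be the lower Galois adjoint of $\qdown_\Y$, which exists since $\Y$ is a quantum suplattice. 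I will produce the lower Galois adjoint of $\qdown_\X\colon(\X,\qo_\X)\to\qDwn(\X,\qo_\X)$ as the corestriction to $\X$ of $G:=\qsup_\Y\circ\qDwn(J)\colon\qDwn(\X)\to\Y$.

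The key step, and the one I expect to be the main obstacle, is the inequality $G\sqsubseteq F\circ G$, the quantum counterpart of the classical chain ``$s\leq F(s)\leq F(\bigvee S)$, hence $\bigvee S\leq F(\bigvee S)$'' for a family $S$ of postfixpoints. I would unwind it into two relational inequalities. First, $\qDwn$ is a locally monotone endofunctor of $\qPOS$ (this follows from the $\POS$-enrichment of $\qPOS$ and the construction of $\qDwn$ in Theorem~\ref{thm:downset}), so applying it to $J\sqsubseteq F\circ J$ gives $\qDwn(J)\sqsubseteq\qDwn(F)\circ\qDwn(J)$, and postcomposing with the monotone map $\qsup_\Y$ gives $G\sqsubseteq\qsup_\Y\circ\qDwn(F)\circ\qDwn(J)$. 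Second, I would establish the lax naturality square $\qsup_\Y\circ\qDwn(F)\sqsubseteq F\circ\qsup_\Y$: by Theorem~\ref{thm:Galois}(2) for the connection $\qsup_\Y\dashv\qdown_\Y$ this is equivalent to $\qDwn(F)\sqsubseteq\qdown_\Y\circ F\circ\qsup_\Y$, which by naturality of the monad unit $\qdown$ equals $\qDwn(F)\circ\qdown_\Y\circ\qsup_\Y$, and the latter follows from $I_{\D(\Y)}\sqsubseteq\qdown_\Y\circ\qsup_\Y$ (Theorem~\ref{thm:Galois}(3)) by postcomposing with the monotone $\qDwn(F)$. Precomposing this square with $\qDwn(J)$ gives $\qsup_\Y\circ\qDwn(F)\circ\qDwn(J)\sqsubseteq F\circ G$, and chaining by transitivity of $\sqsubseteq$ yields $G\sqsubseteq F\circ G$. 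The genuine subtleties here are two structural facts about $\qDwn$ used in this argument and the next paragraph: that it is locally monotone, and that it carries order embeddings to order embeddings. Both hold classically for formal reasons, and I would verify the quantum versions from the definition of $\qDwn$.

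From $G\sqsubseteq F\circ G$ I would pass to the corestriction. Writing $G=J_{\ran G}\circ\bar G$ with $\bar G$ surjective, so $\bar G\circ\bar G^\dag=I_{\ran G}$, the inequality $G\sqsubseteq F\circ G$ reads $F\circ J_{\ran G}\circ\bar G\leq(\qo)\circ J_{\ran G}\circ\bar G$; postcomposing on the right with $\bar G^\dag$ gives $F\circ J_{\ran G}\leq(\qo)\circ J_{\ran G}$, i.e. $F\circ J_{\ran G}\sqsupseteq J_{\ran G}$, so $\ran G\subseteq\mathrm{Post}(F)=\X$ by maximality of $\mathrm{Post}(F)$. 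Hence $G$ corestricts to a function $s\colon\qDwn(\X)\to\X$ with $J\circ s=G$, and $s$ is monotone because $J$ is an order embedding and $G=\qsup_\Y\circ\qDwn(J)$ is monotone.

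Finally I would verify that $s$ is the lower adjoint of $\qdown_\X$ by checking condition (3) of Theorem~\ref{thm:Galois}. For $s\circ\qdown_\X\sqsubseteq I_\X$: by naturality of $\qdown$, $J\circ s\circ\qdown_\X=\qsup_\Y\circ\qDwn(J)\circ\qdown_\X=\qsup_\Y\circ\qdown_\Y\circ J\sqsubseteq I_\Y\circ J=J=J\circ I_\X$ (using $\qsup_\Y\circ\qdown_\Y\sqsubseteq I_\Y$ from Theorem~\ref{thm:Galois}(3)), and since $J$ is an order embedding this forces $s\circ\qdown_\X\sqsubseteq I_\X$. For $I_{\D(\X)}\sqsubseteq\qdown_\X\circ s$: precomposing $I_{\D(\Y)}\sqsubseteq\qdown_\Y\circ\qsup_\Y$ with $\qDwn(J)$ and using naturality of $\qdown$ together with $J\circ s=\qsup_\Y\circ\qDwn(J)$ gives $\qDwn(J)\sqsubseteq\qdown_\Y\circ\qsup_\Y\circ\qDwn(J)=\qdown_\Y\circ J\circ s=\qDwn(J)\circ\qdown_\X\circ s$; since $\qDwn(J)$ is an order embedding it may be left-cancelled, yielding $I_{\D(\X)}\sqsubseteq\qdown_\X\circ s$. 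By Theorem~\ref{thm:Galois}, $s$ is the lower Galois adjoint of $\qdown_\X$, so $(\X,\qo_\X)$ is a quantum suplattice. I expect this last paragraph and the corestriction to be routine diagram chases; the real content is the inequality $G\sqsubseteq F\circ G$ and the bookkeeping about how $\qDwn$ interacts with the pointwise order and with order embeddings.
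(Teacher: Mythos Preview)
Your proposal is correct and follows exactly the strategy the paper indicates: the paper's ``proof'' is merely the sentence ``In a similar way, we can prove'' preceded by the classical argument that $\mathrm{Post}(f)$ is closed under suprema, and your argument is precisely the quantum transcription of that closure-under-suprema idea, carried out via $G=\qsup_\Y\circ\qDwn(J)$ and the Galois machinery of Theorem~\ref{thm:Galois}. Your identification of the two structural lemmas about $\qDwn$ (local monotonicity for the pointwise order, and preservation of order embeddings) as the points requiring independent verification is accurate; both hold, but neither is stated in the paper, so flagging them is appropriate rather than a gap.
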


Applying Theorem \ref{thm:quantum suplattice if quantum inflattice}, it is easy to show that $\mathrm{Pre}(F)$ also is a quantum suplattice with respect to the relative order. Finally, we can show that $F$ restricts and corestricts to a monotone function $F|_\mathrm{Post(F)}^{\mathrm{Post}(F)}$ and that $\mathrm{Fix}(F)=\mathrm{Pre}\left (F|_\mathrm{Post(F)}^{\mathrm{Post}(F)}\right)$ from which we conclude:
\begin{theorem}[Quantum Knaster-Tarski Theorem]
Let $F:\Y\to\Y$ be a monotone endomap on a quantum suplattice $\Y$. Then the quantum set $\mathrm{Fix}(F)$ of fixpoints of $F$ is a quantum suplattice with respect to the relative order.
\end{theorem}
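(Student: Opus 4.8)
The plan is to quantize the standard two-stage proof of the Knaster--Tarski theorem. Classically one first passes to the set of postfixpoints, a subposet closed under suprema on which the map restricts to an endomap, and then observes that the set of fixpoints is exactly the set of prefixpoints of this restricted endomap; since the postfixpoints form a complete lattice, so do these prefixpoints. Both ingredients have quantum counterparts already available: the Proposition that $\mathrm{Post}(F)$, with the order induced from $\Y$, is a quantum suplattice, and the fact recorded just before the theorem that the prefixpoints of a monotone endomap on a quantum suplattice again form a quantum suplattice with respect to the relative order (obtained from Theorem~\ref{thm:quantum suplattice if quantum inflattice}, that same Proposition, and Lemma~\ref{lem:functions and the opposite order}). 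It therefore remains to (i) show that $F$ corestricts to a monotone endomap $G$ on $\mathrm{Post}(F)$, (ii) identify $\mathrm{Fix}(F)$ with $\mathrm{Pre}(G)$, and (iii) check that the relative orders match up.

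For (i) --- which I expect to be the main obstacle, being the only non-formal step --- I would prove $\W:=\ran\big(F\circ J_{\mathrm{Post}(F)}\big)\subseteq\mathrm{Post}(F)$. Write $F\circ J_{\mathrm{Post}(F)}=J_{\W}\circ H$ with $H$ surjective. The defining inequality of $\mathrm{Post}(F)$ reads $F\circ J_{\mathrm{Post}(F)}\leq(\qo)\circ J_{\mathrm{Post}(F)}$; composing on the left with $F$ and then applying monotonicity of $F$ in the form $F\circ(\qo)\leq(\qo)\circ F$ gives $F\circ J_{\W}\circ H\leq(\qo)\circ J_{\W}\circ H$; composing on the right with $H^\dag$ and using $H\circ H^\dag=I_{\W}$ then yields $F\circ J_{\W}\leq(\qo)\circ J_{\W}$, i.e.\ $F\circ J_{\W}\sqsupseteq J_{\W}$, so $\W\subseteq\mathrm{Post}(F)$ by the maximality defining $\mathrm{Post}(F)$. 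Consequently $G:=F|_{\mathrm{Post}(F)}^{\mathrm{Post}(F)}$ is a well-defined function $\mathrm{Post}(F)\to\mathrm{Post}(F)$ with $F\circ J_{\mathrm{Post}(F)}=J_{\mathrm{Post}(F)}\circ G$, and it is monotone because $F\circ J_{\mathrm{Post}(F)}$ is monotone and the inclusion $J_{\mathrm{Post}(F)}$, being an injective order embedding, reflects monotonicity.

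For (ii), any $\Z\subseteq\Y$ with $F\circ J_{\Z}=J_{\Z}$ satisfies $F\circ J_{\Z}\sqsupseteq J_{\Z}$, hence $\Z\subseteq\mathrm{Post}(F)$; and for $\Z\subseteq\mathrm{Post}(F)$, writing $J_{\Z}=J_{\mathrm{Post}(F)}\circ J_{\Z}^{\mathrm{Post}(F)}$ and using that the injective order embedding $J_{\mathrm{Post}(F)}$ both preserves and reflects the pointwise order $\sqsubseteq$, one checks that $G\circ J_{\Z}^{\mathrm{Post}(F)}\sqsubseteq J_{\Z}^{\mathrm{Post}(F)}$ is equivalent to $F\circ J_{\Z}\sqsubseteq J_{\Z}$, which together with the automatic $F\circ J_{\Z}\sqsupseteq J_{\Z}$ and antisymmetry of $\sqsubseteq$ is equivalent to $F\circ J_{\Z}=J_{\Z}$. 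Thus the $G$-prefixpoint subsets of $\mathrm{Post}(F)$ are precisely the $F$-fixpoint subsets of $\Y$, and comparing largest elements gives $\mathrm{Fix}(F)=\mathrm{Pre}(G)$. For (iii), applying the quoted fact to the monotone endomap $G$ on the quantum suplattice $\mathrm{Post}(F)$ shows that $\mathrm{Pre}(G)$ is a quantum suplattice for the order induced from $\mathrm{Post}(F)$; since inclusion functions compose, this order equals the order induced on $\mathrm{Fix}(F)$ directly from $\Y$, so $\mathrm{Fix}(F)$ is a quantum suplattice with respect to the relative order, completing the proof.
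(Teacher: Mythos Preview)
Your proposal is correct and follows essentially the same route as the paper's sketch: the paper also shows that $F$ restricts and corestricts to a monotone endomap $F|_{\mathrm{Post}(F)}^{\mathrm{Post}(F)}$ on the quantum suplattice $\mathrm{Post}(F)$, identifies $\mathrm{Fix}(F)=\mathrm{Pre}\big(F|_{\mathrm{Post}(F)}^{\mathrm{Post}(F)}\big)$, and then invokes the fact (derived from the Proposition on $\mathrm{Post}$ together with Theorem~\ref{thm:quantum suplattice if quantum inflattice}) that prefixpoints of a monotone endomap on a quantum suplattice form a quantum suplattice in the relative order. Your write-up in fact supplies more detail than the paper gives, in particular the argument for the corestriction in step~(i) via surjectivity of $H$ and the verification in step~(ii) that order embeddings reflect $\sqsubseteq$.
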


\section{The relation between ordinary suplattices and quantum suplattices}

Quantum posets are noncommutative generalizations of ordinary posets, since we have a fully faithful functor $`(-)\colon\POS\to\qPOS$. This functor restricts and corestricts to a functor $\CPO\to\qCPO$, hence also 
quantum cpos are genuine noncommutative generalizations of ordinary cpos. Remarkably, quantum suplattices are a not noncommutative generalizations of ordinary suplattices, but only noncommutative versions of suplattice: the functor $`(-)\colon\POS\to\qPOS$ does not restrict and corestrict to a functor $`(-)\colon\Sup\to\qSup$. Indeed, if $B$ denotes the four-element Boolean algebra, which is clearly an ordinary suplattice, then $`B$ is not a quantum suplattice. In order to see this, we first recall Section \ref{sec:enrichment}, in which it was stated that the limit of any collection of functions with the same domain into a quantum suplattice always exists. We will give a collection of functions into $`B$ that does not have a limit. Write $B=\{0,a,b,1\}$, where $a^\perp=b$, and denote the order on $B$ by $\sqsubseteq$, so $0\sqsubseteq a,b\sqsubseteq 1$. 
Let $(\X,\qo)=`(B,\sqsubseteq)$.  Let $\H$ denote the atomic quantum set whose single atom $H$ is two-dimensional. One can show that any function $K\colon\H\to\X$ is of the form $K(H,\CC_x)=L(H,\CC_x)r_x$ for each $x\in B$, where $r_0,r_a,r_b,r_1$ are mutually orthogonal projections on $H$ whose sum equals $1_H$. Then $(\qo\circ K)(H,\CC_x)=L(H,\CC_x)\sum_{y\sqsubseteq x}r_y$.
So $(\qo\circ K)(H,\CC_x)$ is of the form $L(H,\CC_x)s_x$ for some projection $s_x$ of $H$, and clearly $s_0$, $s_a$, $s_b$ and $s_1$ mutually commute, because $r_0$, $r_a$, $r_b$, $r_1$ are mutually orthogonal.

Let $p$ and $q$ be two distinct nontrivial noncommuting projections on $H$. For instance, in the standard basis of $H$, let $p=\begin{pmatrix}
1 & 0\\
0 & 0
\end{pmatrix}$ and $q=\frac{1}{2}\begin{pmatrix}
1 & 1 \\
1 & 1 
\end{pmatrix}$. Then $p$ and $q$ are also not orthogonal, whence, $pq\neq 0$. However, since $p$ and $q$ are both atomic projections, we have $p\wedge q=0$.

Let $F,G\colon\H\to\X$ be functions defined as follows. The nonzero components of $F$ are given by $F(H,\CC_0)=L(H,\CC_0)p$ and $F(H,\CC_a)=L(H,\CC_a)p^\perp$, whereas the nonzero components of $G$ are given by $G(H,\CC_0)=L(H,\CC_0)q$, $G(H,\CC_b)=L(H,\CC_b)q^\perp$. Then 
\[(\qo\circ F)(H,\CC_x)=\begin{cases}
L(H,\CC_x)p, & x=0,b;\\
L(H,\CC_x), & x=a,1,
\end{cases}\]
\[(\qo\circ G)(H,\CC_x)=\begin{cases} L(H,\CC_x)q, & x=0,a;\\
L(H,\CC_x), & x=b,1.
\end{cases}\]
hence 
\[ (\qo\circ F\wedge \qo\circ G)(H,\CC_x)=\begin{cases} 
0, & x=0\\
L(H,\CC_a)q, & x=a;\\
L(H,\CC_b)p, & x=b;\\
L(H,\CC_1), & x=1.
\end{cases}\]
Assume that there is a function $K\colon\H\to\X$ such that $\qo\circ R=(\qo\circ F)\wedge(\qo\circ G)$. Since $(\qo\circ K)(H,\CC_a)=L(H,\CC_a)q$ and $(\qo\circ K)(H,\CC_b)=L(H,\CC_b)p$, it follows that $p$ and $q$ should commute, which contradicts our assumptions. We conclude that there is a $\kK\subseteq\qSet(\H,\X)$ such that $\lim\kK$ does not exists, namely $\kK=\{F,G\}$. Hence, $\X$ cannot be a quantum suplattice, for which all limits should exist. 

The main reason why suplattices are not quantum suplattices lies in the fact that the $`(-)$ functors do not commute with $\P$ and $\D$. Let $S$ be a set. Then $`P(S)$ does not equal $\P(`S)$, but can only be identified with the subset of one-dimensional atoms of $\P(`S)$. That this subset is proper follows for instance from the proof of \cite[Proposition 9.3]{Kornell18}, which asserts that $(\mathbf 1\uplus \mathbf 1)*(\mathbf 1\uplus \mathbf 1)$, which can be identified with $\P(`2)$, has uncountably many atoms. The same is true for $`D(S)$ if $S$ is a poset: $`D(S)$ can only be identified with the subset of one-dimensional atoms of $\D(`S)$.  We plan to investigate whether the quantum power set monad $\P$ is the right Kan extension of $`(-)\circ P\colon\Set\to\qSet$ along $`(-)\colon\Set\to\qSet$, where $P$ denotes the ordinary power set monad. If this is true, we also expect that the quantum down-set monad $\qDwn$ is the right Kan extension of $`(-)\circ\Dwn\colon\POS\to\qPOS$ along $`(-)\colon\POS\to\qPOS$.

It follows from the $\Sup$-enrichment of $\qSup$ that the one-dimensional atoms of a quantum suplattice form an ordinary suplattice. We expect that any ordinary suplattice is the subposet of one-dimensional atoms of some quantum suplattice. 

By extension, if quantum suplattices indeed form the right notion that is needed to generalize topologies to the noncommutative setting, then it follows that an ordinary topology on an ordinary set is not a quantum topology, it might only be the classical part of a quantum topology. Moreover, it might be that there are several different quantum topologies on an ordinary set that have the same ordinary topology as classical part. If this is indeed the case, it is the question how to interpret this. Perhaps this would be a quantum feature, in the  same spirit as the result of two nonisomorphic graphs that are quantum isomorphic \cite{ATSERIASetal}.

\section{Future work}\label{sec:future}

The classical Knaster-Tarski Theorem implies the Cantor-Schr\"oder-Bernstein Theorem. We conjecture that a quantum version of Cantor-Schr\"oder-Bernstein can be derived from the quantum Knaster-Tarski Theorem.
\begin{conjecture}
Let $F\colon\X\to\Y$ and $G\colon\Y\to\X$ be injective functions between quantum sets. Then there exists a bijection between $\X$ and $\Y$. 
\end{conjecture}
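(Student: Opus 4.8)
The plan is to replay, for subsets of quantum sets, the classical derivation of Cantor-Schr\"oder-Bernstein from the Knaster-Tarski theorem. Recall the classical argument: given injections $f\colon X\to Y$ and $g\colon Y\to X$, the map $\varphi\colon\Pow(X)\to\Pow(X)$, $\varphi(A)=X\setminus g[Y\setminus f[A]]$, is monotone --- being the composite of the two monotone direct images and the two antitone complements --- so by Knaster-Tarski it has a fixpoint $A$; then $X=A\uplus g[Y\setminus f[A]]$ and $Y=f[A]\uplus(Y\setminus f[A])$, and gluing the bijection $A\to f[A]$ induced by $f$ with the inverse of the bijection $Y\setminus f[A]\to g[Y\setminus f[A]]$ induced by $g$ produces a bijection $X\to Y$.

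To carry this out, for a subset $\Z\subseteq\X$ write $F[\Z]:=\ran(F\circ J_\Z)\subseteq\Y$. On the complete lattice $\Pow(\At(\X))$ of subsets of $\X$ (ordered by inclusion), $\Z\mapsto F[\Z]$ is monotone, since $\Z\subseteq\Z'$ gives $F\circ J_\Z=(F\circ J_{\Z'})\circ J_\Z^{\Z'}$ and $\ran(H\circ K)\subseteq\ran H$ for composable $H,K$. Moreover, $F$ injective implies $F\circ J_\Z$ injective (from $F^\dag\circ F=I_\X$ and $J_\Z^\dag\circ J_\Z=I_\Z$ one gets $(F\circ J_\Z)^\dag\circ(F\circ J_\Z)=I_\Z$), and the corestriction of an injective function to its range is a bijection (surjective by construction, and injective because the inclusion of the range satisfies $J_{\ran H}^\dag\circ J_{\ran H}=I_{\ran H}$); so $\overline{F\circ J_\Z}\colon\Z\to F[\Z]$ is a bijection, and likewise for $G$. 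Since complementation on $\Pow(\At(\X))$ is antitone, the map $\Phi\colon\Pow(\At(\X))\to\Pow(\At(\X))$, $\Phi(\Z):=\X\setminus G[\Y\setminus F[\Z]]$, is monotone, and the classical Knaster-Tarski theorem supplies a fixpoint $\X_1$.

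From the equation $\Phi(\X_1)=\X_1$ I would read off $\X\setminus\X_1=\ran(G\circ J_{\Y_2})$ with $\Y_2:=\Y\setminus F[\X_1]$; setting $\X_2:=\X\setminus\X_1$ and $\Y_1:=F[\X_1]$, we obtain $\X=\X_1\uplus\X_2$ and $\Y=\Y_1\uplus\Y_2$ as disjoint unions of atoms, hence as coproducts in $\qSet$, with $\overline{F\circ J_{\X_1}}\colon\X_1\to\Y_1$ and $\overline{G\circ J_{\Y_2}}\colon\Y_2\to\X_2$ bijective by the previous paragraph. Copairing $J_{\Y_1}\circ\overline{F\circ J_{\X_1}}$ with $J_{\Y_2}\circ(\overline{G\circ J_{\Y_2}})^{-1}$ along $\X=\X_1\uplus\X_2$ then yields a function $\X\to\Y$ whose two-sided inverse is the copairing of the inverses along $\Y=\Y_1\uplus\Y_2$; this is the desired bijection. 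The degenerate case is immediate: if $\X$ is empty then $G^\dag\circ G=I_\Y$ forces $\Y$ empty, and symmetrically.

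This route appears to go through with only routine $\dagger$-calculus, the delicate points being the behaviour of images, corestrictions of injective functions, and complements at the level of subsets. I expect the genuine content --- and the reason the statement is flagged only as a conjecture --- to be the demand that the result be \emph{derived from the quantum Knaster-Tarski theorem} rather than from its classical shadow. For that one would instead form the monotone endomap $C_\X^{-1}\circ\qPow(G)^\op\circ C_\Y\circ\qPow(F)$ on the quantum suplattice $\qPow(\X)$ (Theorem~\ref{thm:quantum down sets form quantum suplattice}), where $\qPow(F),\qPow(G)$ are the direct-image morphisms, $C_\X,C_\Y$ are the complement order-isomorphisms of Section~\ref{sec:quantum power set}, and $\qPow(G)^\op$ is $\qPow(G)$ viewed as a monotone map $\qPow(\Y)^\op\to\qPow(\X)^\op$ through the functor $(-)^\op$ of Lemma~\ref{lem:functions and the opposite order}; applying the quantum Knaster-Tarski theorem produces the quantum suplattice of its fixpoints. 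The main obstacle is that an atom of $\qPow(\X)$ need not be a subset of $\X$, so one must show that this endomap restricts to the subposet of $\qPow(\X)$ spanned by the ``classical'' down-sets --- the analogue of the known identification of the ordinary power set with the one-dimensional part of the quantum power set --- and there agrees with the map $\Phi$ above, after which a classical fixpoint is available and the argument concludes as in the preceding paragraphs.
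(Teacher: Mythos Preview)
The paper gives no proof: the statement is recorded as a conjecture, framed by the sentence ``We conjecture that a quantum version of Cantor-Schr\"oder-Bernstein can be derived from the quantum Knaster-Tarski Theorem,'' and followed by the remark that ``the biggest potential obstacle is that an atom of the quantum power set of a quantum set $\X$ does not directly correspond to a subset of $\X$.'' There is thus no proof in the paper to compare against.

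Your classical argument in the first three paragraphs is correct and settles the conjecture as literally stated. The pieces you invoke --- that $\Z\mapsto\ran(F\circ J_\Z)$ is monotone on the ordinary complete lattice of subsets of $\X$, that an injective function corestricts to a bijection onto its range, and that a partition of $\At(\X)$ is a coproduct decomposition in $\qSet$ along which bijections glue --- all go through with the routine $\dagger$-computations you indicate. You also correctly anticipate that the authors' real interest is a derivation from the \emph{quantum} fixpoint theorem rather than from its classical shadow, and the obstacle you flag in your final paragraph is verbatim the one the paper names. What remains open, then, is not the bare existence of a bijection but whether the quantum Knaster-Tarski theorem can be made to carry the argument internally; your proposed route of restricting the endomap on $\qPow(\X)$ to its classical part would, if it works, essentially collapse back to the ordinary-lattice argument you have already given.
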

The biggest potential obstacle is that an atom of the quantum power set of a quantum set $\X$ does not directly correspond to a subset of $\X$, as is the case in the classical case.

A quantum Cantor-Schr\"oder-Bernstein Theorem can be reformulated in terms of operator algebras: 
\begin{conjecture}
Given surjective normal unital $*$-homomorphisms $\varphi:M\to N$ and $\psi:N\to M$ between hereditarily atomic von Neumann algebras, there must exist a $*$-isomorphism between $M$ and $N$.
\end{conjecture}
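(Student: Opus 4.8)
The plan is to reduce the second conjecture to the first. Given surjective normal unital $*$-homomorphisms $\varphi\colon M\to N$ and $\psi\colon N\to M$ between hereditarily atomic von Neumann algebras, we first pass through the equivalence $\ell^\infty\colon\qSet\to\mathbf{WStar}^\op$ established in the preliminaries: since $M$ and $N$ are hereditarily atomic, they lie in the essential image of $\ell^\infty$, so there are quantum sets $\X,\Y$ with $M\cong\ell^\infty(\X)$ and $N\cong\ell^\infty(\Y)$, and normal unital $*$-homomorphisms $N\to M$, $M\to N$ correspond to functions $\X\to\Y$, $\Y\to\X$ in $\qSet$ under the (contravariant) equivalence. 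The point to check is that \emph{surjective} normal unital $*$-homomorphisms correspond exactly to \emph{injective} functions of quantum sets; this should follow from the characterization of injective functions as $F^\dag\circ F=I_\X$ together with the standard fact that a normal $*$-homomorphism between hereditarily atomic von Neumann algebras is surjective iff the dual function is injective (dually to how, classically, a function of sets is injective iff the induced map of power sets / function algebras is surjective). Granting this, $\varphi$ and $\psi$ yield injective functions $F\colon\Y\to\X$ and $G\colon\X\to\Y$, and the first conjecture (quantum Cantor--Schröder--Bernstein) provides a bijection $\X\cong\Y$ in $\qSet$; applying $\ell^\infty$ back gives a $*$-isomorphism $M\cong N$.

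Since the first conjecture itself is only conjectured, I would not actually be proving the second outright but rather establishing the implication; nonetheless I sketch how the first conjecture is expected to be obtained, imitating the classical derivation of Cantor--Schröder--Bernstein from Knaster--Tarski. Classically, given injections $f\colon X\to Y$ and $g\colon Y\to X$, one considers the monotone endomap $H$ on the power set $\Pow(X)$ given by $A\mapsto X\setminus g[\,Y\setminus f[A]\,]$, takes a fixpoint $A_0=H(A_0)$ by Knaster--Tarski, and splices $f$ on $A_0$ with $g^{-1}$ on its complement. The quantum analogue would replace $\Pow(X)$ by the quantum power set $\qPow(\X)$, which by Theorem~\ref{thm:quantum down sets form quantum suplattice} (applied to the trivial order) is a quantum suplattice; the complement operator is the order isomorphism $C_\X\colon\qPow(\X)\to\qPow(\X)^\op$ from Section~\ref{sec:quantum power set}, and direct images along $F,G$ are the lower Galois adjoints $\qPow(F)$, $\qPow(G)$ from the Example after Theorem~\ref{thm:Galois}. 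One then forms the monotone endomap $\HH:=C_\X\circ\qPow(G)\circ C_\Y\circ\qPow(F)$ on $\qPow(\X)$ (suitably transported), and the Quantum Knaster--Tarski Theorem guarantees that $\mathrm{Fix}(\HH)$ is a nonempty quantum suplattice, hence has a top element / atom yielding an ``invariant subset'' $\X_0\qsubseteq\X$.

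The hard part — and the main obstacle, explicitly flagged in the paper — is the final splicing step: in the classical proof one uses that an atom (element) of $\Pow(X)$ literally \emph{is} a subset of $X$, so a fixpoint of $H$ is an honest subset on which one can restrict and invert $f$ and $g$. In the quantum setting an atom of $\qPow(\X)$ is merely a Hilbert space appearing in the quantum power set, not a subset of $\X$, and as remarked after Proposition~9.3 of \cite{Kornell18} the one-dimensional atoms of $\qPow(`S)$ are vastly outnumbered by higher-dimensional ones, so the fixpoint produced by Knaster--Tarski need not be (the image of) a genuine quantum subset of $\X$. To repair this I would work not with $\qPow(\X)$ but directly with the suplattice of \emph{subprojections of the identity relation} $I_\X$ inside $\qRel(\X,\X)$, equivalently the lattice of quantum subsets of $\X$, which is a suplattice by the $\Sup$-enrichment of $\qRel$; one checks that ``inverse image along an injective function'' and ``orthocomplement of a subset'' are monotone (the latter order-reversing) operations on this lattice, forms the corresponding monotone endomap there, extracts a fixpoint quantum subset $\X_0$, and then verifies that $F$ restricted-and-corestricted to $\X_0\to\Y\setminus F[\text{(comp.)}]$ glued with $G^\dag$ on the complement is a well-defined bijection — the gluing being legitimate because the two pieces act on complementary quantum subsets of $\X$ and land in complementary quantum subsets of $\Y$ by construction of the fixpoint. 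Making precise the manipulation of complements and ranges of relations in $\qRel$, and checking that the glued relation is genuinely a function (i.e.\ satisfies $K^\dag\circ K=I$ and $K\circ K^\dag=I$), is where the real work lies.
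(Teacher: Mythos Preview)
The statement under review is a \emph{conjecture}, not a theorem: it appears in the paper's ``Future work'' section and the paper gives no proof. The only thing the paper says about it is the single sentence preceding it, namely that it is the operator-algebraic \emph{reformulation} of the quantum Cantor--Schr\"oder--Bernstein conjecture stated just before. So there is no ``paper's own proof'' to compare against.

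Your first paragraph --- reducing the von Neumann algebra statement to the quantum-set statement via the fully faithful functor $\ell^\infty\colon\qSet\to\mathbf{WStar}^\op$ --- is precisely what the paper means by ``reformulation''; on that point you are aligned with the paper. The one step you correctly flag as needing verification, that surjective normal unital $*$-homomorphisms between hereditarily atomic von Neumann algebras correspond exactly to injective functions in $\qSet$, is not spelled out in the paper either; it is implicit in the word ``reformulation''.

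Everything from your second paragraph onward goes strictly beyond the paper. The paper does \emph{not} attempt to prove the quantum Cantor--Schr\"oder--Bernstein conjecture; it only records the expected obstacle (``an atom of the quantum power set of a quantum set $\X$ does not directly correspond to a subset of $\X$''). Your proposed workaround --- abandoning $\qPow(\X)$ and instead running Knaster--Tarski on the ordinary complete lattice of quantum subsets of $\X$ (equivalently, subprojections of $I_\X$ in $\qRel(\X,\X)$) --- is a sensible strategy and sidesteps the flagged obstacle, since that lattice really is the lattice of subsets one wants. But note that this is then an application of the \emph{classical} Knaster--Tarski theorem to an ordinary suplattice, not of the quantum Knaster--Tarski theorem, so the paper's stated motivation (``The classical Knaster--Tarski Theorem implies the Cantor--Schr\"oder--Bernstein Theorem. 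We conjecture that a quantum version of Cantor--Schr\"oder--Bernstein can be derived from the quantum Knaster--Tarski Theorem'') would not be realized by your route. The genuinely open part remains the splicing: verifying that the glued relation on complementary quantum subsets is a bijection requires controlling how $F$ and $G^\dag$ interact with the decomposition, and neither you nor the paper carries this out.
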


Furthermore, based on the fact that any quantum suplattice is a quantum inflattice, we expect:
\begin{conjecture}$\qSup$ can be equipped with a monoidal product that makes it $*$-autonomous.
\end{conjecture}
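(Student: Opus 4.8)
The plan is to follow the classical proof that the category $\Sup$ of ordinary suplattices is $*$-autonomous, recalling that a symmetric monoidal closed category is $*$-autonomous as soon as it carries a \emph{dualizing object} $\bot$, i.e.\ an object for which the canonical comparison $\X\to[[\X,\bot]_\sqsubseteq,\bot]_\sqsubseteq$ is an isomorphism for every object $\X$. Thus the work splits into two tasks: first, to exhibit a symmetric monoidal closed structure on $\qSup$; and second, to produce a dualizing object together with the involutive duality witnessing $*$-autonomy. The second task is exactly where the result that every quantum suplattice is a quantum inflattice (Theorem~\ref{thm:quantum suplattice if quantum inflattice}) is indispensable.

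For the monoidal closed structure I would exploit that $\qSup$ is the Eilenberg--Moore category of the monad $\qDwn$, whose Kleisli category is (equivalent to) $\qRelPos$. Since $\qRelPos$ is compact closed (Theorem~\ref{thm:qRelPos is compact}) and its monoidal product restricts the cartesian product $\times$ along the free functor $(-)_\diamond$, this monoidal structure on the Kleisli category equips $\qDwn$ with the structure of a symmetric, indeed commutative, monad on $\qPOS$. By the theory of commutative monads, the Eilenberg--Moore category of a commutative monad on a symmetric monoidal closed base is again symmetric monoidal closed; concretely the tensor $\X\otimes\Y$ is the quantum suplattice representing functions out of $\X\times\Y$ that are homomorphisms of quantum suplattices in each variable separately, the internal hom is $\qSet(\X,\Y)$ with the pointwise order from Section~\ref{sec:enrichment}, and the monoidal unit is the free quantum suplattice $\qPow(\mathbf 1)=\qDwn(\mathbf 1,I_{\mathbf 1})\iso(\mathbf 2,\qo_{\mathbf 2})$ on the monoidal unit $\mathbf 1$.

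Next I would take $\bot:=\qPow(\mathbf 1)\iso(\mathbf 2,\qo_{\mathbf 2})$ as the candidate dualizing object, mirroring the classical choice $\bot=2$, and define the duality on objects by $(\X,\qo)^*:=(\X,\qop)$. This is well defined as an operation $\qSup\to\qSup$ precisely by Theorem~\ref{thm:quantum suplattice if quantum inflattice}: a quantum suplattice is a quantum inflattice, so its opposite is again a quantum suplattice. On morphisms, a homomorphism $K\colon\X\to\Y$ of quantum suplattices possesses an upper Galois adjoint $G\colon\Y\to\X$ by the characterization of homomorphisms as those monotone maps admitting an upper adjoint; viewing $G$ as a map $(\Y,\qop_\Y)\to(\X,\qop_\X)$ yields the contravariant dual $K^*$, with functoriality and involutivity following from uniqueness of Galois adjoints together with the fact that $(-)^\op$ is involutory on $\qPOS$ (Lemma~\ref{lem:functions and the opposite order}). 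The decisive computation is then to identify the internal-hom dual $[\X,\bot]_\sqsubseteq$ with $(\X,\qop)$ as quantum suplattices, the quantum analogue of $[L,2]\iso L^\op$; granting this, the double dual $[[\X,\bot]_\sqsubseteq,\bot]_\sqsubseteq$ is $(\X,\qop)^\op=(\X,\qo)$ by Lemma~\ref{lem:functions and the opposite order}, and one checks that the canonical comparison map is precisely this identification.

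The hard part will be twofold. First, establishing that $\qDwn$ is genuinely a commutative monad, rather than merely a monad whose Kleisli category happens to be monoidal: the coherence between the compact closed structure of $\qRelPos$ and the Eilenberg--Moore tensor must be checked at the level of the monad's strength, and noncommutativity makes the requisite interchange of suprema over atoms delicate in exactly the way that made Theorem~\ref{thm:quantum down sets form quantum suplattice} nontrivial. Second, the identification $[\X,\bot]_\sqsubseteq\iso(\X,\qop)$ cannot be read off formally, since the internal hom is governed by the \emph{supremum} structure while its description as $(\X,\qop)$ encodes the \emph{infimum} structure; bridging the two uses the full strength of Theorem~\ref{thm:quantum suplattice if quantum inflattice} and the order isomorphism $C_\X\colon\qPow(\X)\to\qPow(\X)^\op$ of Section~\ref{sec:quantum power set}, and it is here that I expect the proof to be substantially more involved than its classical counterpart. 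Verifying the linear-distributivity (equivalently, the tensor--hom--dual) coherence that upgrades the duality to a $*$-autonomous structure is then a diagram chase, but one whose commutativity must be traced through $\qRel$ rather than asserted pointwise.
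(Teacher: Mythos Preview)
The statement you are attempting to prove is a \emph{conjecture} in the paper, appearing in the Future Work section (Section~\ref{sec:future}); the paper offers no proof, only the motivating remark that it is expected ``based on the fact that any quantum suplattice is a quantum inflattice.'' There is therefore no paper proof to compare your attempt against. What you have written is not a proof but a research programme---and you say as much yourself, flagging two ``hard parts'' (commutativity of the monad $\qDwn$, and the identification of the internal-hom dual with the opposite quantum poset) as places where substantial new work is needed. Neither of these is resolved in your text, so the conjecture remains open after your proposal just as it does in the paper.

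Beyond the acknowledged gaps, there is a concrete slip that would need repair before the programme could even be executed. You write that ``the internal hom is $\qSet(\X,\Y)$ with the pointwise order from Section~\ref{sec:enrichment},'' but $\qSet(\X,\Y)$ is an ordinary hom-set, not a quantum set; an internal hom in $\qSup$ must itself be a quantum suplattice. The $\Sup$-enrichment of Section~\ref{sec:enrichment} gives the external hom the structure of an ordinary suplattice, which is a different (and weaker) statement. Relatedly, the notation $[\X,\bot]_\sqsubseteq$ denotes the internal hom in $\qPOS$, not in $\qSup$; the duality you need is with respect to the Eilenberg--Moore internal hom, which you have not yet constructed, so the target of the key identification $[\X,\bot]\cong(\X,\qop)$ is at present undefined. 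Sorting out which internal hom is meant at each step, and showing that the Eilenberg--Moore one exists and behaves as claimed, is part of the first ``hard part'' you identify and cannot be assumed away.
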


\vspace{-.15in}
\section*{Acknowledgements}  
We thank Andre Kornell for his advice and for inspiring us to further develop the theory of quantum sets. Furthermore, we thank Isar Stubbe for helping us to understand the connection between quantum sets and quantaloids better. Furthermore, we thank the reviewers for their comments. This research is supported by grants VEGA 2/0142/20 and 1/0036/23 and by the Slovak Research and Development Agency under the contracts APVV-18-0052 and APVV-20-0069.


\bibliographystyle{eptcs}
\bibliography{qCPOrefs}

\end{document}